\DeclareMathAlphabet{\mathdutchcal}{U}{dutchcal}{m}{n}
\definecolor{Darkblue}{rgb}{0,0,0.4}
\definecolor{Brown}{cmyk}{0,0.61,1.,0.60}
\definecolor{Purple}{cmyk}{0.45,0.86,0,0}
\definecolor{Darkgreen}{rgb}{0.133,0.543,0.133}
\newif\ifdraft 
\newtheorem{theorem}{Theorem}
\newtheorem{lemma}{Lemma}
\newtheorem{definition}{Definition}
\newtheorem{claim}{Claim}
\newtheorem{observation}{Observation}
\newtheorem{corollary}{Corollary}
\newtheorem{remark}{Remark}
\newtheorem*{goal}{Research Goal}
\let\int\undefined
\newcommand{\poly}{\mathrm{poly}}
\newcommand{\int}{\mathsf{Int}}
\newcommand{\bunch}{\mathsf{Bunch}}
\newcommand{\defi}{\stackrel{\text{\tiny{def.}}}{=}}
\newcommand{\lp}{\mathcal{LP}}
\newcommand{\wlp}{\widehat{\mathcal{LP}}}
\newcommand{\vlp}{\overrightarrow{\mathcal{LP}}}
\newcommand{\lvec}[1]{\overrightarrow{#1}}
\newcommand{\lspa}{\overrightarrow{\mathcal{SP}}}
\newcommand{\patt}{\mathbf{p}}
\newcommand{\ecc}{\mathsf{ecc}}
\newcommand{\tuple}{\mathsf{Tuple}}
\newcommand{\reach}{\mathsf{reach}}
 \newcommand{\eps}{\epsilon}
\DeclareMathOperator*{\argmax}{arg\,max}
\def\eps{\epsilon}
\newcommand{\initOneLiners}{%
	\setlength{\itemsep}{0pt}
	\setlength{\parsep }{0pt}
	\setlength{\topsep }{0pt}
}
\title{VC Set Systems in Minor-free (Di)Graphs and Applications}
\author{Hung Le}
\affil{University of Massachusetts at Amherst}
\author{Christian Wulff-Nilsen}
\affil{University of Copenhagen}
\date{}
\begin{document}
\maketitle
\begin{abstract}

A recent line of work on VC set systems in minor-free (undirected) graphs, starting from Li and Parter~\cite{LP19}, who constructed a new VC set system for planar graphs, has given surprising algorithmic results~\cite{LP19,Le22,DHV20,FMW20}. In this work, we initialize a more systematic study of VC set systems for minor-free graphs and their applications in both undirected graphs and directed graphs (a.k.a \emph{digraphs}).  More precisely:

\begin{enumerate}
	\item We propose a new variant of the Li-Parter set system for \emph{undirected} graphs. Our set system settles two weaknesses of the Li-Parter set system: the terminals can be anywhere, and the graph can be $K_h$-minor-free for any fixed $h$. We obtain several algorithmic applications, notably: (i) the first exact distance oracle for unweighted and undirected $K_h$-minor-free graphs that has truly subquadratic space and constant query time, and (ii) the first truly subquadratic time algorithm for computing Wiener index of $K_h$-minor-free graphs, resolving an open problem posed by Ducoffe, Habib, and Viennot~\cite{DHV20}.
	
	\item We extend our set system to   $K_h$-minor-free \emph{digraphs} and show that its VC dimension is $O(h^2)$.	We use this result to design the first subquadratic time algorithm for computing (unweighted) diameter and all-vertices eccentricities in   $K_h$-minor-free digraphs.
	
	\item   We show that the system of \emph{directed} balls in minor-free digraphs has VC dimension at most $h-1$. We then present a new technique to exploit the VC system of balls, giving the first exact distance oracle for unweighted minor-free digraphs that has truly subquadratic space and logarithmic query time.
	
	\item On the negative side, we show that VC set system constructed from shortest path trees of planar digraphs does not have a bounded VC dimension. This leaves an intriguing open problem: determine a necessary and sufficient condition for a set system derived from a minor-free graph to have a bounded VC dimension. 
\end{enumerate}

The highlight of our work is the results for digraphs, as we are not aware of known algorithmic work on constructing and exploiting VC set systems for digraphs.


\end{abstract}
\pagebreak
{\small \setcounter{tocdepth}{2} \tableofcontents}
\newpage
\pagenumbering{arabic}

\section{Introduction}

A pair of seminar papers by Lipton and Tarjan~\cite{LT79,LT80} in the 70s initiated a productive line of research on planar graph algorithms. Over the past several decades, numerous algorithmic tools have been developed for planar graphs. We can roughly classify them into two classes: one for coping with NP-hard problems and another for designing fast algorithms for problems in P\footnote{We are referring to the optimization versions of decision problems in P and NP.}. The former class aims to provide (efficient) polynomial time approximation schemes or subexponential time (parameterized or exact) algorithms for NP-hard problems.   Representative examples are  Baker's layering technique~\cite{Baker94}, contraction decomposition~\cite{Klein05}, bidimensionality~\cite{DFHT5,DH05}, and sphere cut decomposition~\cite{DPBF9}, to name a few. The latter class aims to design (nearly) linear time, in many cases truly subquadratic time, algorithms for problems in P where no algorithms of the same running time were known for general graphs. A non-exhaustive list of examples includes the separator theorem~\cite{LT79,LT80} and $r$-division~\cite{Federickson87}, shortest path separator~\cite{LT79,Thorup04}, multiple-source shortest paths~\cite{Klein05B}, Voronoi diagram~\cite{Cabello18}, and VC-dimension~\cite{LP19}.  (The classification into two classes is not exclusive: there are techniques that can be used for both purposes.)

On the other hand, planarity is fragile: adding a single edge or vertex could make a planar graph become non-planar. Therefore, a major research goal is to extend the aforementioned algorithmic tools beyond planar graphs, specifically graphs that are more robust, such as bounded genus graphs and $K_h$-minor-free graphs. Bounded genus graphs are robust to edge addition---adding a new edge increases the genus by at most 1---but not to vertex addition as adding a single vertex could increase the genus by $\Omega(n)$. $K_h$-minor-free graphs are robust to both edge and vertex additions. Also, the class of $K_h$-minor-free graphs is vastly broader than the classes of planar and bounded genus graphs. 

Most algorithmic results mentioned above for planar graphs can be generalized to bounded genus graphs~\cite{Eppstein03,DHT04,CC07,DMM10} using now-standard topological tools. For minor-free graphs, the 20-year graph minor project by Robertson and Seymour provides a deep understanding of their structures~\cite{RS83,RS04}. The  Robertson-Seymour decomposition~\cite{RS03} has been used successfully to transfer almost all algorithmic tools in the first class (for coping with NP-hard problems) from planar graphs to $K_h$-minor-free graphs. However, the best-known algorithm for constructing the Robertson-Seymour decomposition has quadratic time~\cite{KKR12}, despite prolonged efforts to simplify the proofs of Robertson and Seymour~\cite{KTW18,KTW20}. The quadratic time makes the Robertson-Seymour decomposition inapplicable to transfer results from the second class to minor-free graphs. Furthermore, the dependency on the minor size $h$ is impractically huge even for a very small value of $h$. 
  As a result, there have been far fewer algorithmic tools for designing truly subquadratic time algorithms in $K_h$-minor-free graphs. Most focus has been on finding separators, and hence $r$-divisions, in minor-free graphs in truly subquadratic  time~\cite{WR09,KR10,WulffNilsen11,WulffNilsen14}. This deficiency motivates our work.
  
\begin{tcolorbox}[colback=white,bottom=0.2mm] 
	\begin{goal}\label{goal:main}	 \hypertarget{goal}{Enriching} the algorithmic toolkit for designing truly subquadratic time algorithms in $K_h$-minor-free graphs.
	\end{goal}
\end{tcolorbox}

Towards realizing our goal, we propose a systematic study of VC set systems (see \Cref{subsec:vc-def} for definitions) and their applications in designing truly subquadratic time algorithms. Our work was directly inspired by two recent results; both led to several surprising algorithmic applications.

The first is by Li and Parter~\cite{LP19}, who constructed a VC set system from a set of terminals lying on the outer face of a \emph{planar graph}. However, it remains unclear how to extend their results to $K_h$-minor-free graphs since the notion of the outer face is not well-defined, and their proof makes heavy use of planarity. The second is by Ducoffe, Habib, and Viennot~\cite{DHV20}, who designed the first truly subquadratic time algorithms for diameter and related problems in $K_h$-minor-free graphs via the VC set system of balls studied by Chepoi, Estellon, and Vaxes~\cite{CEV07}. However, the set system of balls is very difficult to work with algorithmically; this difficulty also manifests in the construction of Ducoffe, Habib, and Viennot~\cite{DHV20}, resulting in complicated algorithms. Consequently, the running time of their algorithms degrades exponentially in the size of the minor. 

 We remark that both results~\cite{LP19,DHV20} only apply to \emph{undirected graphs}, while our results extend to \emph{directed graphs} as well, which are often much harder to work with. Indeed, we are not aware of any VC set system for directed graphs, let alone using them in algorithmic applications. The pioneering work of Chepoi, Estellon, and Vaxes~\cite{CEV07} for planar graphs and of Kranakis et al.~\cite{KKRUW97} for general graphs do not consider directed graphs.

\subsection{VC Set Systems and Dimension}\label{subsec:vc-def}

A \emph{set system} is a pair $(U,\mathcal{F})$ where $U$ is a ground set and $\mathcal{F}$  is a collection of subsets of $U$; we only write $\mathcal{F}$ when the ground set is clear from the context. We say that $Y\subseteq U$ is \emph{shattered} by $\mathcal{F}$ if $\{Y\cap S: S \in \mathcal{F}\} = 2^{Y}$. That is, the intersections of $Y$ and the sets in $\mathcal{F}$ contain every subset of $Y$.   The \emph{VC-dimension} of a set system $(U,\mathcal{F})$ is the size of the largest subset $Y\subseteq U$ shattered by $\mathcal{F}$. The notion of VC-dimension was introduced by Vapnik and Chervonenkis~\cite{VC71}. We say that   $(U,\mathcal{F})$  is a \emph{VC set system} if its VC-dimension is bounded by a fixed constant.

Let $G$ be an edge-weighted and \emph{undirected} graphs. For a vertex $v\in V$ and a non-negative real number $r$, denote by $B(v,r) = \{u: d_G(u,v)\leq r\}$ a ball of radius $r$ centered at $v$.  Let $\mathcal{B}(G) = \{B(v,r): v\in V, r\in \mathbb{R}^+\}$ be the set of balls of all radii in $G$.  Chepoi, Estellon and Vaxes~\cite{CEV07} showed that $\mathcal{B}(G)$ has VC-dimension at most $4$ if $G$ is planar and remarked that the same proof should extend to any $K_h$-minor-free graphs; the proof then was given in detail by Bousquet and Thomass{\'{e}}\cite{BT15}.

\begin{theorem}[Chepoi, Estellon, and Vaxes~\cite{CEV07}]\label{thm:ball-vc} If $G$ is undirected and $K_h$-minor-free, then $(V,\mathcal{B}(G))$ has VC-dimension at most $h-1$. 
\end{theorem}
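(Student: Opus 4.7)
The plan is a proof by contradiction: assume some $Y = \{v_1, \ldots, v_h\} \subseteq V(G)$ of size $h$ is shattered by $\mathcal{B}(G)$, and from the shattering witnesses construct a $K_h$-minor in $G$, contradicting $K_h$-minor-freeness.

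First I would translate shattering into a rich family of paths between the $v_i$'s. For each pair $\{i,j\}$ the shattering hypothesis supplies a ball $B_{ij} = B(c_{ij}, r_{ij})$ with $B_{ij} \cap Y = \{v_i, v_j\}$. Concatenating a shortest $c_{ij}$-to-$v_i$ path with a shortest $c_{ij}$-to-$v_j$ path produces a $v_i$-to-$v_j$ walk lying entirely inside $B_{ij}$, since every vertex of either shortest path is within distance $r_{ij}$ of $c_{ij}$. Extracting a simple subpath yields $P_{ij}$, a $v_i$-to-$v_j$ path whose internal vertices avoid $Y$ (because $v_k \notin B_{ij}$ for $k \neq i, j$). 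This yields a $\binom{h}{2}$-sized family of paths, each internally disjoint from $Y$, connecting every pair of terminals in $Y$.

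Next I would upgrade this family to vertex-disjoint connected branch sets $V_1, \ldots, V_h$ with $v_i \in V_i$ and an edge of $G$ between $V_i$ and $V_j$ for every $i \neq j$. The naive attempt of cutting each $P_{ij}$ in half and adding the halves to $V_i$ and $V_j$ can fail, since distinct paths $P_{ij}$ and $P_{i'j'}$ may share internal vertices. The remedy is a path-surgery argument: whenever two paths meet at an internal vertex $x$, swap their tails at $x$ so that endpoints still match up; choosing the swap to decrease a suitable monovariant (for instance the total length of the path family, or a lexicographic length vector) ensures termination. After surgery, one can consistently assign each non-$Y$ vertex to a single branch set — for example by running a simultaneous BFS from all $v_i$'s along the surviving paths and awarding each internal vertex to the first $v_i$ that reaches it — giving connected, pairwise disjoint branch sets that remain pairwise linked through (possibly truncated) $P_{ij}$'s.

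The main obstacle is precisely this surgery-and-assignment step: one must verify that iterated swaps terminate, and that enough of each $P_{ij}$ survives to still provide a genuine $V_i$-$V_j$ edge in $G$. Once that combinatorial delicacy is dispatched, contracting each branch set to a single vertex exhibits $K_h$ as a minor of $G$, contradicting $K_h$-minor-freeness and yielding the claimed bound that the VC-dimension of $(V, \mathcal{B}(G))$ is at most $h-1$.
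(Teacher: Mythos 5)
Your high-level frame (assume a shattered $h$-set, build a $K_h$-minor from the shattering witnesses) is exactly the minor-building strategy behind this theorem, but there is a genuine gap at the very step you defer to ``surgery-and-assignment,'' and it cannot be patched in the form you describe. First, the information you carry into that step is too weak. Once you only remember that every pair $v_i,v_j$ is joined by a path $P_{ij}$ whose internal vertices avoid $Y$, no $K_h$-minor follows: in a star with center $c$ and leaves $v_1,\dots,v_h$, every pair of leaves is joined by the path $v_i\,c\,v_j$, which internally avoids all other leaves, yet the graph is a tree. So any argument that proceeds only from your path family must fail; the metric consequences of shattering ($d_G(c_{ij},v_i)\leq r_{ij}$, $d_G(c_{ij},v_j)\leq r_{ij}$, but $d_G(c_{ij},v_k)> r_{ij}$ for all $k\neq i,j$) must be invoked again in the disjointness step, not only to get internal avoidance of $Y$. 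Second, the surgery itself does not do what you claim: if $P_{ij}$ and $P_{k\ell}$ with $\{i,j\}\cap\{k,\ell\}=\emptyset$ meet at an internal vertex $x$ and you swap tails at $x$, the new paths join $v_i$ to $v_\ell$ and $v_k$ to $v_j$, so the endpoint pairing is destroyed; moreover rerouted paths need not stay inside their balls, so you also lose the very property that kept them clear of the other terminals. A monovariant guarantees termination, not correctness of a move that breaks the invariants.

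What closes the gap --- and what the proofs of this theorem by Chepoi--Estellon--Vaxes and Bousquet--Thomass\'e do, mirrored by this paper's own minor-building proofs for $\wlp_{G,M}$ and for directed balls --- is to choose the witnesses extremally up front: for each pair take a ball $B(t_{ij},r_{ij})$ with $B(t_{ij},r_{ij})\cap Y=\{v_i,v_j\}$ and $r_{ij}$ minimum (Chepoi et al.\ instead minimize the distance sum), and then prove that no rerouting is ever needed. Concretely, if $\pi(t_{ij},v_i)$ and $\pi(t_{k\ell},v_k)$ shared a vertex $x$ for disjoint index pairs, then with $a_1=d_G(t_{ij},x)$, $a_2=d_G(x,v_i)$, $a_3=d_G(t_{k\ell},x)$, $a_4=d_G(x,v_k)$ one has $a_1+a_2\leq r_{ij}$ and $a_3+a_4\leq r_{k\ell}$, while $v_k\notin B(t_{ij},r_{ij})$ and $v_i\notin B(t_{k\ell},r_{k\ell})$ force $a_1+a_4>r_{ij}$ and $a_2+a_3>r_{k\ell}$; summing gives a contradiction, so such paths are automatically disjoint. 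Minimality of $r_{ij}$ similarly makes the two legs $\pi(t_{ij},v_i)$ and $\pi(t_{ij},v_j)$ internally disjoint and controls the only remaining intersections (paths sharing an index), so the sets $\bigcup_{j\neq i}\pi(t_{ij},v_i)$ are pairwise disjoint except for single, consistently assignable contact vertices $t_{ij}$, and contracting them yields the $K_h$-minor. Until your surgery step is replaced by (or reduced to) an argument of this distance-inequality kind, the proof is incomplete.
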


\Cref{thm:ball-vc} had been used exclusively in graph theory and combinatorics~\cite{CEV07,BC14,BT15} until very recently, Ducoffe, Habib, and Viennot~\cite{DHV20} exploited this result algorithmically. Specifically, they  designed the first algorithm for computing the exact diameter and its variants, of minor-free graphs in truly subquadratic time. They relied on a deep result of Haussler and Welzl~\cite{HW87}, who showed that any VC set system admits a spanning path with sublinear \emph{stabbing number}. They skillfully combined  the low-stabbing spanning path technique with the $r$-division technique, a standard tool in designing algorithms in minor-free graphs on which almost all truly subquadratic time algorithms rely. Indeed, they had to work very hard to fit both techniques together (Lemma 5.2 in \cite{DHV20}). However, there remain two undesirable aspects of their algorithm. 

First, it is difficult to adapt their algorithms to other problems. One specific problem is computing the Wiener index, i.e., the sum of all-pairs distances. They wrote, ``we currently do not see any way to extend our approach [...] to also compute their Wiener index  in truly subquadratic time." The Wiener index problem was rooted in chemistry~\cite{Wiener47} and has been studied extensively, e.g. see~\cite{CK97,CK09,WulffNilsen2009,Cabello18,GKMSW21}. As we will see in \Cref{subsec:Wiener}, the Wiener index problem can be readily handled by our technique.  Second, the final running time  degrades exponentially in $h$:  $O(n^{2-\varepsilon_h})$ where $\varepsilon_h = 2^{-O(h)}$. (The precise value of $\varepsilon_h$ is not given in~\cite{DHV20}.)

In a completely different context, motivated by the diameter problem in the distributed CONGEST model, Li and Parter~\cite{LP19} set up a different VC set system from a fixed set of terminals $S$. In their paper, they only studied a special case where $S$ contains vertices  \emph{on the outer face of a planar graph}, though the definition applies to any $S$.

\begin{definition}[Li-Parter~\cite{LP19}]\label{def:LP-setsystem} Let $M \subseteq \mathbb{R}$ be a set of real numbers. Let $S= \langle s_0,\ldots,s_{k-1}\rangle $ be a sequence of $k$ vertices  in an undirected and edge-weighted graph $G$. For every $v\in V$, define:
	\begin{equation}\label{eq:distance-index}
		X_v = \{(i,\Delta): 1\leq i \leq k-1, \Delta \in M, d_G(v,s_{i})-d_G(v,s_{i-1})\leq \Delta\} 
	\end{equation} 
	Let $\lp_{G,M}(S) = \{X_v: v\in V\}$ be a collection of subsets of the ground set $[k-1]\times M$. 
\end{definition}

The complicated-looking set $X_v$ intuitively encodes the (approximate) distance from $v$ to each vertex in $S$: the pair $(i,\Delta)\in X_v$ indicates that  $d_G(v,s_{i}) \leq d_G(v,s_{i-1}) + \Delta$. Thus, given $d_G(v,s_0)$ and all the pairs $(i,\Delta)$ in $X_v$, we can iteratively recover an upper bound on $d_G(v,s_i)$ for any $i\in [2,k]$.  Depending on the choice of $M$, we might recover the exact or approximate distance $d_G(v,s_{i})$. Li and Parter showed that $\lp_{G,M}(S)$ is a VC set system for a special setting of $G$ and $S$ (Theorem 3.7 in~\cite{LP19}): 

\begin{theorem}[Li-Parter~\cite{LP19}]\label{thm:LP19} Let $G$ be an edge-weighted, undirected, planar graph. Let $S$ be a set  of $k$ vertices  ordered clockwise on the outer face of $G$. For any $M\subseteq \mathbb{R}$, $([k-1]\times M,\lp_{G,M}(S))$ has VC-dimension at most $3$. 
\end{theorem}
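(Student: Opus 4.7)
I would argue by contradiction, assuming some $Y=\{(i_j,\Delta_j):1\le j\le 4\}$ is shattered by $\lp_{G,M}(S)$ and forcing an impossible planar configuration. A first, purely combinatorial reduction is monotonicity: for any vertex $v$ and index $i$, the predicate ``$d_G(v,s_i)-d_G(v,s_{i-1})\le \Delta$'' is nondecreasing in $\Delta$, so if two elements of $Y$ share the same $i$-coordinate with different thresholds $\Delta<\Delta'$, no vertex can realize the pattern containing the $\Delta$-pair but not the $\Delta'$-pair. Hence the four $i$-values are distinct, and I may relabel them $i_1<i_2<i_3<i_4$. Writing $g_v(j):=d_G(v,s_{i_j})-d_G(v,s_{i_j-1})$, shattering produces, for each $T\subseteq\{1,2,3,4\}$, a witness vertex $v_T$ satisfying $g_{v_T}(j)\le \Delta_j$ iff $j\in T$.

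The heart of the proof is a planar crossing argument. Since $s_0,\dots,s_{k-1}$ lie clockwise on the outer face, the four pairs $(s_{i_j-1},s_{i_j})$ occupy four disjoint arcs in cyclic order. I would focus on the two witnesses realizing the alternating patterns $T=\{1,3\}$ and $T'=\{2,4\}$, supplemented by the witnesses of a pair of ``single-arc'' patterns. For each witness, fix a shortest-path tree rooted at that vertex, giving shortest paths from the witness to all $s_{i_j-1}$ and $s_{i_j}$. I would then prove a Jordan-curve style lemma in the planar embedding: because the terminals appear in a fixed cyclic order on the outer face, a shortest $v_T$-to-$s_{i_2}$ path and a shortest $v_{T'}$-to-$s_{i_1}$ (or $s_{i_3}$) path cannot be embedded without crossing, so they share a common vertex $w$.

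From such a meeting vertex $w$, I would swap subpaths to construct alternative $v_T$-to-$s_{i_j}$ (respectively $v_{T'}$-to-$s_{i_j}$) walks and apply the triangle inequality along shortest paths through $w$. The alternating nature of the two patterns ensures that one of the $g$-inequalities at some index $j$ appears with both signs: the rerouted upper bound on $d_G(v_T,s_{i_j})-d_G(v_T,s_{i_j-1})$ (or the symmetric quantity for $v_{T'}$) contradicts the membership sign assigned by $T$. This contradicts the assumed shattering and gives the bound $3$ on the VC dimension.

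The main obstacle will be pinning down the crossing lemma in the exact form needed, namely specifying which shortest paths (from which witnesses to which terminals) are forced to meet, and then choosing the swap at $w$ so that the resulting triangle-inequality comparison isolates a single $g_{v_T}(j)$ or $g_{v_{T'}}(j)$ whose sign is flipped. Once that is settled, the rest is bookkeeping: monotonicity makes the $i$-indices distinct, the cyclic order of $S$ on the outer face drives the crossing, and the arithmetic derivation from the meeting vertex closes the contradiction.
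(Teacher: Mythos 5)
First, a point of context: this theorem is quoted from Li--Parter and the present paper does not reprove it; the only comparison available is against the technique the paper describes (an exhaustive analysis of crossing patterns of shortest paths between terminals on the outer face) and against the paper's explanation, in \Cref{rm:LP}, of why the simpler exchange-type argument does not apply to $\lp_{G,M}(S)$. Your monotonicity step (distinct $i$-indices) is fine, but note that the four pairs $(s_{i_j-1},s_{i_j})$ need not occupy disjoint arcs: if $i_{j+1}=i_j+1$ then $s_{i_{j+1}-1}=s_{i_j}$, so adjacent pairs can share a terminal, and any crossing lemma has to tolerate this. The crossing lemma itself also needs more structure than you state: $v_T$ and $v_{T'}$ are interior vertices, so a single path from $v_T$ to $s_{i_2}$ and a single path from $v_{T'}$ to $s_{i_1}$ are not forced to meet; one has to use the union of two shortest paths from one witness to two non-adjacent terminals as a separating curve and locate the other witness relative to it.

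The genuine gap, however, is the step you defer as ``bookkeeping'': converting a meeting vertex $w$ into a sign contradiction. Each membership constraint in this set system has its \emph{own} reference vertex: the condition for index $i_j$ compares $d_G(v,s_{i_j})$ against $d_G(v,s_{i_j-1})$. A crossing vertex $w$ on shortest paths $\pi(v_T,s_{i_j})$ and $\pi(v_{T'},s_{i_\ell})$ lets you reroute and bound quantities like $d_G(v_T,w)+d_G(w,s_{i_\ell})$, but the non-membership inequality you must contradict involves $d_G(v_T,s_{i_\ell-1})$, about which the crossing gives no information; summing the ``in'' and ``not in'' inequalities leaves mismatched reference terms that do not cancel. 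This is precisely the obstruction recorded in \Cref{rm:LP}: running the exchange argument of \Cref{thm:LW-undirected} on $\lp_{G,M}(S)$ replaces the matched terms $d_G(v_{1j},s_0)+d_G(v_{2\ell},s_0)$ by mismatched ones, and no contradiction follows --- which is exactly why the present paper introduces the common-reference variant $\wlp_{G,M}(S)$, and why Li--Parter's actual proof is a lengthy case analysis of crossing patterns rather than a single crossing-plus-reroute argument. So your plan names the right ingredients (monotonicity, planarity, cyclic order of $S$), but the step where the alternating patterns are supposed to flip a single $g$-inequality is the heart of the theorem and, in the naive reroute form you describe, it provably does not close.
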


As $\lp_{G,M}(S)$ is capable of encoding the graph distances directly into the set system, it is much easier to use than $\mathcal{B}(G)$ in algorithm design. Specifically, it was instrumental in solving several problems in planar graphs: metric compression and distributed approximate diameter computation~\cite{LP19}, exact distance oracles~\cite{FMW20}, and approximate distance oracles~\cite{Le22}, despite the restriction on $S$ and $G$. A natural open problem is: can we remove the restriction on $S$ and $G$?

\subsection{Our Results and Techniques}

We propose several set systems in $K_h$-minor-free graphs: variants of $\lp_{G,M}(S))$ in both undirected graphs and digraphs, the set system of balls for digraphs, and a set system induced by shortest paths in digraphs. (We refer to directed graphs as \emph{digraphs}.)  We obtain both negative and positive results for these systems. We hope for a ``unified'' view of existing VC set systems to reconcile their differences and guide the development of new ones. Two VC set systems $\mathcal{B}_G$ and $\lp_{G,M}(S)$ differ in three aspects: (i) the ease of application, (ii) the scope of application \textemdash one for minor-free while the other for planar graphs \textemdash and (iii) the proof techniques. In terms of proof techniques,   Chepoi, Estellon, and Vaxes~\cite{CEV07} construct a $K_5$-minor directly assuming (for contradiction) that there is a large set of vertices shattered by $\mathcal{B}_G$ in a planar graph $G$; the end result is an elegant proof that can be easily extended to $K_h$-minor-free graphs (as done by Bousquet and Thomass{\'{e}}~\cite{BT15})).  We call this proof technique \emph{minor-building proof}. The proof of Li and Parter exhaustively considers different crossing patterns of paths between the terminals and hence heavily relies on the assumption that $G$ is planar and $S$ on the outer face to make the number of crossing patterns manageable.  

The proofs of our positive results in this work are minor-building, though each VC set system needs its own twist in the proof. Our proofs inherit the simplicity and elegance of the minor-building technique, and are applicable to both undirected graphs and digraphs, as described in \Cref{subsec:Vc-dim-LPhat}, \Cref{subsec:VC-dim-vecLP} and \Cref{subsec:VC-dim-diballs}. The minor-building proof technique is also instructive in developing new set systems. Indeed, in an (unsuccessful) attempt to reprove the result by Li and Parter (\Cref{thm:LP19}) using the minor-building technique, we came up with a  VC set system slightly different from  $\lp_{G,M}(S))$, which retains all the aforementioned strengths of $\lp_{G,M}(S))$ while addressing its two weaknesses: $G$ can be any $K_h$-minor-free graph, and $S$ could be anywhere in the graph. 

\begin{definition}\label{def:LP-variant} Let $M \subseteq \mathbb{R}, G,$ and $S$ as in \Cref{def:LP-setsystem}. For every $v\in V$, define:
	\begin{equation}\label{eq:X-widehat}
		\widehat{X}_v = \{(i,\Delta): 1\leq i \leq k-1, \Delta \in M, d_G(v,s_{i})-\mbox{\hl{$d_G(v,s_{0})$}}\leq \Delta \} 
	\end{equation}
	Let $\wlp_{G,M}(S) = \{\widehat{X}_v: v\in V\}$ be a collection of subsets of the ground set $[k-1]\times M$. 
\end{definition} 

$\widehat{X}_v$ differs $X_v$ (Equation~\eqref{eq:distance-index}) in the highlighted term: it uses $d_G(v,s_{i})- d_G(v,s_{0})$ instead of $d_G(v,s_{i})- d_G(v,s_{i-1})$. The difference, while superficially small, is technically important for the minor-building proof technique; see \Cref{rm:LP} for a more formal discussion of why the minor-building proof technique fails for the set system $\lp_{G,M}(S)$. This leads to our first main result:

\begin{restatable}{theorem}{LWUndir}\label{thm:LW-undirected}Let $S$ be any set of vertices on an edge weighted, undirected $K_h$-minor-free graph $G$. Let $M\subseteq \mathbb{R}$ be any set of real numbers. Then $\wlp_{G,M}(S)$ has VC-dimension at most $h-1$.
\end{restatable}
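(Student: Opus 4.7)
The plan is to argue by contradiction, adapting the minor-building template that Chepoi--Estellon--Vaxes and Bousquet--Thomass\'e use for \Cref{thm:ball-vc}. Suppose $Y = \{(i_1,\Delta_1),\ldots,(i_h,\Delta_h)\}\subseteq[k-1]\times M$ of size $h$ is shattered by $\wlp_{G,M}(S)$. For every $T\subseteq [h]$ fix a witness $v_T\in V$ with $\widehat{X}_{v_T}\cap Y=\{(i_j,\Delta_j): j\in T\}$; by \Cref{def:LP-variant} this is equivalent to
\[
    d_G(v_T,s_{i_j}) \le d_G(v_T,s_0)+\Delta_j \ \text{ for } j\in T, \qquad d_G(v_T,s_{i_j}) > d_G(v_T,s_0)+\Delta_j \ \text{ for } j\notin T.
\]
From these inequalities the goal is to carve out a $K_h$-minor in $G$, which contradicts the $K_h$-minor-free hypothesis.

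The key conceptual step is to reread $(i_j,\Delta_j)\in\widehat{X}_v$ as the statement $s_{i_j}\in B(v,\,d_G(v,s_0)+\Delta_j)$, so that each witness $v$ comes equipped with a family of ``shifted balls'' centered at $v$. The feature of the $\widehat{X}$ variant that makes the minor-building argument go through here, and that provably fails for the original $\lp_{G,M}(S)$ (as \Cref{rm:LP} will explain), is that the radius offset $d_G(v,s_0)$ is \emph{common} across all indices $j$: the fixed vertex $s_0$ serves as a uniform pivot, so that (when $\Delta_j\ge 0$) every shifted ball at $v$ contains an entire shortest $v$-to-$s_0$ path. This uniform pivot is absent from the consecutive-difference system $X_v$, and it is exactly what will let us move consistently between different witnesses $v_T$.

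From here I would construct the $K_h$-minor with $h$ branch sets $B_1,\ldots,B_h$ seeded by the terminals $s_{i_1},\ldots,s_{i_h}$. For each pair $j\ne k$, use the witness $v_{\{j,k\}}$ together with its shortest paths to $s_{i_j}$ and $s_{i_k}$; these stay inside the shifted balls of indices $j$ and $k$ at $v_{\{j,k\}}$ since every interior vertex lies at distance at most the corresponding radius from $v_{\{j,k\}}$, and splitting them at $v_{\{j,k\}}$ contributes an edge of the eventual minor between $B_j$ and $B_k$. The hard part will be disjointness and avoidance of the other seeds: the exclusion inequality for $m\notin\{j,k\}$ only forbids $s_{i_m}$ from the shifted ball of \emph{index $m$} at $v_{\{j,k\}}$, not from the possibly larger shifted balls of indices $j$ or $k$, so the chosen paths may a priori collide at other seeds or at interior vertices shared by different pairs. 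Handling this will require an apportionment and rerouting argument, most plausibly exploiting the uniform $s_0$-pivot by invoking witnesses $v_T$ for other subsets $T$ to divert collisions through $s_0$. Once this bookkeeping is settled, the $h$ connected, pairwise disjoint branch sets together with the produced inter-branch edges form the required $K_h$-minor and deliver the contradiction.
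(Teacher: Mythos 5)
Your high-level plan coincides with the paper's: assume a shattered set of size $h$, use only the pairwise witnesses, seed $h$ branch sets at the terminals $s_{i_1},\ldots,s_{i_h}$, and let each pair's witness supply the connection, with the common pivot $d_G(v,s_0)$ being the feature that makes the sum-of-inequalities argument cancel (this is exactly the point of \Cref{rm:LP}). However, the part you defer as ``apportionment and rerouting'' is not bookkeeping; it is the entire technical content of the theorem, and the fix you sketch is not one that works. Two concrete problems. First, even within a single pair, splitting the two shortest paths at the witness $v_{\{j,k\}}$ does not give disjoint contributions to $B_j$ and $B_k$: the shortest paths from $v_{\{j,k\}}$ to $s_{i_j}$ and to $s_{i_k}$ may share a long common prefix, so both branch sets would contain the ``stem'' of the Y. The paper resolves this by replacing $v_{ij}$ with a carefully chosen meeting vertex $t_{ij}$ on $\pi(v_{ij},s_i)\cup\pi(v_{ij},s_j)$ (\Cref{def:tij}): it must satisfy the two distance budgets $d_G(v_{ij},t_{ij})+d_G(t_{ij},s_{i})\leq \Delta_i+d_G(v_{ij},s_0)$ (and similarly for $j$), and among such vertices minimize $d_{\widehat G}(t_{ij},s_i)+d_{\widehat G}(t_{ij},s_j)$ in a perturbed graph $\widehat G$ with unique shortest paths (Isolation Lemma). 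Only the arms $\pi(s_i,t_{ij})$ and $\pi(s_j,t_{ij})$ enter the minor, and \Cref{obs:disjoint} shows they are internally disjoint; the witness $v_{ij}$ itself never appears in a branch set.

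Second, cross-pair disjointness is not obtained by rerouting collisions through $s_0$ or by invoking witnesses $v_T$ for larger subsets $T$ (the paper never uses any witness other than the pairwise ones, and nothing is routed through $s_0$). Instead, one forms the bunches $\bunch(s_i)=\cup_{j\neq i}\pi(s_i,t_{ij},\widehat G)$ and proves \Cref{lm:bunch-disjiont2}: two bunches meet only in $t_{ab}$, which is an endpoint of at least one of them. The proof is a four-distance counting argument at a hypothetical crossing point $x$: the budget constraints through $t_{1j}$ and $t_{2\ell}$ give an upper bound on $a_1+a_2+a_3+a_4+b_1+b_2$ in which the terms $d_G(v_{1j},s_0)+d_G(v_{2\ell},s_0)$ appear, while the exclusion inequalities $d_G(v_{1j},s_2)>\Delta_2+d_G(v_{1j},s_0)$ and $d_G(v_{2\ell},s_1)>\Delta_1+d_G(v_{2\ell},s_0)$ give a strictly larger lower bound on the same quantity with the same $s_0$-terms, yielding the contradiction; the uniqueness of shortest paths in $\widehat G$ is then needed to force the endpoint structure that makes the final contraction produce $K_h$. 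Your observation that the exclusion inequality only controls the ball of index $m$ correctly identifies the difficulty, but your proposed escape (that shifted balls contain a $v$-to-$s_0$ shortest path when $\Delta_j\geq 0$, and that one can divert through $s_0$) has no mechanism behind it — in particular $\Delta_j$ may be negative, and the available inequalities at other witnesses say nothing about the colliding paths. As it stands the proposal states the correct template and the correct reason the $\widehat X$ variant is the right system, but the proof of disjointness, which is where the theorem lives, is missing.
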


Here we sketch key ideas of our proof. In the prior minor-building techniques for the set system of balls, a crucial step is to choose the shattering family of sets, which is the set of balls that shatters a set of vertices of size $h$. There could be many such choices, and choosing the right tie-breaking scheme for these balls is important:  Chepoi, Estellon, and Vaxes~\cite{CEV07} broke ties by the sum of distances to be minimum, while Bousquet and Thomass{\'{e}}\cite{BT15} did so by the radii of the balls. However, $\wlp_{G,M}(S)$ is very different from a set system of balls, and we have to choose a different tie-breaking scheme for the shattering family of sets. It turns out that by defining $\widehat{X}_v$ as in \Cref{def:LP-variant}, we could choose a tie-breaking scheme using the Isolation Lemma~\cite{VV86}. The Isolation Lemma has been used in breaking ties in different applications, e.g. see~\cite{VV86,Erickson10,MNNW18,CCE13,BP21}, and we expect that this lemma will be used more in future work involving the minor-building technique.

In all applications of the VC set system $\lp_{G,M}(S)$ in planar graphs that we are aware of, including those mentioned in~\cite{LP19,FMW20,Le22}, we can use  $\wlp_{G,M}(S)$ while obtaining the same, or sometimes stronger, guarantees. For example, we could derive a metric compression scheme with almost the same guarantees obtained by Li and Parter~\cite{LP19} but without the assumption that $S$ must be on the outer face and furthermore, $G$ could be any minor-free graphs; see \Cref{subsec:other-undirected}.

Beyond planar graphs, which is our \hyperlink{goal}{Research Goal} mentioned above, we construct a distance oracle (see \Cref{sec:prelim} for the definition) for \emph{unweighted} $K_h$-minor-free graphs with truly subquadratic space and \emph{constant query time}. This is the first oracle in $K_h$-minor-free graphs achieving truly  subquadratic space-query time product, though many such oracles were  known in planar graphs\footnote{It might be possible to extend some distance oracles with truly subquadratic space-query product from planar graphs to bounded genus graphs; however, we are not aware of any prior paper in this direction.} years ago~\cite{FR01,MS12,CDW17,GMWW18,CGMW19,LP20}. Furthermore, our oracle can also be constructed in truly subquadratic time. ($\tilde{O}$ notation hides a poly-logarithmic factor in $n$.)

\begin{corollary}\label{cor1:oracle-unweighted}Let $G = (V,E)$ be an unweighted $K_h$-minor-free graph. We can construct an exact distance oracle for $G$ with $\tilde{O}(n^{2-\frac{1}{3h-1}})$ space and $O(1)$ query time. The construction time of our oracle is $\tilde{O}(n^{2-\frac{1}{3h-1}})$.
\end{corollary}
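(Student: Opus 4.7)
The plan is to embed \Cref{thm:LW-undirected} into the standard $r$-division framework for exact distance oracles, paralleling the planar construction of Fredslund-Hansen, Mozes, and Wulff-Nilsen~\cite{FMW20}. Two features of $\wlp_{G,M}$ from \Cref{thm:LW-undirected} are essential for the generalization: the VC bound holds for every $K_h$-minor-free graph, and the terminals may lie anywhere, in particular on the boundary of any $r$-division region.

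First I would compute an $r$-division of $G$ with region size $O(r)$ and boundary size $O(\sqrt r)$ using the truly subquadratic-time $K_h$-minor-free separator algorithm of~\cite{WulffNilsen14}; the parameter $r$ will be fixed at the end. For each region $R$, fix an ordering $\partial R=\langle s_0^R,\dots,s_{k-1}^R\rangle$ with $k=O(\sqrt r)$, and instantiate $\wlp_{G,M}(\partial R)$ with a large enough integer set $M$ (concretely $M=\{-n,\ldots,n\}$) so that the signature $\widehat X_v$ exactly encodes the vector $(d(v,s_i^R)-d(v,s_0^R))_{i=1}^{k-1}$. Applying \Cref{thm:LW-undirected} together with the Sauer-Shelah lemma bounds the number $T_R$ of distinct signatures over $v\in V$ by $O((n\sqrt r)^{h-1})$. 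For each pair $(v,R)$ I would store a class index $c_v^R$ and the scalar $d(v,s_0^R)$, and for each region the $T_R$ representative vectors of length $k$.

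For $O(1)$ query I would precompute, for every vertex $u$ and every class $c$ appearing in its home region $R_u$, the table $F(u,c)=\min_{s\in\partial R_u}\bigl(d(u,s)+c(s)\bigr)$, so that $d(u,v)=F(u,c_v^{R_u})+d(v,s_0^{R_u})$ whenever the shortest $u$--$v$ path exits $R_u$; the remaining case where the shortest path stays inside a common region is handled by a small within-region all-pairs table ($O(r)$ vertices per region). Accounting for the three dominant contributions---class indices $\widetilde O(n^2/r)$, representatives $\widetilde O((n/r)\cdot T_R\sqrt r)$, and the $F$-table $\widetilde O(n\cdot T_R)$---and choosing $r$ as a polynomial in $n$ that equalizes the dominant terms yields $\widetilde O(n^{2-1/(3h-1)})$ space. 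The same analysis controls the construction time, dominated by a BFS from each boundary vertex, hashing to bucket signatures, and filling the $F$-table.

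The main obstacle I foresee is populating $F$ in time proportional to its size: a naive evaluation of $\min_{s}(d(u,s)+c(s))$ costs $\Theta(\sqrt r)$ per entry, which would blow up the construction time. I would sidestep this by organizing each region's $T_R$ representative vectors into a prefix trie and propagating running minima along trie descents, so that consecutive $F$-entries share almost all of their computation and each is produced in amortized polylogarithmic time. A secondary technical point is ensuring that the class-index storage for every $(v,R)$ pair fits in the target space; this follows from Sauer-Shelah since $\log T_R=O(h\log n)$, so each label is only $O(\log n)$ bits.
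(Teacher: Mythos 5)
Your overall architecture is the paper's: an $r$-division, per-region distance ``classes'' (patterns) whose number is controlled via \Cref{thm:LW-undirected} and Sauer--Shelah, per-region tables of class-to-vertex distances, per-vertex class indices, and an $O(1)$ lookup of the form $d_G(u,v)=\min_{s\in\partial R}\bigl(d_G(u,s)+d_G(s,v)\bigr)$ recovered from a class index plus one scalar. However, there is a genuine gap in the quantitative heart of the argument: you instantiate $\wlp_{G,M}(\partial R)$ with $M=\{-n,\dots,n\}$, so Sauer--Shelah only gives $T_R=O((n\sqrt r)^{h-1})$ distinct signatures per region. With that bound your own space accounting collapses: the $F$-table alone is $\tilde O(n\cdot T_R)=\tilde O(n^{h}r^{(h-1)/2})$, and the representative vectors cost $\tilde O((n/r)T_R\sqrt r)$, both of which are super-quadratic for every $h\ge 3$ and every choice of $r$; no balancing of terms can yield $\tilde O(n^{2-\frac{1}{3h-1}})$. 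The missing ingredient is exactly the paper's \Cref{lm:pattern-bound-undir}: because each cluster of the $r$-division is a \emph{connected} induced subgraph and $G$ is unweighted, the triangle inequality through the cluster gives $|d_G(v,s_i)-d_G(v,s_0)|\le |V(R)|-1$ for every $v\in V$, so one may take $|M|=O(r)$ rather than $O(n)$, and the number of patterns per region drops to $\tilde O\bigl((|\partial R|\cdot|V(R)|)^{h-1}\bigr)=\tilde O(r^{3(h-1)/2})$ (\Cref{cor:pattern-count}). This is precisely why the construction insists on connected clusters, and it is the step that makes all per-region tables polynomial in $r$ instead of $n$.

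Once the correct pattern bound is in place, the ``main obstacle'' you flag disappears: evaluating $\min_{s\in\partial R}(d(u,s)+c(s))$ naively at $\Theta(\sqrt r)$ per entry costs $\tilde O(r^{1/2}\,|V(R)|\,r^{3(h-1)/2})$ per region, i.e.\ $\tilde O(n\,r^{(3h-2)/2})$ in total (\Cref{lm:diam-Step2}), and balancing this against the $\tilde O(n^2/\sqrt r)$ cost of the boundary BFS computations is exactly what fixes $r=n^{2/(3h-1)}$ and produces the exponent $2-\frac{1}{3h-1}$ for both construction time and space. So no trie-based amortization is needed; moreover, a DFS over a prefix trie of the class vectors still touches $\Theta(T_R\sqrt r)$ nodes per vertex in the worst case, so the claimed polylogarithmic amortization per entry is not justified anyway. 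Two smaller points: the $r$-division you should cite is the $r$-clustering of Wulff-Nilsen (\Cref{lm:r-division}), which in the stated subquadratic time gives $\tilde O(n/\sqrt r)$ clusters (not $O(n/r)$), making the class-index term $\tilde O(n^2/\sqrt r)$ rather than $\tilde O(n^2/r)$ --- this is still within budget and is in fact the dominant space term; and in the common-region case you should take the minimum of the within-region distance and the boundary-routed value, since a shortest path between two vertices of the same cluster may leave the cluster.
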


Our oracle in \Cref{cor1:oracle-unweighted} is obtained by tailoring the construction of Fredslund-Hansen, Mozes, and  Wulff-Nilsen to $K_h$-minor-free graphs and applying \Cref{thm:LW-undirected}  to bound the number of distance patterns; we refer readers to \Cref{subsec:oracle-undirected} for more details.

Using \Cref{thm:LW-undirected}, we resolve an open problem left by Ducoffe, Habib, and Viennot~\cite{DHV20}: computing the  Wiener index in any $K_h$-minor-free graph in truly subquadratic time. We also improve the truly subquadratic time algorithm for computing all-vertices eccentricities and diameter in unweighted $K_h$-minor-free graphs by Ducoffe, Habib, and Viennot~\cite{DHV20} from  $n^{2-1/2^{O(h)}}$ to $\tilde{O}(n^{2-\frac{1}{3h-1}})$

\begin{corollary}\label{cor:dimater-unweighted} Let $G = (V,E)$ be an unweighted  $K_h$-minor-free graph. We can compute the eccentricities of all vertices, the diameter, and the Wiener index  of $G$  in $\tilde{O}(n^{2-\frac{1}{3h-1}})$ time.
\end{corollary}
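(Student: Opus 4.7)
\medskip

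\noindent\textbf{Proof proposal for \Cref{cor:dimater-unweighted}.} The plan is to follow the template of Ducoffe, Habib, and Viennot~\cite{DHV20}, with one key substitution: replace their use of the VC set system $\mathcal{B}(G)$ of balls with our system $\wlp_{G,M}(S)$ of \Cref{thm:LW-undirected}. The system $\wlp_{G,M}(S)$ directly encodes bit-vectors of approximate distances to the terminals $S$, rather than information about ball membership, and therefore the pattern-change events along a Haussler--Welzl low-stabbing spanning path translate transparently into changes in the distance vector $(d_G(u,s))_{s\in S}$ as $u$ moves along the path. Moreover, since \Cref{thm:LW-undirected} allows $S$ to be any vertex set and $G$ to be any $K_h$-minor-free graph (and since the VC dimension is $h-1$ rather than the exponential $2^{O(h)}$ of~\cite{DHV20}), we avoid the two sources of loss that made their final exponent $2^{-O(h)}$.

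Concretely, I would first compute an $r$-division of $G$ in truly subquadratic time using the separator algorithms for $K_h$-minor-free graphs~\cite{WulffNilsen14}, obtaining $\Theta(n/r)$ pieces of size $O(r)$ whose boundaries have total size $O(n\cdot\poly(h)/\sqrt{r})$. For each piece $R$ with boundary $S_R=\partial R$, I run a BFS from every $s\in S_R$; since $G$ is unweighted this takes $O(n)$ time per source and $O(n|S_R|)$ time per piece, totaling $\tilde O(n^{2}/\sqrt{r})$ across all pieces. After this phase we know $d_G(s,v)$ for every boundary vertex $s$ of every piece and every $v\in V$, so for any $u\in R$ and any $v\notin R$ we have the identity $d_G(u,v)=\min_{s\in S_R}\bigl(d_G(u,s)+d_G(s,v)\bigr)$, while $u,v$ in a common piece contribute only $O(nr)$ time overall by brute force.

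Next, for each piece $R$ I construct the set system $\wlp_{G,M}(S_R)$ on the vertex set $V$ with $M=\{-n,\ldots,n\}$ (integer differences suffice in the unweighted setting). By \Cref{thm:LW-undirected} this system has VC-dimension at most $h-1$. Invoking the Haussler--Welzl theorem on the dual of this system yields a spanning path $\pi_R$ of $V$ whose stabbing number is $\tilde O(n^{1-1/(h-1)})$, and along which consecutive vertices have nearly identical ``distance signatures'' to $S_R$. Exactly as in~\cite{DHV20}, this low-stabbing path lets me compute, for every $v\in V$, the quantity $\max_{u\in R}d_G(u,v)$ (for diameter/eccentricities) or $\sum_{u\in R}d_G(u,v)$ (for Wiener index) in time $\tilde O(n+r\cdot|S_R|\cdot n^{1-1/(h-1)})$, by sweeping $u$ along $\pi_R$ and maintaining the minimum-plus aggregation incrementally; the sum-aggregation variant for Wiener index is handled analogously because a running sum updates under each elementary pattern-change in $O(1)$ amortized time. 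Summing over all pieces gives roughly $\tilde O(n^{2}/\sqrt{r}+n^{2-1/(h-1)}\sqrt{r})$. Balancing by choosing $r=n^{2/(3h-1)}$ yields the claimed $\tilde O(n^{2-1/(3h-1)})$ bound.

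The main obstacle I expect is the bookkeeping in the incremental aggregation step for the Wiener index: whereas DHV write that they ``do not see any way'' to extend their ball-based approach to Wiener, the fact that $\wlp_{G,M}(S)$ encodes distances themselves (up to granularity controlled by $M$) means a bounded-magnitude change in $\widehat X_u$ along $\pi_R$ corresponds to a bounded total change in $(d_G(u,s))_{s\in S_R}$, so the running \emph{sum} $\sum_{s\in S_R}(d_G(u,s)+d_G(s,v))$ can be updated in proportion to the stabbing events — a property that is not available for the ball system. Verifying this correspondence formally, and checking that the hidden polynomial factors in $h$ in both the $r$-division and the Haussler--Welzl construction do not spoil the final exponent, is the one place where care is required; the rest of the argument is a black-box combination of \Cref{thm:LW-undirected} with the framework of~\cite{DHV20}.
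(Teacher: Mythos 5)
Your plan hinges on a low-stabbing spanning path of the \emph{vertex set} with respect to the family of sets $V_{i,\Delta}=\{v\in V: d_G(v,s_i)-d_G(v,s_0)\le \Delta\}$, but \Cref{thm:LW-undirected} only bounds the VC dimension of the \emph{primal} system $\wlp_{G,M}(S_R)$ (ground set $[k-1]\times M$, one set $\widehat{X}_v$ per vertex). What the Haussler--Welzl sweep needs is a VC bound for the \emph{dual} system (ground set $V$, one set per pair $(i,\Delta)$), and the generic primal-to-dual transfer only gives a bound exponential in $h$; neither the paper nor your proposal proves a polynomial bound for this dual family. So the claimed stabbing number $\tilde O(n^{1-1/(h-1)})$ is unjustified, and with the generic dual bound you reintroduce exactly the $2^{-O(h)}$ loss of \cite{DHV20} that you set out to avoid. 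There are two further quantitative/structural problems: (i) taking $M=\{-n,\dots,n\}$ makes the ground set per cluster of size $\Theta(|S_R|\cdot n)$, and any event-counting argument that charges work to threshold crossings then has too many events per piece (the paper crucially uses that within a \emph{connected} cluster $|d_G(v,s_i)-d_G(v,s_0)|\le |V(R)|-1$, so $|M|=O(r)$); (ii) the Wiener-index step --- maintaining $\sum_{u\in R}d_G(u,v)$ under a sweep --- is precisely the step Ducoffe, Habib, and Viennot could not carry out, and your justification (``bounded change in $\widehat{X}_u$ implies bounded change in the distance vector, so the running sum updates in time proportional to stabbing events'') is an assertion, not an argument; making it precise again requires controlling the dual stabbing structure, which is the unproved point above.

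For contrast, the paper does not use Haussler--Welzl at all. It applies Sauer--Shelah (\Cref{lm:SS}) to $\wlp_{G,M}(S_R)$ with $S_R=\partial R$ and $|M|=O(|V(R)|)$ to conclude that the number of distinct \emph{patterns} $\patt_v[i]=d_G(v,s_i)-d_G(v,s_0)$ per cluster is only $\tilde O(r^{3(h-1)/2})$ (\Cref{lm:pattern-bound-undir}, \Cref{cor:pattern-count}). It then precomputes, for each pattern, the furthest vertex in $R$ (for eccentricities/diameter) or $\sum_{v\in R^{\circ}}d(\patt,v)$ (for the Wiener index), and each vertex $u\notin R$ recovers $\max_{v\in R}d_G(u,v)$ or $\sum_{v\in R^\circ}d_G(u,v)$ in $O(1)$ time via \Cref{lm:dist-via-pattern}, using $d_G(u,v)=d_G(u,s_0)+d(\patt_u,v)$. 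The running time is $\tilde O(n^2/\sqrt r + nr^{(3h-2)/2})$, balanced at $r=n^{2/(3h-1)}$. If you want to salvage your route, you would need to prove a polynomial VC (or shattering/stabbing) bound for the dual family $\{V_{i,\Delta}\}$ directly; otherwise the pattern-counting argument is both simpler and the one that actually yields the stated exponent, and it handles the Wiener index for free since sums over a cluster depend only on the pattern.
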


We remark that a truly subquadratic running time of the form $2^{o(h)}n^{2-\eps}$ for any fixed constant $\eps > 0$ for computing diameter in unweighted $K_h$-minor-free graphs is unlikely due to a conditional lower bound by Abboud, Williams, and Wang~\cite{AWW16}, which holds even in a special case of graphs of treewidth at most $h$.

We now describe our results for \emph{digraphs}. Let $d_G(u\rightarrow v)$ denotes the distance \emph{from $u$ to $v$} in a digraph $G$. It might be that  $d_G(u\rightarrow v) \not=  d_G(v\rightarrow u)$. Analogous to \Cref{def:LP-variant}, we define a set system, denoted by $\vlp_{G,M}(S)$.

\begin{definition}\label{def:LP-directed} Let $M \subseteq \mathbb{R}$ and $S= \{s_0,s_2,\ldots, s_{k-1}\}$ be in \Cref{def:LP-setsystem}, but $G = (V,E)$ now is an edge-weighted digraph. For every $v\in V$, let:
	\begin{equation}\label{eq:X-vec}
		\lvec{X}_v = \{(i,\Delta): 1\leq i \leq k-1, \Delta \in M, d_G(v \rightarrow s_{i})-  d_G(v \rightarrow s_{0})\leq \Delta \} 
	\end{equation}
We define $\vlp_{G,M}(S) = \{\lvec{X}_v: v\in V\}$.
\end{definition}

Our second main result is to show that $\vlp_{G,M}(S)$ is a VC set system in $K_h$-minor-free digraphs. (A digraph is $K_h$-minor-free if its underlying undirected graph is $K_h$-minor-free.)

\begin{restatable}{theorem}{LWDir}\label{thm:LW-digraph}Let $S$ be any set of vertices on an edge weighted $K_h$-minor-free digraph $G$. Let $M\subseteq \mathbb{R}$ be any set of real numbers. Then $\vlp_{G,M}(S)$ has VC-dimension at most $h^2$.
\end{restatable}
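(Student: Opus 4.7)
My plan is to extend the minor-building proof of Theorem~\ref{thm:LW-undirected} to the directed setting, with the loss from $h-1$ to $h^2$ coming from having to track outgoing structures from two distinct roots ($s_0$ and each $s_i$) separately rather than folding them into a single shortest-path tree. Assume for contradiction that some $Y \subseteq [k-1]\times M$ with $|Y|=h^2+1$ is shattered. For every non-empty $T \subseteq Y$ pick a witness $v_T$ with $\lvec{X}_{v_T}\cap Y = T$, and, as in the undirected proof, use the Isolation Lemma on edge weights so that each directed shortest path $P(v\to s_i)$ is unique. I would then break any remaining ties among eligible witnesses for a given $T$ by minimizing the total weight of $\bigcup_i P(v_T \to s_i)$. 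This yields, for each $T$, a canonical out-arborescence $\mathcal{T}_T$ rooted at $v_T$ that is compatible with the tie-breaking used on every other $\mathcal{T}_{T'}$.

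I would then restrict attention to the $h^2+1$ complement-of-singleton witnesses $v_{(i,\Delta)} := v_{Y\setminus\{(i,\Delta)\}}$. Each such vertex strictly violates exactly one of the defining inequalities of $Y$, namely $d_G(v\to s_i) - d_G(v\to s_0) > \Delta$, while for every other $(j,\Delta')\in Y$ the inequality $d_G(v\to s_j) - d_G(v\to s_0) \leq \Delta'$ holds. In the undirected proof, the triangle inequality forces $P(v\to s_i)$ and $P(v\to s_0)$ to share a canonical "spine," and branch sets are carved from that spine. In a digraph one loses the reverse triangle inequality, so I would associate with each witness an \emph{ordered} pair of distinguished vertices in $\mathcal{T}_{(i,\Delta)}$: the last vertex $u^{\mathrm{out}}$ shared by $P(v\to s_0)$ and $P(v\to s_i)$, and a "partner" vertex $u^{\mathrm{in}}$ at which $\mathcal{T}_{(i,\Delta)}$ attaches to $\mathcal{T}_{T'}$ for other witnesses $T'$ whose defining violation lies at a different index.

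The main combinatorial step is a double pigeonhole on the pair $(u^{\mathrm{out}}, u^{\mathrm{in}})$. Each coordinate admits at most $h-1$ distinct "types" before the same minor-building argument used in Theorem~\ref{thm:LW-undirected} would already produce a $K_h$-minor in the underlying undirected graph of $G$. Hence among the $h^2+1$ witnesses some $h$ of them must agree on one coordinate, and from those I would extract $h$ pairwise vertex-disjoint branch sets, each a pruned union of the undirected images of $P(v\to s_0)$ and $P(v\to s_i)$, that are pairwise adjacent through the shared coordinate; this contradicts $K_h$-minor-freeness of the underlying graph of $G$.

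The hardest part will be calibrating the Isolation-Lemma tie-breaking so that the two pigeonhole coordinates $u^{\mathrm{out}}$ and $u^{\mathrm{in}}$ are genuinely canonical \emph{across} all $h^2+1$ different witnesses: in the undirected case, reversibility of shortest paths makes this essentially automatic, but in a digraph two witnesses can have shortest paths to the same $s_i$ that traverse a common vertex without merging there, so extra care is needed to ensure disjointness of the branch sets. Once this coherence is set up, the remainder of the argument is a direct directed transcription of the undirected minor-building proof, with the $h^2$ bound arising purely from the product structure of the pigeonhole.
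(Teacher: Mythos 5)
There is a genuine gap, and it sits exactly where you acknowledge the difficulty. Your plan hinges on complement-of-singleton witnesses $v_{Y\setminus\{(i,\Delta)\}}$ together with an Isolation-Lemma tie-breaking, followed by a ``double pigeonhole'' in which each coordinate supposedly admits at most $h-1$ types ``before the undirected argument already produces a $K_h$-minor.'' Neither ingredient works as stated. First, the crossing-contradiction that drives every minor-building step in this paper needs, for a witness $v$ and an index $\ell$ \emph{outside} the trace $\lvec{X}_v\cap Y$, the strict inequality $d_G(v\rightarrow s_\ell) > d_G(v\rightarrow s_0)+\Delta_\ell$; it is the combination of upper bounds along the two drawn paths with these strict lower bounds on the recombined paths that forbids a crossing. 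A complement-of-singleton witness violates only one inequality, so if $\pi(v_i\rightarrow s_j)$ and $\pi(v_\ell\rightarrow s_m)$ cross you get a forced contradiction only in the very special case $m=i$ and $j=\ell$; for all other index combinations no disjointness follows, and no tie-breaking scheme can manufacture the missing inequalities. Second, the claim that each pigeonhole coordinate has at most $h-1$ types cannot be inherited from \Cref{thm:LW-undirected}: the structural statement that makes the undirected proof go through (\Cref{lm:bunch-disjiont2}, that two bunches meet in a single vertex which is an endpoint of one of them) is precisely what fails for digraphs, because shortest paths from different sources can cross arbitrarily many times; relatedly, your ``partner vertex'' $u^{\mathrm{in}}$ is not canonically defined, since one witness's arborescence can meet another's in many places.

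For comparison, the paper's route avoids tie-breaking and per-witness trees altogether. It takes witnesses $t_{ij}$ of the \emph{two-element} subsets $\{(s_i,\Delta_i),(s_j,\Delta_j)\}$, proves (\Cref{lm:LP-disjoint}) that $\pi(t_{ij}\rightarrow s_i)$ and $\pi(t_{\ell p}\rightarrow s_\ell)$ are vertex-disjoint whenever $i,j,\ell,p$ are pairwise distinct (here the off-indices supply exactly the strict inequalities discussed above), and then glues each pair's two paths into an undirected $s_i$--$s_j$ path $P_{ij}$. Since only paths sharing a shattered index can intersect, it partitions the $q$ shattered elements into $\sqrt{q}$ groups of size $\sqrt{q}$, takes the within-group unions $H_a=\bigcup_{s_i,s_j\in Y_a}P_{ij}$ as connected, pairwise-disjoint branch sets, and greedily selects one cross-group path per pair of groups with all endpoints distinct to provide the connections, yielding a $K_{\sqrt{q}}$-minor and hence the quadratic bound. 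The $h^2$ thus comes from the $\sqrt{q}\times\sqrt{q}$ grouping trade-off, not from a product of two pigeonhole coordinates; to repair your proposal you would need to switch to pair witnesses, prove a disjointness statement of the above form, and supply an explicit grouping or packing argument in place of the unproved ``$h-1$ types per coordinate'' claim.
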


The VC-dimension bound in \Cref{thm:LW-digraph} is quadratic instead of linear as in \Cref{thm:LW-undirected}. Our proof of \Cref{thm:LW-digraph} is also minor-building. However, the main difficulty in the directed case is that two directed shortest paths could intersect an arbitrary number of times (in different directions). In the undirected case, we rely on the fact that two shortest paths intersect at most once, as long as we choose a consistent tie-breaking scheme.  The fact that directed paths can intersect in a very complicated way makes the minor construction in digraphs more difficult, and we settle on a looser bound. To construct a minor, we group the vertices into $h$ groups, and loosely speaking, we show that how to choose directed paths between groups so that the paths are vertex disjoint.

We use \Cref{thm:LW-digraph} to design first truly subquadratic time algorithm for computing diameter and eccentricity for unweighted $K_h$-minor-free digraphs. Previously, truly subquadratic time algorithms for these problems were only known for planar digraphs~\cite{Cabello18,GKMSW21}. 

\begin{corollary}\label{cor:dimater-digraph} Let $G = (V,E)$ be an unweighted  $K_h$-minor-free digraph. We can compute the diameter and all-vertex eccentricities of $G$  in $\tilde{O}(n^{2-1/(3h^2+6)})$ time.
\end{corollary}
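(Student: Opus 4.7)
The plan is to lift the $r$-division plus low-stabbing-path framework of Ducoffe, Habib, and Viennot~\cite{DHV20} from undirected minor-free graphs to digraphs, replacing their VC-dimension bound for balls with our \Cref{thm:LW-digraph} for $\vlp_{G,M}$. First I would compute, in truly subquadratic time, an $r$-division of the underlying undirected skeleton of $G$ using the separator algorithm of Wulff-Nilsen~\cite{WulffNilsen14}; retaining the original edge directions inside each of the $O(n/r)$ pieces $P$ yields pieces of $O(r)$ vertices with boundary $|\partial P|=O(\sqrt{r}\cdot\poly(h))$. For $u\notin P$, every shortest $u\to v$ walk to some $v\in P$ must cross $\partial P$, so
\begin{equation*}
\max_{v\in P}d_G(u\to v)\;=\;\max_{v\in P}\;\min_{b\in\partial P}\bigl(d_G(u\to b)+d_P(b\to v)\bigr),
\end{equation*}
and the tables $d_P(b\to v)$ come from $|\partial P|$ BFSes inside $P$. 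In-eccentricities obey the symmetric decomposition with the roles of the endpoints swapped.

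Next I would apply \Cref{thm:LW-digraph} with $S=\partial P$ and $M=\{-n,\dots,n\}$ (valid because distances are integers at most $n$) to get that $\vlp_{G,M}(\partial P)$ has VC-dimension at most $h^2$. By the Haussler--Welzl theorem there is then, computable in polynomial time, a Hamiltonian ordering $\pi=u_1,\dots,u_n$ of $V$ whose stabbing number with respect to this set system is $\tilde{O}(n^{1-1/h^2})$. As $u$ walks along $\pi$, the vector $\lvec{X}_u$, which records $d_G(u\to s_i)-d_G(u\to s_0)$ for each $s_i\in\partial P$, flips only $\tilde{O}(n^{1-1/h^2})$ coordinates in total; combined with a single BFS in the reverse graph from $s_0$ to precompute $d_G(u\to s_0)$ for every $u$, this means each $d_G(u\to b)$ is piecewise-constant along $\pi$ with few jumps. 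I would then process the outside vertices of $P$ in $\pi$-order, maintaining a data structure keyed by $v\in P$ that holds the current upper envelope $f_v(u):=\min_{b\in\partial P}(d_G(u\to b)+d_P(b\to v))$; each coordinate flip triggers a constant number of additive updates to these envelopes, and $\max_{v\in P}f_v(u)$ is read off at every $u$. The same scheme, with the roles of $u$ and $v$ swapped, computes the contributions to the in-eccentricities, still using the VC bound on $\vlp_{G,M}(\partial P)$.

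Balancing the $n/r$ pieces against per-piece costs of $\tilde{O}(n^{1-1/h^2}\cdot|\partial P|\cdot r)$ for updates plus $O(nr)$ for read-offs, and choosing $r\approx n^{2/(3h^2+6)}$ up to $\poly(h)$ factors, gives the target $\tilde{O}(n^{2-1/(3h^2+6)})$ bound. The hard part will be the incremental upper-envelope maintenance: a single coordinate flip of $\lvec{X}_u$ only certifies that some $d_G(u\to b)-d_G(u\to s_0)$ has crossed a threshold in $M$, so one must turn this symbolic information into the actual integer change in $d_G(u\to b)$ (available in the unweighted setting from the precomputed BFS layers from and to $\partial P$) and fold it through a segment-tree-like structure analogous to Cabello's~\cite{Cabello18}. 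The remaining bookkeeping, including the final max over pieces, is routine.
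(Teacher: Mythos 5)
Your route is genuinely different from the paper's (which avoids low-stabbing spanning paths entirely and instead introduces ``infinite patterns'' truncated at $\pm\infty$, counts them via \Cref{thm:LW-digraph} and Sauer--Shelah in \Cref{lm:pattern-bound-di}, and recovers $\max_{v\in R}d_G(u\rightarrow v)$ by a maximum-base-condition argument, \Cref{lm:dist-dir-via-pattern}), but as written it has a gap exactly at the directed-specific difficulty. The Haussler--Welzl low-crossing path orders the \emph{ground set} of the set system; for $\vlp_{G,M}(\partial P)$ the ground set is $[|\partial P|-1]\times M$, while the sets $\lvec{X}_u$ are indexed by vertices. To get an ordering of $V$ along which the vectors $\lvec{X}_u$ change slowly you must pass to the dual system, and even using the dual-shatter-function version of the theorem (which Sauer--Shelah does let you bound by $O(m^{h^2})$), the crossing bound $\tilde{O}(n^{1-1/h^2})$ holds \emph{per coordinate} $(i,\Delta)$, not for the whole vector. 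The total number of flips along the path is therefore up to $|\partial P|\cdot|M|\cdot\tilde{O}(n^{1-1/h^2})$, and in digraphs you cannot shrink $|M|$ to $O(r)$ by the triangle inequality: $d_G(u\rightarrow s_i)-d_G(u\rightarrow s_0)$ can range over $\pm\Theta(n)$, which is why you yourself take $M=\{-n,\dots,n\}$. With the correct accounting the extra factor of $|M|=\Theta(n)$ destroys the balancing that is supposed to give $\tilde{O}(n^{2-1/(3h^2+6)})$; your claim that ``$\lvec{X}_u$ flips only $\tilde{O}(n^{1-1/h^2})$ coordinates in total'' is unjustified. This blow-up is precisely what the paper's $\pm\infty$ truncation is designed to kill, and it is also why the exponent in \cite{DHV20} degrades exponentially in $h$, so even a repaired version of your scheme would not obviously reach the stated bound.

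A second unsupported step is the envelope maintenance. A single flip certifies a change in one value $d_G(u\rightarrow b)$, but that one change can alter $f_v=\min_{b\in\partial P}\bigl(d_G(u\rightarrow b)+d_P(b\rightarrow v)\bigr)$ for up to $|V(P)|$ vertices $v$ and invalidate the stored maximum, so ``a constant number of additive updates per flip'' does not hold, and the appeal to a ``segment-tree-like structure analogous to Cabello's'' is planar-specific machinery (additively weighted Voronoi diagrams built on planar duality) that is not available in $K_h$-minor-free digraphs. The paper sidesteps both problems by brute force over the polynomially many truncated patterns per cluster (spending $\tilde{O}(r^{3/2})$ per pattern, $\tilde{O}(nr^{(3h^2+5)/2})$ overall) plus the $|\partial R|$ choices of base vertex needed so that the pattern of each query vertex $u$ satisfies the maximum base condition; that last point (you must use the boundary sequence whose base is the boundary vertex \emph{farthest} from $u$, since otherwise the truncated pattern does not determine $\max_{v}d_G(u\rightarrow v)$) is also absent from your proposal and is needed for correctness, not just efficiency.
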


Designing the truly subquadratic time algorithm for computing diameter and all-vertex eccentricities of digraphs in \Cref{cor:dimater-digraph} is much more difficult than their undirected counterparts in \Cref{cor:dimater-unweighted}. The algorithm for undirected graphs is based on the notion of \emph{patterns}: each pattern is intuitively a vector  of distances from a vertex in the graph to the boundary of a subgraph; the formal definition is given in \Cref{eq:pattern-def}. Two nice properties of patterns in undirected graphs: (i) there is only a polynomial number of them, and (ii) the distance from a vertex $u$ to a vertex $v$ in a connected subgraph of $H$ can be defined in terms of the distance from the pattern of $u$ to  $v$. (We have not defined the notion of distance between a pattern and a vertex; for now it suffices to know that one could define such a notion.)  In digraphs, property (i) breaks down completely, and the reason is perhaps unsurprising: the triangle inequality does not hold in digraphs ---the asymmetric triangle inequality does not suffice. Instead, we introduce \emph{infinite patterns} where we allow entries with $\pm +\infty$ values. For infinite patterns, we are able to obtain property (i). However, property (ii) fails for infinite patterns. We resolve this by looking at all the distances from the pattern to all vertices of $H$ at once, and we are able to extract the maximum distance from these distances. Thus, we are still able to solve the diameter and all-vertices eccentricities problems in truly subquadratic time. Unfortunately, we are not able to compute the Wiener index in truly subquadratic time using infinite patterns, and we leave this as  an open problem for future work.

In undirected graphs, we can use VC dimension bound on $\wlp_{G,M}(S)$ to construct an exact distance oracle with truly subquadratic space and constant query time (\Cref{cor1:oracle-unweighted}). However, we are unable to use the  VC dimension bound on $\vlp_{G,M}(S)$ to obtain an analogous result for digraphs. This is because the notion of patterns does not work, and the infinite patterns we introduce are not useful in decoding distances. We work around the problem in our third main result. Specifically, let $\lvec{B}(v,r) = \{u: d_G(v\rightarrow u)\leq r\}$, and:
\begin{equation}\label{eq:diball}
	\lvec{\mathcal{B}}(G) = \{\lvec{B}(v,r) : v\in V\}
\end{equation}

\begin{restatable}{theorem}{DiBallVC}\label{thm:diball-vc} If $G$ is a $K_h$-minor-free digraph, then $\lvec{B}(G)$ has VC-dimension at most $h-1$. 
\end{restatable}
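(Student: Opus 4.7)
The plan is to adapt the minor-building proof of Chepoi--Estellon--Vaxes (in its refinement by Bousquet--Thomass\'e, cited as [BT15] above) from undirected balls to directed out-balls. Since $K_h$-minor-freeness is a property of the underlying undirected graph $\bar G$ and directedness enters only through the distance function, it suffices to transfer the essential geometric ingredient of the proof: that a ball intersected with a shattered set $Y$ comes with a \emph{connected} subgraph spanning exactly that intersection and avoiding the complement. The crucial observation that makes this transfer possible is that out-balls are prefix-closed under directed shortest paths from the center: if $y \in \lvec{B}(v,r)$ and $P$ is a directed shortest $v$-to-$y$ path, then every vertex of $P$ lies in $\lvec{B}(v,r)$, since the prefix of $P$ up to any such vertex is itself a directed shortest path of length at most $r$.

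Suppose toward contradiction that $Y = \{y_1,\dots,y_h\}$ is shattered by $\lvec{\mathcal{B}}(G)$. For each $Z \subseteq Y$, select a witness ball $\lvec{B}(v_Z,r_Z)$ with $\lvec{B}(v_Z,r_Z)\cap Y = Z$, taking $r_Z$ minimum and breaking remaining ties by an Isolation-Lemma style perturbation of the edge weights (so that directed shortest paths are unique) followed by a lexicographic rule on centers $v_Z$. For each such $Z$, let $T_Z$ be the directed shortest-path out-tree rooted at $v_Z$ and spanning $Z$. By the prefix-closure observation, $V(T_Z) \subseteq \lvec{B}(v_Z,r_Z)$, so $V(T_Z)\cap Y = Z$; and $T_Z$ is connected in $\bar G$.

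At this point the situation is structurally identical to the input of the Bousquet--Thomass\'e argument in the undirected case: for every $Z \subseteq Y$ we have a connected subgraph of $\bar G$ containing $v_Z$ and all of $Z$ and meeting $Y$ in exactly $Z$. Applying their construction produces $h$ pairwise disjoint connected branch sets $A_1,\dots,A_h$ in $\bar G$, with $y_i \in A_i$ and an edge of $\bar G$ between $A_i$ and $A_j$ for every $i\neq j$; this exhibits a $K_h$-minor of $\bar G$, contradicting $K_h$-minor-freeness and yielding $\mathrm{VC\text{-}dim}(\lvec{\mathcal{B}}(G)) \le h-1$.

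The main obstacle is ensuring that the Bousquet--Thomass\'e branch-set construction goes through with directed out-trees. In the undirected case, their disjointification relies on two shortest paths (with unique tie-breaking) intersecting in a single subpath, a property that fails for directed shortest paths in general (this is precisely the difficulty that forces the quadratic VC bound in \Cref{thm:LW-digraph}). The saving grace here, however, is that the trees $T_Z$ are rooted \emph{out}-trees issuing from a single source $v_Z$: any two edges of $T_Z$ that share a tail lie on a common shortest path from $v_Z$, so each $T_Z$ really is a tree in $\bar G$ and its interaction with any other $T_{Z'}$ is governed by the same minimality-of-radius inequalities used in the undirected proof. Verifying carefully that the canonical tie-breaking and the minimality of each $r_Z$ suffice to rerun the disjointification argument on these out-trees is the only nontrivial step, and it preserves the undirected bound $h-1$.
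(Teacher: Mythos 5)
Your high-level strategy is the right one (minor-building, as in Chepoi--Estellon--Vaxes/Bousquet--Thomass\'e, using the fact that directed out-balls are closed under prefixes of shortest paths from the center), but the proposal leaves the actual crux unproven. The step you defer in your last paragraph --- ``verifying carefully that the canonical tie-breaking and the minimality of each $r_Z$ suffice to rerun the disjointification argument'' --- \emph{is} the theorem; everything before it is setup. The input you hand to the Bousquet--Thomass\'e machinery (``for every $Z\subseteq Y$ a connected subgraph of $\bar G$ meeting $Y$ exactly in $Z$'') is not by itself enough to produce a $K_h$-minor: their branch-set construction is not a black box on connected witnesses, it crucially exploits the metric structure of the witnesses (shortest paths from a ball center of \emph{minimum} radius to the two shattered vertices) to control how witnesses for \emph{different} pairs intersect. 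Your ``saving grace'' observation --- that each $T_Z$ is an out-tree from a single source and hence a tree in $\bar G$ --- only addresses the internal structure of one witness; it says nothing about intersections between $T_Z$ and $T_{Z'}$ for distinct centers, which is exactly where directedness could in principle ruin the argument (and does ruin the analogous bunch lemma for $\vlp$, which is why \Cref{thm:LW-digraph} only achieves $h^2$).

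What actually closes the gap, and what the paper does: use witness balls only for \emph{pairs} $\{v_i,v_j\}$, choose $t_{ij},r_{ij}$ with $r_{ij}=\max\{d_G(t_{ij}\rightarrow v_i),d_G(t_{ij}\rightarrow v_j)\}$ minimum (no perturbation is needed), and then prove two things directly. First, $\pi(t_{ij}\rightarrow v_i)$ and $\pi(t_{ij}\rightarrow v_j)$ are internally disjoint, since a common internal vertex $x$ would give a ball $\lvec{B}(x,r_{ij}-d_G(t_{ij}\rightarrow x))$ still separating $\{v_i,v_j\}$ with smaller radius (\Cref{obs:disjont-di}). Second, and this is the heart of the proof, the analogue of \Cref{lm:bunch-disjiont2}: the undirected bunches $\bunch(v_a)$ and $\bunch(v_b)$ meet only in $t_{ab}$, which is an endpoint of one of them (\Cref{lm:di-bunch-disjiont}). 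Its proof splits into two cases: when the index sets are disjoint, a crossing vertex $x$ yields the four inequalities $a_1+a_2\le r_{1j}$, $a_3+a_4\le r_{2k}$, $a_1+a_4>r_{1j}$, $a_2+a_3>r_{2k}$ (valid because both crossing paths are directed \emph{away} from their centers through $x$, so the directed triangle inequality can be applied to the spliced walks $t_{1j}\rightarrow x\rightarrow v_2$ and $t_{2k}\rightarrow x\rightarrow v_1$), which is a contradiction; when they share an index, a crossing vertex $x\ne t_{12}$ yields a ball centered at $x$ that still cuts out exactly $\{v_1,v_2\}$ with strictly smaller maximum distance, contradicting the minimality of $r_{12}$ (\Cref{clm:jk-12-a}, \Cref{clm:jk-12-b}). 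Only with this lemma in hand can one contract the bunches, as in the proof of \Cref{thm:LW-undirected}, to obtain a $K_h$-minor. None of these directional arguments appear in your proposal, so as written it is a plausible plan rather than a proof; to complete it you would need to supply precisely this pairwise-intersection analysis for the directed setting.
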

 
We then develop a new technique to exploit the VC set system of directed balls. Our technique fits naturally with the $r$-division of $K_h$-minor-free digraphs. Specifically for each cluster in the $r$-division, we look at all the restrictions of balls in the cluster; the balls are centered at vertices outside the cluster. We exploit \Cref{thm:diball-vc} in showing that there are only a polynomial number of different restrictions. Thus, we could keep all of them, along with side information, in a table. Our technique gives the first exact distance oracle for digraphs with  truly subquadratic space-query product.  We remark that it is unclear how to combine the low-stabbing spanning path technique by Ducoffe, Habib, and Viennot~\cite{DHV20} with $r$-division to construct an exact distance with the same guarantee (even in undirected graphs).

\begin{corollary}\label{cor1:oracle-digraph}Let $G = (V,E)$ be an unweighted $K_h$-minor-free digraph. We can construct an exact distance oracle for $G$ with $\tilde{O}(n^{2-\frac{1}{2(h-2)}})$ space and $O(\log(n))$ query time. 
\end{corollary}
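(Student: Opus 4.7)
The plan is to marry the $r$-division machinery for $K_h$-minor-free graphs with the VC bound of \Cref{thm:diball-vc} to compactly encode, inside each cluster, the distance information coming from the rest of the graph.

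First, I would compute an $r$-division of $G$ for a parameter $r$ to be tuned later, giving $O(n/r)$ clusters each of size $O(r)$ with $O(\sqrt{r})$ boundary vertices; such divisions can be constructed in subquadratic time for $K_h$-minor-free graphs. For each cluster $C$, I would precompute the intra-cluster distances $d_C(b \rightarrow v)$ for every boundary vertex $b \in \partial C$ and every $v \in C$ via SSSP, using $\tilde{O}(n\sqrt{r})$ total space.

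Next, for each cluster $C$ I would consider the restricted set system
\[\lvec{\mathcal{B}}_C \;=\; \bigl\{\lvec{B}(u,\rho) \cap C : u \in V,\; \rho \in \mathbb{Z}_{\geq 0}\bigr\}\]
on ground set $C$. Restriction does not raise the VC dimension, so \Cref{thm:diball-vc} gives $\mathrm{VCdim}(\lvec{\mathcal{B}}_C) \le h-1$, whence the Sauer--Shelah lemma yields $|\lvec{\mathcal{B}}_C| = O(r^{h-1})$. Each external vertex $u$ induces a nested chain of balls drawn from this polynomial-size table, and two external vertices whose chains agree up to a shift in $\rho$ are functionally equivalent for distance queries into $C$. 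I would store, per cluster, the table of distinct ball restrictions sorted by the order in which cluster vertices are absorbed, together with, per (external vertex, cluster) pair, a class identifier and an additive shift. A query $d_G(u \rightarrow v)$ is answered by locating $C_v$, returning the precomputed intra-cluster value if $u \in C_v$, and otherwise binary-searching $u$'s chain in $C_v$ for the smallest radius $\rho^{*}$ with $v \in \lvec{B}(u,\rho^{*})\cap C_v$ and correcting by the shift; this runs in $O(\log n)$ time.

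The main obstacle is the final space accounting: the intra-cluster SSSP contribution $\tilde{O}(n\sqrt{r})$, the per-cluster tables, and the per-pair identifier cost $\tilde{O}(n^2/r)$ must balance under a single choice of $r$ to deliver the claimed $\tilde{O}(n^{2-1/(2(h-2))})$. Without care the $\tilde{O}(n^2/r)$ term dominates and forces a weaker exponent; the fix leverages that within each cluster only $O(\sqrt{r})$ distinct ball transitions can occur for any fixed external $u$ (these transitions are mediated by the $O(\sqrt{r})$ boundary of $C_v$), and combined with a bound $O(r^{(h-1)/2})$ on the number of equivalence classes (obtained by applying \Cref{thm:diball-vc} to balls restricted to $\partial C$), lets each per-pair identifier be amortized to $\tilde{O}(1)$ bits per transition. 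Making this compression coexist with the $O(\log n)$-time binary search is the most delicate part of the argument.
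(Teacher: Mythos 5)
Your skeleton (an $r$-division, the $V(R)$-restriction of $\lvec{\mathcal{B}}(G)$ bounded via \Cref{thm:diball-vc}, \Cref{lm:VC-restriction} and \Cref{lm:SS}, nested balls per external vertex, binary search at query time) is the paper's approach, but two of your steps have genuine gaps. First, the query procedure as you describe it does not return $d_G(u\rightarrow v)$. If the chain stored for $(u,C_v)$ records only the $O(\sqrt{r})$ radii at which boundary vertices of $C_v$ are absorbed, then the smallest \emph{recorded} radius $\rho^*$ whose restriction contains $v$ can strictly overshoot $d_G(u\rightarrow v)$, because interior vertices of $C_v$ are absorbed at up to $r$ distinct radii lying between consecutive recorded ones (so your claim that only $O(\sqrt{r})$ ball transitions occur per external $u$ is false as stated). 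The paper's construction closes exactly this gap with its Step 2b: for \emph{every} restriction $Y\in\lvec{\mathcal{B}}_R$ and every $v\in V(R)$ it precomputes $d_G(Y\rightarrow v)=\min_{y\in Y}d_G(y\rightarrow v)$, and the query returns $r_i + d_G(Y_i\rightarrow v)$ for the last recorded ball $Y_i$ \emph{not} containing $v$. This table, of size $\tilde{O}(r\cdot r^{h-1})$ per cluster, has no analogue in your construction; the intra-cluster distances $d_C(b\rightarrow v)$ you store are never used in your query and cannot substitute for it within $O(\log n)$ time (a naive $\min_{b\in\partial C}\{d_G(u\rightarrow b)+d_C(b\rightarrow v)\}$ costs $\Omega(\sqrt{r})$ per query).

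Second, your space accounting hinges on compressing the per-(external vertex, cluster) information to a class identifier plus an additive shift, i.e., on a polynomial bound on the number of chains-up-to-shift equivalence classes. No such bound follows from what you cite: applying \Cref{thm:diball-vc} to balls restricted to $\partial C$ bounds the number of distinct \emph{sets} $B(u,\rho)\cap\partial C$ by $O(r^{(h-1)/2})$, not the number of nested chains together with their radius gaps, and bounding the latter up to a shift is essentially asking for a polynomial bound on distance patterns $v\mapsto d_G(u\rightarrow v)-d_G(u\rightarrow s_0)$ into a cluster -- precisely what the paper argues breaks down in digraphs, since the triangle inequality no longer controls these differences (this is why the paper introduces infinite patterns, which it then cannot use for distance decoding). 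The paper avoids any per-pair $O(1)$ compression: it simply stores, for each external $u$ and each cluster $R$, the list $L(u,R)$ of $O(|\partial R|)$ radii and restriction IDs (plus $d_G(u\rightarrow s)$ for $s\in\partial R$), paying $\tilde{O}(n^2/\sqrt{r})$ in total, and balances this against the per-cluster tables of size $\tilde{O}(nr^{h-3/2})$ by choosing $r=n^{1/(h-2)}$, which already yields the claimed $\tilde{O}(n^{2-\frac{1}{2(h-2)}})$ space with $O(\log n)$ query time. So the delicate compression you flag as the crux is neither achievable by the stated means nor needed for the bound.
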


We now turn to a negative result. We study set systems whose ground set is the set of edges in digraphs. While there could be many ways to define a set system of edges~\cite{KKRUW97}, the system of shortest path trees is of special interest to us: such a set system, if has bounded VC-dimension, could be used to compute the Wiener index in truly subquadratic time---resolving the problem we pose above---speed up exact diameter computation, construct exact distance oracles for digraphs with $O(1)$ query time, and potentially has many more applications. Unfortunately, we show that the set system does not have bounded VC dimension.  More formally, given a digraph $G = (V,E)$,  let $\tau_v$ be the shortest path tree rooted at $v$. In the construction of shortest path trees in $G$, ties are broken consistently. (If ties are not broken consistently, it is fairly easy to show that the set system of edges introduced below will not have bounded VC dimension.) We think of $\tau_v$ as a subset of the $E$, and define:
\begin{equation}\label{eq:edge-sys-def} 
 \lspa(G) = \{\tau_v: v\in V\}
\end{equation}

As our fourth main result, we show that the set system $(E, \lspa(G) )$ does not have bounded VC dimension even in unweighted planar digraphs.

\begin{restatable}{theorem}{LBEdge}\label{thm:diEdge-vc} For any constant integer $r \geq 1$, there exists an unweighted planar digraph $G = (V,E)$ and  a subset $X\subseteq E$ of size $r$ such that $X$ is shattered by $\lspa(G)$.
\end{restatable}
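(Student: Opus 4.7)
The plan is to construct, for each $r \ge 1$, a planar unweighted digraph $G_r$ with a distinguished set $X = \{e_1,\dots,e_r\}$ of $r$ edges, together with $2^r$ source vertices $\{v_S : S \subseteq [r]\}$, such that $\tau_{v_S} \cap X = \{e_i : i \in S\}$ for every subset $S$. As $S$ ranges over $2^{[r]}$, this exhibits every subset of $X$ as $X \cap \tau_v$ for some $v$, which is precisely shattering by $\lspa(G_r)$.

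The key building block will be, for each $i \in [r]$, a three-vertex gadget with vertices $u_i, u_i', w_i$ and two parallel in-edges $e_i = (u_i, w_i)$ and $e_i' = (u_i', w_i)$. I will arrange that $u_i, u_i'$ are the only in-neighbors of $w_i$ in the whole construction, so that $e_i \in \tau_v$ iff the shortest $v$-to-$w_i$ path ends at $u_i$ rather than at $u_i'$. Next, for each source $v_S$ and each $i$, I attach a directed path from $v_S$ into $u_i$ (if $i \in S$) or $u_i'$ (otherwise), and ensure this is the unique directed route from $v_S$ into $\{u_i, u_i'\}$. Because every gadget vertex has no outgoing edge apart from $u_i, u_i' \to w_i$, the gadgets are near-sinks: a source cannot ``escape'' a gadget back into the neighborhood of another source, so the $r$ choices decouple cleanly and indeed $\tau_{v_S} \cap X = \{e_i : i \in S\}$.

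The difficulty is realizing this structure planarly. A direct bipartite implementation with one edge $v_S \to u_i$ or $v_S \to u_i'$ per pair fails already for $r \ge 4$: the bipartite subgraph between $2^r$ sources and $2r$ u-vertices has $r \cdot 2^r$ edges, violating the bipartite planar bound $|E| \le 2|V| - 4$. My plan to circumvent this is a recursive construction. Start from an explicit planar $G_1$ with both sources on its outer face. Given planar $G_{r-1}$ with all sources on the outer face, form $G_r$ by placing two planar copies of $G_{r-1}$ on opposite sides of a newly inserted $e_r$-gadget, identifying the previous gadget triples $(u_i, u_i', w_i)$ for $i < r$ across the two copies (so $|X|$ grows by one rather than doubling), and connecting every source to $u_r$ (bit $r = 1$ copy) or $u_r'$ (bit $r = 0$ copy) along the outer face, subdividing the new edges with auxiliary vertices as needed to route around the existing embedding.

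The main obstacle will be the planarity bookkeeping in the recursive step: identifying the shared gadget vertices of the two copies without destroying planarity or removing any source from the outer face. The key idea I would lean on is that the shared gadget vertices can be pushed to a common boundary in the embedding, so each triple $(u_i, u_i', w_i)$ sits on a face shared by both copies of $G_{r-1}$; this is what allows the new $e_r$-gadget to be attached externally with its two halves routed to opposite sides, leaving all $2^r$ sources on the outer face for the next recursion step. A secondary but essential check is that sharing gadgets does not introduce an unintentionally shorter path from a source in one copy through the structure of the other copy; this is exactly what the sink-like property of the gadget vertices is designed to rule out, since a source path that enters a shared $u_i$ or $u_i'$ can only proceed to $w_i$ and stop.
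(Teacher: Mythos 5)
There is a fatal obstruction to your plan, and it is not the ``planarity bookkeeping'' you flag as the main obstacle but the specification of the gadget itself. You require that each source $v_S$ have a directed route to $u_i$ when $i\in S$ and to $u_i'$ when $i\notin S$, and that this be the \emph{unique} directed route from $v_S$ into $\{u_i,u_i'\}$; in particular $v_S$ reaches $u_i$ if and only if $i\in S$. But then the $r$-element vertex set $\{u_1,\dots,u_r\}$ is shattered by the out-reachability cones $\{u: d_G(v\rightarrow u)<\infty\}$, and in an unweighted digraph each such cone is a directed ball $\lvec{B}(v,n)\in\lvec{\mathcal{B}}(G)$. By \Cref{thm:diball-vc}, the system of directed balls in a planar (hence $K_5$-minor-free) digraph has VC dimension at most $4$, so no planar digraph — however you recurse, glue, or subdivide — can realize your exclusive-reachability specification once $r\geq 5$, which is exactly the regime the theorem requires. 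So the gap is not that the recursive identification of gadget triples is delicate (though it is: gluing two copies of $G_{r-1}$ along $3(r-1)$ shared vertices imposes cofaciality-type conditions you never establish); it is that the premise ``the $2^r$ subsets are distinguished by which gadget entries each source can reach at all'' is provably impossible for constant $r\geq 5$.

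What this tells you is that membership of $e_i$ in $\tau_v$ cannot be decided by reachability alone; it must be decided by \emph{length comparisons} between two routes that are both available to $v$. This is precisely how the paper's construction works: every apex $a_i$ can reach, for every $j$, both the head $v_j$ of $e_j$ (through the upper half of the drawing, avoiding $e_j$) and the tail $u_j$ (through the lower half, forcing $e_j$), and a geometric scale of weights ($M-2A_i$ on horizontal edges, $1$ versus $A_i$ on vertical edges, with $A_i=4A_{i-1}$) makes the bit string of $a_i$ determine which of the two routes is strictly shorter, coordinate by coordinate; the $2^r$ apexes are layered outward so each only attaches to the previous layer, which is what keeps the drawing planar, and integer weights are finally removed by subdividing edges. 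If you want to salvage your gadget idea, you would have to replace ``unique route into $\{u_i,u_i'\}$'' by ``strictly shorter route,'' at which point you face the same quantitative weight/length engineering and layered planar layout that the paper carries out, and the recursion you describe would have to be redesigned around it.
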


Lastly, we briefly mention two other directions which we do not explore in this paper as they are out of scope. The first direction is to explore the applications of our VC dimension results in solving graph-theoretic problems. There have been several works on applying the prior VC dimension results by Chepoi, Estellon, and Vaxes (\Cref{thm:ball-vc}), for example \cite{BT15,BC14,BCEJLMP21}, and by Li and Parter (\Cref{thm:LP19}), for example~\cite{JR23}, to understand structures of planar and minor-free graphs. We believe that our results will also be applicable in this direction. The second direction is to consider graphs beyond minor-free, such as graphs with polynomial expansion or nowhere dense graphs, as studied in the work by Ducoffe, Habib, and Viennot~\cite{DHV20}. As far as we can see,  our results could also be extensible to graphs with polynomial expansion and get algorithmic applications along the line of  Ducoffe, Habib, and Viennot~\cite{DHV20}. However, it seems to us that one has to work harder to be able to extend our results to nowhere dense graphs. 

\section{Preliminaries}\label{sec:prelim}

We use graphs to refer to \emph{undirected graphs}, while directed graphs will be called \emph{digraphs}. We reserve $V$ and $E$ for the vertex set and edge set of $G$, respectively. For any other graph $H$, we denote it vertex set by $V(H)$ and edge set by $E(H)$. We denote by $\pi(u,v,G)$ a shortest path between $u$ and $v$ in a graph $G$. If $G$ is a digraph, then we denote by $\pi(u\rightarrow v,G)$ the directed shortest path from $u$ to $v$. If the graph is clear from the context, we simply denote the shortest paths by $\pi(u,v)$ and $\pi(u\rightarrow v)$, respectively. 

The \emph{eccentricity} of a vertex $u$, denoted by $\ecc(u)$ in a graph $G$ is $\ecc(u) = \max_{v\in V}d_G(u,v)$. The diameter of $G$ is the maximum eccentricity: $\max_{u\in V}\ecc(u)$.  The Wiener index of a graph $G$ is defined to be the sum of all pairwise distances: $\frac{1}{2}\sum_{u\in V}\sum_{v\in V} d_G(u,v)$. The Wiener index, eccentricity, and diameter of digraphs are defined similarly, with $d_G(u\rightarrow v)$ being used in place of  $d_G(u,v)$.

We say that a subgraph $H$ of $G$ is \emph{induced} if every edge in $G$ between two vertices in $H$ also appears in $H$. We will use the \emph{$r$-division} of minor-free graphs in our algorithms. A \emph{cluster} is a \emph{connected, induced subgraph} of $G$. Let $C$ be a cluster of $G$. We say that a vertex $v\in C$ is a \emph{boundary vertex} if $v$ is adjacent to a vertex $u\in V\setminus V(C)$. We use $\partial C$ to denote the set of all boundary vertices of $C$.  An  \emph{$r$-division} of $G$ is a collection $\mathcal{R}$ of clusters $G$ such that every cluster $R\in \mathcal{R}$ has at most $r$ vertices. 

Our definition of $r$-division is somewhat non-standard in the sense that we do not have $O(\sqrt{|V(R)|})$ bound on the number of boundary vertices of each cluster $R$. It is called  \emph{$r$-clustering} in the paper of Wulff-Nilsen~\cite{WulffNilsen11}. Here we still call it an $r$-division as most of the intuition in the use of $r$-clustering comes from $r$-division.

Wulff-Nilsen~\cite{WulffNilsen11} showed that one can construct an $r$-division of any $K_h$-minor-free graphs such that the total number of boundary vertices, counted with multiplicity, is small. We note that in our applications, it is important that each cluster $R\in \mathcal{R}$ is a connected subgraph of $G$.  

\begin{lemma}[Wulff-Nilsen, Lemma 2~\cite{WulffNilsen11}]\label{lm:r-division} Let $G$ be a $K_h$-minor-free graphs with $n$ vertices, and $r\in [Ch^2\log n, n]$ for a sufficiently large constant $C$. For any fixed constant $\eps > 0$,  we can construct  in time $O(n^{1+\eps}\sqrt{r})$  an $r$-division, say $\mathcal{R}$, of $G$ such that (a) $\sum_{R\in \mathcal{R}}|\partial R| = \tilde{O}(nh/\sqrt{r})$, and (b) every cluster $R\in \mathcal{R}$ has $|V(R)|\leq r$ and $|\partial R| = \tilde{O}(h\sqrt{r})$.  Furthermore, the number of clusters in $\mathcal{R}$ is at most $\tilde{O}(h n/\sqrt{r})$.
\end{lemma}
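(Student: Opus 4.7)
The plan is to prove Lemma~\ref{lm:r-division} by recursively applying a balanced vertex separator theorem for $K_h$-minor-free graphs. The starting point is the known result that every $n$-vertex $K_h$-minor-free graph admits a vertex separator $S$ of size $\tilde{O}(h\sqrt{n})$ whose removal leaves components each of size at most $2n/3$, and that such a separator can be found in $O(n^{1+\eps})$ time (via, e.g., the Alon--Seymour--Thomas separator combined with algorithmic refinements such as those of Reed and Kawarabayashi--Reed; polylog factors are absorbed into $\tilde{O}(\cdot)$). I will also use the weighted variant of this theorem, in which a vertex weighting of total mass $W$ admits a separator of size $\tilde{O}(h\sqrt{W})$ that bisects the weight.

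First I would build a recursive decomposition: on the current piece compute a balanced vertex separator $S$, recurse on each connected component of the piece minus $S$, and stop once a piece has at most $r$ vertices. To obtain a connected induced cluster for each recursion leaf $L$, I would define the cluster $R \in \mathcal{R}$ as the subgraph of $G$ induced on $V(L)$ together with every vertex of an ancestor separator that has a $G$-edge into $V(L)$. These added vertices are exactly the boundary $\partial R$, and $R$ is connected because $L$ is connected and every added boundary vertex attaches to $L$ by an edge of $G$. Because $R$ is defined as a $G$-induced subgraph, the induced property is automatic.

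For property~(a), the total-boundary recurrence $B(n) = B(n_1) + B(n_2) + \tilde{O}(h\sqrt{n})$ with $n_i \leq 2n/3$ and $n_1 + n_2 \leq n$, terminating at $n \leq r$, solves by standard induction to $B(n) = \tilde{O}(hn/\sqrt{r})$. For property~(b), however, the final separator producing a leaf has size only $\tilde{O}(h\sqrt{r})$, but ancestral separators may dump many additional vertices into $\partial R$. This is the main technical obstacle, and I would address it by interleaving the vertex-balanced recursion with a \emph{boundary-balanced} repair phase: whenever a candidate cluster $R$ has $|\partial R|$ exceeding the target $\tilde{O}(h\sqrt{r})$, split $R$ further using a weighted separator that bisects the current boundary set (treating boundary vertices as unit weight and interior vertices as weightless). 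Each such cut halves the boundary weight of the piece, so only $O(\log n)$ additional rounds per cluster are needed; this adds only a polylog overhead to $B$, preserving~(a) while delivering~(b).

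For the running time, each separator call on a piece of size $n'$ costs $O((n')^{1+\eps})$; summing the vertex-balanced phase over the recursion tree gives $O(n^{1+\eps})$, since $n^{1+\eps}$ is superadditive so the top level dominates geometrically. The boundary-balancing phase operates on pieces of size $O(r)$ with $O(\log n)$ rounds per candidate cluster, contributing $\tilde{O}(r^{1+\eps})$ work per cluster; summed over the $\tilde{O}(n/\sqrt{r})$ candidate clusters, this contributes $\tilde{O}(n\, r^{1/2+\eps})$ work, which fits inside the claimed $O(n^{1+\eps}\sqrt{r})$. Finally, the cluster-count bound $\tilde{O}(hn/\sqrt{r})$ follows because the interiors $V(R)\setminus \partial R$ partition $V$, each interior is nonempty, and each cluster has at most $r$ vertices; combined with the total-boundary bound from~(a), the number of clusters is at most $n/1 + B(n) / 1 = \tilde{O}(hn/\sqrt{r})$ after normalizing by the minimum interior size guaranteed by the recursion's stopping rule.
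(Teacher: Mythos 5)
This lemma is not actually proved in the paper: it is imported verbatim from Wulff-Nilsen's $r$-clustering result (Lemma 2 of~\cite{WulffNilsen11}), so the only question is whether your sketch works as a standalone proof. Your overall plan -- recursive balanced separators for $K_h$-minor-free graphs, followed by a Frederickson-style phase that re-splits clusters whose boundary is too large -- is the right family of ideas and is essentially the route the cited construction takes, but the way you form clusters leaves a genuine gap.

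The gap is in how leaves of the recursion become clusters. You stop as soon as a piece has at most $r$ vertices and declare each leaf component, together with the adjacent ancestor-separator vertices, to be a cluster. Your recurrence $B(n)=B(n_1)+B(n_2)+\tilde{O}(h\sqrt{n})$ only counts \emph{distinct} separator vertices, whereas property (a) counts boundary vertices \emph{with multiplicity}: a separator vertex contributes to $\sum_{R}|\partial R|$ once for every cluster it gets attached to, and nothing in your construction bounds that multiplicity or the number of leaves. Concretely, on the star $K_{1,n-1}$ (planar, even $K_4$-minor-free) the first balanced separator is the center, the remaining pieces are $n-1$ singletons, and your construction outputs $n-1$ clusters, each with the center as its boundary; both $\sum_R |\partial R|$ and the number of clusters are then $\Theta(n)$, far exceeding $\tilde{O}(hn/\sqrt{r})$ since $r \geq Ch^2\log n$. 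The missing ingredient -- present in Frederickson's planar construction and in Wulff-Nilsen's $r$-clustering -- is to keep separator vertices inside the recursive subproblems (so multiplicity is charged to subproblem sizes) and, crucially, to \emph{merge} many small components sharing boundary into connected clusters of size $\Theta(r)$; that grouping step is where much of the real work of the lemma lies, and your boundary-rebalancing repair phase, while correct in spirit for (b), presupposes that the total boundary with multiplicity is already under control. Relatedly, your closing cluster-count argument (``$n/1 + B(n)/1$'') is a non sequitur: your stopping rule gives no lower bound on interior sizes, so the count must instead come from the merging step (e.g., by tying the number of clusters to the total boundary), not from the interiors being nonempty.
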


One could obtain an $r$-division with a number of clusters being $\tilde{O}(h n/r)$ with a larger running time. For us, the weaker bound in \Cref{lm:r-division} suffices. 

In many of our results, we will use the following well-known Sauer–Shelah Lemma, which gives a polynomial upper bound on the size of a VC set system.

\begin{lemma}[Sauer–Shelah Lemma]\label{lm:SS}  Let $\mathcal{F}$ be a family of subsets of a ground set with $n$ elements. If  VC-dimension of $\mathcal{F}$ is at most $k$, then $|\mathcal{F}| = O(n^{k})$.
\end{lemma}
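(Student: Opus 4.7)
The plan is to prove the stronger combinatorial bound $|\mathcal{F}| \le \sum_{i=0}^{k}\binom{n}{i}$, from which $|\mathcal{F}| = O(n^{k})$ is immediate for fixed $k$. I would proceed by double induction on $n+k$. The base cases are straightforward: when $n=0$ the family contains at most one set, and when $k \ge n$ the right-hand side equals $2^{n}$, which already upper bounds $|\mathcal{F}|$.

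For the inductive step, fix an arbitrary element $x \in U$ and define two auxiliary families on the ground set $U \setminus \{x\}$:
\[
\mathcal{F}_1 = \{F \setminus \{x\} : F \in \mathcal{F}\}, \qquad \mathcal{F}_2 = \{G \subseteq U \setminus \{x\} : G \in \mathcal{F} \text{ and } G \cup \{x\} \in \mathcal{F}\}.
\]
A direct counting argument yields $|\mathcal{F}| = |\mathcal{F}_1| + |\mathcal{F}_2|$: writing $\mathcal{F}$ as the disjoint union of sets not containing $x$ and sets containing $x$, the projection onto $\mathcal{F}_1$ merges a pair $\{G, G\cup\{x\}\}$ exactly when that pair contributes one extra set to $\mathcal{F}_2$.

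The core of the proof is to bound the VC-dimensions of the two pieces. Clearly $\mathcal{F}_1$ has VC-dimension at most $k$, since any $Y \subseteq U \setminus \{x\}$ shattered by $\mathcal{F}_1$ is already shattered by $\mathcal{F}$ (each witness in $\mathcal{F}_1$ lifts to a witness in $\mathcal{F}$ with the same trace on $Y$). The more delicate bound is that $\mathcal{F}_2$ has VC-dimension at most $k-1$: if $Y \subseteq U \setminus \{x\}$ is shattered by $\mathcal{F}_2$, then for every $Z \subseteq Y$ there exists $G \in \mathcal{F}_2$ with $G \cap Y = Z$, and by the very definition of $\mathcal{F}_2$ both $G$ and $G \cup \{x\}$ lie in $\mathcal{F}$ with traces $Z$ and $Z \cup \{x\}$ on $Y \cup \{x\}$; hence $Y \cup \{x\}$ is shattered by $\mathcal{F}$, forcing $|Y|+1 \le k$.

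Applying the inductive hypothesis to each family and combining with Pascal's identity $\binom{n-1}{i-1} + \binom{n-1}{i} = \binom{n}{i}$ gives
\[
|\mathcal{F}| \;\le\; \sum_{i=0}^{k}\binom{n-1}{i} + \sum_{i=0}^{k-1}\binom{n-1}{i} \;=\; \sum_{i=0}^{k}\binom{n}{i},
\]
closing the induction. The only step that genuinely requires attention is the VC bound on $\mathcal{F}_2$, since it crucially exploits the fact that membership of a set $G$ in $\mathcal{F}_2$ simultaneously records two distinct members $G$ and $G \cup \{x\}$ of the original family $\mathcal{F}$; I would double-check this bookkeeping when writing up the full proof, but otherwise the argument is entirely routine.
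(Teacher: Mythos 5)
Your argument is correct: it is the classical double-counting induction for Sauer--Shelah, with the two trace families $\mathcal{F}_1,\mathcal{F}_2$, the identity $|\mathcal{F}|=|\mathcal{F}_1|+|\mathcal{F}_2|$, and the key observation that shattering of $Y$ by $\mathcal{F}_2$ lifts to shattering of $Y\cup\{x\}$ by $\mathcal{F}$; together with Pascal's identity this gives the sharp bound $\sum_{i=0}^{k}\binom{n}{i}$, which indeed implies the $O(n^{k})$ form stated in the paper. The paper itself offers no proof (it invokes the lemma as a well-known result), so there is nothing to diverge from; your route is the standard textbook one and it is sound. The only bookkeeping worth adding in a write-up is the degenerate case $k=0$ in the inductive step: the bound ``VC-dimension of $\mathcal{F}_2$ at most $k-1=-1$'' should be read as forcing $\mathcal{F}_2=\emptyset$ (any nonempty family shatters the empty set), so the empty sum $\sum_{i=0}^{-1}\binom{n-1}{i}=0$ still applies and the induction closes.
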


A distance oracle for a graph $G$ is a compact data structure that given any two vertices $u$ and $v$, returns $d_G(u,v)$ quickly. The query time is the maximum time it takes to answer a query over all pairs of vertices. There is often a trade-off between the space of the oracle and the query time.

\section{VC Dimension of Undirected Graphs and Applications}

\subsection{VC dimension of $\wlp_{G,M}$}\label{subsec:Vc-dim-LPhat}

In this section, we fix $G = (V,E)$ to be an undirected $K_h$-minor-free graph. We first prove \Cref{thm:LW-undirected}, which we restate below.

\LWUndir*

Our proof is by contradiction.  Suppose that there is a set $Y$ of size $h$ that is shattered by  $\wlp_{G,M}$. W.l.o.g., we assume that $Y = \{(1,\Delta_{1}), \ldots, (h,\Delta_{h})\}$. We first observe that:  

\begin{observation}\label{obs:s-diff} $s_{i} \not= s_{j}$ for any $1\leq i\not=j \leq h$.
\end{observation}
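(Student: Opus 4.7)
The plan is to argue by contradiction, exploiting the monotonicity of the defining inequality of $\widehat{X}_v$ with respect to $\Delta$. Suppose that $s_i = s_j$ for some $1\leq i\neq j\leq h$. Then for \emph{every} vertex $v\in V$ we have $d_G(v,s_i) - d_G(v,s_0) = d_G(v,s_j) - d_G(v,s_0)$, so the question of whether $(i,\Delta_i)\in \widehat{X}_v$ and whether $(j,\Delta_j)\in \widehat{X}_v$ depends only on comparing this common value against $\Delta_i$ and $\Delta_j$ respectively. Without loss of generality assume $\Delta_i \leq \Delta_j$. Then for every $v$, $(i,\Delta_i)\in \widehat{X}_v$ forces $d_G(v,s_j)-d_G(v,s_0)\leq \Delta_i\leq \Delta_j$, hence $(j,\Delta_j)\in \widehat{X}_v$.

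The next step is to contradict this implication using the shattering hypothesis. Since $Y$ is shattered by $\wlp_{G,M}(S)$, in particular there exists $v^\star\in V$ with $\widehat{X}_{v^\star}\cap Y = \{(i,\Delta_i)\}$; that is, $(i,\Delta_i)\in \widehat{X}_{v^\star}$ but $(j,\Delta_j)\notin \widehat{X}_{v^\star}$. This directly contradicts the implication obtained in the previous paragraph, completing the proof.

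There is essentially no obstacle: the observation is a one-line consequence of the fact that the predicate defining $\widehat{X}_v$ is monotone in $\Delta$ when the ``vertex coordinate'' of the pair is fixed. The only minor thing to keep track of is the WLOG reordering so that $\Delta_i\leq \Delta_j$, which is harmless because the roles of $i$ and $j$ are symmetric in the hypothesis $s_i=s_j$.
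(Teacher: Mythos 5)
Your proof is correct and follows essentially the same argument as the paper: if $s_i=s_j$, monotonicity of the defining inequality in $\Delta$ makes membership of $(i,\Delta_i)$ imply membership of $(j,\Delta_j)$ (after the WLOG $\Delta_i\leq\Delta_j$), which contradicts the shattering of $Y$ since the singleton $\{(i,\Delta_i)\}$ must be realized as some $\widehat{X}_{v}\cap Y$. Your use of a non-strict inequality even handles the case $\Delta_i=\Delta_j$ slightly more cleanly than the paper's strict WLOG.
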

\begin{proof} Suppose otherwise, that $s_{i} = s_{j}$. W.l.o.g, we assume that $\Delta_{i} < \Delta_{j}$. This means if $(s_{i},\Delta_{i}) \in \widehat{X}_v$ for some vertex $v\in V$, then $(s_{i},\Delta_{j}) \in \widehat{X}_v$, since $d_G(v,s_{i})\leq d_G(v,s_{0}) + \Delta_{i}$ implies that $d_G(v,s_{i})\leq d_G(v,s_{0}) + \Delta_{j}$. However, since $\wlp_{G,M}$ shatters $Y$, by  definition of shattering, there exists a set $\widehat{X}_v$ containing $(s_{i},\Delta_{i})$ but not $(s_{i},\Delta_{j})$, a contradiction.
\end{proof}

For every two elements $(s_{i},\Delta_{i})$ and $(s_{j},\Delta_{j})$ with $i\not=j$ in $Y$, let $v_{ij}$ be a vertex such that $\{(s_{i},\Delta_{i}), (s_{j},\Delta_{j})\} = \widehat{X}_{v_{ij}}\cap Y$.

Let $\widehat{G}$ be a graph obtained from $G$ by adding (tiny) perturbed weights to edges of $G$ in such a way that (i) shortest paths in $\widehat{G}$ between vertices are unique and (ii) every shortest path in $\widehat{G}$ is also a shortest path in $G$. (Some shortest path in $G$ may no longer be a shortest path in  $\widehat{G}$.) We can think of $\widehat{G}$ as providing a tie-breaking scheme for shortest paths in $G$. The perturbation exists by the Isolation Lemma~\cite{VV86}.

\begin{definition}\label{def:tij} We define vertex $t_{ij}$ to be the vertex in $\pi(v_{ij},s_{i},\widehat{G})\cup \pi(v_{ij}, s_{j},\widehat{G})$ such that:
	\begin{itemize}
		\item[(a)] $d_G(v_{ij}, t_{ij}) + d_G(t_{ij},s_{i}) \leq \Delta_{i} + d_G(v_{ij}, s_0)$ and $d_G(v_{ij}, t_{ij}) + d_G(t_{ij},s_{j}) \leq \Delta_{j} + d_G(v_{ij}, s_0)$.
		\item[(b)]  the sum of distance $d_{\widehat{G}}(t_{ij},s_{i})  + d_{\widehat{G}}(t_{ij},s_{j}) $ is minimum.
	\end{itemize}
\end{definition}

Note that the distances in Item (a) of \Cref{def:tij} are w.r.t. graph $G$ while the distances in Item (b) are w.r.t. $\widehat{G}$. By the definition of $\widehat{G}$,  $\pi(s_{i},v_{ij},\widehat{G})$ is also a shortest path in $G$; sometimes we abuse notation by using $\pi(s_{i},v_{ij},\widehat{G})$ to refer to its corresponding shortest path in $G$.  We remark that $t_{ij}$ exists since $v_{ij}$ is a possible choice for $t_{ij}$ satisfying (a). A good, but not accurate,  interpretation of $t_{ij}$ to keep in mind is that when the two shortest paths $\pi(s_{i},v_{ij},\widehat{G})$ and $\pi(s_{i},v_{ij},\widehat{G})$ shares the same vertex other than $v_{ij}$, then $t_{ij}$ is the common vertex furthest from $v_{ij}$; this would be the case if we restrict $t_{ij}$ to be in $\pi(v_{ij},s_{i},\widehat{G})\cap \pi(v_{ij}, s_{j},\widehat{G})$ instead of $\pi(v_{ij},s_{i},\widehat{G})\cup \pi(v_{ij}, s_{j},\widehat{G})$  as in \Cref{def:tij}. Indeed, the role of $t_{ij}$ in the proof is subtler than just being the furthest common vertex.

\begin{figure}[!htb]
	\centering{\includegraphics[width=.8\textwidth]{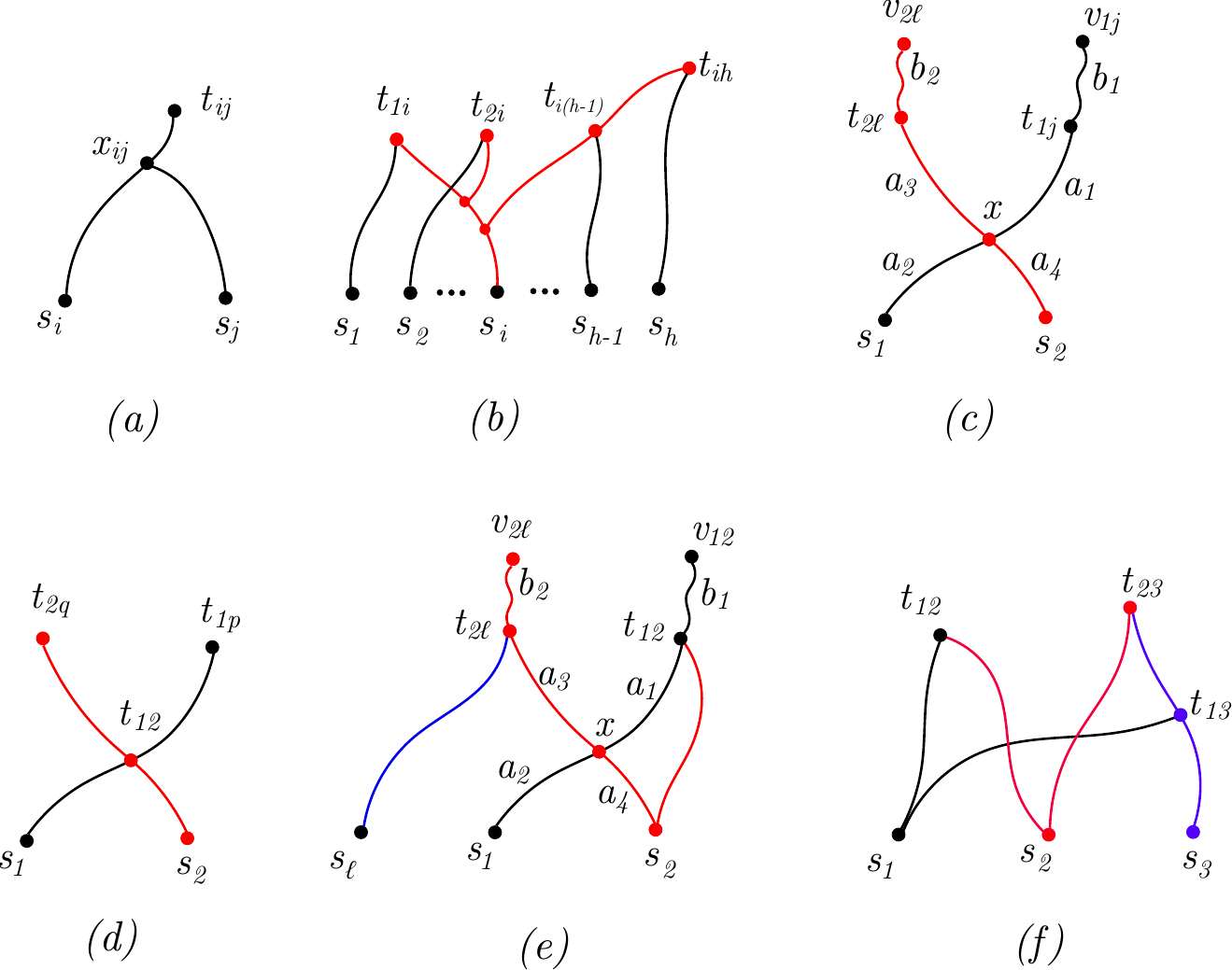}}
	\caption{(a)  $x_{ij} \in (\pi(s_{i},t_{ij},G)\cap \pi(s_{j},t_{ij},G))\setminus \{t_{ij}\}$; (b) the bunch of $s_{i}$ contains all red paths, $t_{i(h-1)}$ is an internal vertex of $\bunch(s_i)$ but an endpoint of $\bunch(s_{h-1})$; (c) $x\in \pi(s_{1},t_{1j},G)\cap  \pi(s_{2},t_{2k},G)$; (d) Illustration for the proof that $t_{12}$  must be an endpoint; (e) $\{x\}\in  \pi(s_{1},t_{12},G) \cap \pi(s_{2},t_{2\ell},G)$; (f) A $K_3$ minor constructed from three bunches  $\bunch(s_1)$,  $\bunch(s_2)$, and  $\bunch(s_3)$; $t_{12}$ will be contracted to $s_{1}$, $t_{23}$ is contracted to $s_{2}$, and $t_{13}$ is contracted to $s_{3}$.  
	}\label{fig:VCGen}
\end{figure}

\begin{claim}\label{obs:disjoint}
	$\pi(s_{i},t_{ij},G)$ and $\pi(s_{j},t_{ij},G)$ are internally disjoint.
\end{claim}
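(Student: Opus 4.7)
My plan is a minimality argument: assume for contradiction that $\pi(s_i,t_{ij},G)$ and $\pi(s_j,t_{ij},G)$ share an internal vertex $x \neq t_{ij}$, and then show that $x$ could be chosen in place of $t_{ij}$ in Definition~1 while achieving a strictly smaller value of the objective in (b). I will interpret $\pi(s_i,t_{ij},G)$ via the usual abuse of notation: it is the unique shortest $s_i$-to-$t_{ij}$ path in $\widehat{G}$, which is also a shortest path in $G$; similarly for $\pi(s_j,t_{ij},G)$. With this convention, $\pi(s_i,t_{ij},\widehat{G})$ is a subpath of whichever of $\pi(v_{ij},s_i,\widehat{G})$ or $\pi(v_{ij},s_j,\widehat{G})$ contains $t_{ij}$.

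\textbf{Step 1 (placement).} Since $t_{ij}\in\pi(v_{ij},s_i,\widehat{G})\cup\pi(v_{ij},s_j,\widehat{G})$, assume without loss of generality that $t_{ij}\in\pi(v_{ij},s_i,\widehat{G})$. Then $\pi(s_i,t_{ij},\widehat{G})$ is the subpath of $\pi(v_{ij},s_i,\widehat{G})$ from $s_i$ to $t_{ij}$, so $x\in\pi(v_{ij},s_i,\widehat{G})$ and hence $x$ lies in the ambient set required by Definition~1.

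\textbf{Step 2 (verifying condition (a) for $x$).} Because $x$ lies on the $G$-shortest path from $s_i$ to $t_{ij}$, we have $d_G(x,s_i)+d_G(x,t_{ij})=d_G(s_i,t_{ij})$. Combining this with the triangle inequality $d_G(v_{ij},x)\le d_G(v_{ij},t_{ij})+d_G(t_{ij},x)$ gives
\[
d_G(v_{ij},x)+d_G(x,s_i)\ \le\ d_G(v_{ij},t_{ij})+d_G(t_{ij},s_i)\ \le\ \Delta_i+d_G(v_{ij},s_0),
\]
where the last inequality is condition (a) for $t_{ij}$. The same computation, using $x\in\pi(s_j,t_{ij},G)$, yields $d_G(v_{ij},x)+d_G(x,s_j)\le \Delta_j+d_G(v_{ij},s_0)$. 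So $x$ satisfies condition (a).

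\textbf{Step 3 (strict improvement of (b)).} Because $\widehat{G}$ breaks ties consistently, $x$ lies on the unique shortest paths $\pi(s_i,t_{ij},\widehat{G})$ and $\pi(s_j,t_{ij},\widehat{G})$, so $d_{\widehat{G}}(x,s_i)=d_{\widehat{G}}(t_{ij},s_i)-d_{\widehat{G}}(x,t_{ij})$ and analogously for $s_j$. Summing,
\[
d_{\widehat{G}}(x,s_i)+d_{\widehat{G}}(x,s_j)\ =\ d_{\widehat{G}}(t_{ij},s_i)+d_{\widehat{G}}(t_{ij},s_j)-2\,d_{\widehat{G}}(x,t_{ij})\ <\ d_{\widehat{G}}(t_{ij},s_i)+d_{\widehat{G}}(t_{ij},s_j),
\]
since $x\neq t_{ij}$ forces $d_{\widehat{G}}(x,t_{ij})>0$. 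This contradicts the minimality in condition (b), proving the claim.

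The only delicate point is juggling the two metrics: condition (a) uses $G$-distances while condition (b) uses $\widehat{G}$-distances. The argument works because $\widehat{G}$ was chosen so that every $\widehat{G}$-shortest path is simultaneously a $G$-shortest path; this lets Step~2 use the $G$-distance additivity along $\pi(s_i,t_{ij},\widehat{G})$ while Step~3 uses the $\widehat{G}$-distance additivity along the same path. I do not expect any genuine obstacle here beyond bookkeeping.
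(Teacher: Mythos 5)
Your proof is correct and follows essentially the same route as the paper's: assume a common internal vertex $x$, use additivity along the ($\widehat{G}$-unique, hence also $G$-shortest) path plus the triangle inequality to show $x$ satisfies condition (a) of the definition of $t_{ij}$, and then show $x$ strictly decreases the $\widehat{G}$-distance sum in condition (b), a contradiction. Your Step 1, checking that $x$ actually lies in the ambient set $\pi(v_{ij},s_i,\widehat{G})\cup\pi(v_{ij},s_j,\widehat{G})$, is a detail the paper leaves implicit, and it is handled correctly.
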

\begin{proof}
	Suppose otherwise; there would be a vertex $x_{ij} \in (\pi(v_i,t_{ij},G)\cap \pi(v_j,t_{ij},G))\setminus \{t_{ij}\}$; see \Cref{fig:VCGen}(a). Then, we have:
	\begin{equation*}
		\begin{split}
			d_G(v_{ij}, x_{ij}) + d_G(x_{ij},s_{i}) &\leq d_G(v_{ij}, t_{ij}) + d_G(t_{ij},x_{ij}) + d_G(x_{ij},s_{i}) \\
			&= d_G(v_{ij}, t_{ij}) + d_G(t_{ij},s_{i}) \\
			&\leq  \Delta_{i} + d_G(v_{ij}, s_0) \qquad \mbox{(by Item (a) in \Cref{def:tij})}
		\end{split}
	\end{equation*}
	By the same argument, we have:
	\begin{equation*}
		d_G(v_{ij}, x_{ij}) + d_G(x_{ij},s_{j})\leq \Delta_{j} + d_G(v_{ij}, s_0).
	\end{equation*}
	which means $x_{ij}$ satisfies Item (a) in \Cref{def:tij}. Furthermore, as $x_{ij} \in (\pi(v_i,t_{ij},G)\cap \pi(v_j,t_{ij},G))\setminus \{t_{ij}\}$, we have:
	\begin{equation*}
		\begin{split}
			d_{\widehat{G}}(x_{ij},s_{i}) +  d_{\widehat{G}}(x_{ij},s_{j}) &=  d_{\widehat{G}}(t_{ij},s_{i}) - d_{\widehat{G}}(t_{ij},x_{ij}) +  d_{\widehat{G}}(t_{ij},s_{j})- d_{\widehat{G}}(t_{ij},x_{ij}) \\
			&< d_{\widehat{G}}(t_{ij},s_{i}) +  d_{\widehat{G}}(t_{ij},s_{j})
		\end{split}
	\end{equation*}
	contradicting the minimality of $t_{ij}$ by Item (b) in \Cref{def:tij}.
\end{proof}

We define the \emph{bunch of each vertex} $s_{i}$ (see \Cref{fig:VCGen}(b)):

\begin{equation}\label{eq:bunch2}
	\bunch(s_{i}) = \cup_{j\not=i} \pi(s_{i},t_{ij},\widehat{G})
\end{equation}

We say that $t_{ij}$ is an \emph{endpoint} of $\bunch(s_{i})$ if it has degree 1 in the subgraph $\bunch(s_{i})$. Otherwise, we say that $t_{ij}$ is an internal vertex of  $\bunch(s_{i})$. (One case where  $t_{ij}$ is not an endpoint is when the path $\pi(s_{i},t_{ij},\widehat{G})$ is a subpath of $\pi(s_{i},t_{ij'},\widehat{G})$ of another vertex $t_{ij'}$.)

\begin{lemma}\label{lm:bunch-disjiont2} For every $a\not=b$, $\bunch(s_{a})\cap \bunch(s_{b}) = \{t_{ab}\}$. Furthermore, $t_{ab}$ is either an endpoint of $\bunch(s_{a})$, or an endpoint of $\bunch(s_{b})$, or both.
\end{lemma}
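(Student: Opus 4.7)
The plan is to prove both parts by contradiction, leveraging the tie-breaking uniqueness of shortest paths in $\widehat{G}$ (guaranteed by the Isolation Lemma perturbation) together with the minimality condition in Item (b) of \Cref{def:tij}. The inclusion $t_{ab}\in\bunch(s_a)\cap\bunch(s_b)$ is immediate from the definition of the bunches: $t_{ab}$ is the endpoint of $\pi(s_a,t_{ab},\widehat{G})\subseteq\bunch(s_a)$ and, symmetrically, of a subpath in $\bunch(s_b)$.

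For the reverse inclusion, I will assume there exists $x\in\bunch(s_a)\cap\bunch(s_b)$ with $x\neq t_{ab}$. By definition of the bunches there are indices $j\neq a$ and $\ell\neq b$ with $x\in\pi(s_a,t_{aj},\widehat{G})$ and $x\in\pi(s_b,t_{b\ell},\widehat{G})$. Uniqueness in $\widehat{G}$ implies the initial segments are themselves the unique shortest paths $\pi(s_a,x,\widehat{G})$ and $\pi(s_b,x,\widehat{G})$, and hence also shortest paths in $G$. I will then stitch together a new candidate for $t_{ab}$: essentially, route from $v_{ab}$ to the closer of $s_a,s_b$ through $t_{ab}$ as before, then replace the tail through $t_{ab}$ by the (strictly shorter, since paths in $\widehat{G}$ are unique and $x\neq t_{ab}$) routes through $x$. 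Two things must be checked: (i) this candidate lies on $\pi(v_{ab},s_a,\widehat{G})\cup\pi(v_{ab},s_b,\widehat{G})$ so that it is admissible by \Cref{def:tij}; and (ii) it still satisfies the inequality in (a), which I will deduce by combining the known (a)-bounds for $t_{aj}$ at $v_{aj}$ and for $t_{b\ell}$ at $v_{b\ell}$ with a triangle-inequality chase in $G$, noting that $\widehat{G}$-paths are $G$-shortest. The total $d_{\widehat{G}}(\cdot,s_a)+d_{\widehat{G}}(\cdot,s_b)$ strictly drops, contradicting (b).

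For the second part, suppose for contradiction that $t_{ab}$ has degree at least $2$ in each of $\bunch(s_a)$ and $\bunch(s_b)$. Since $\bunch(s_a)$ is a subtree of the shortest-path tree from $s_a$ in $\widehat{G}$, being internal means $t_{ab}$ has a neighbor $z_a$ in $\bunch(s_a)$ with $d_{\widehat{G}}(s_a,z_a)>d_{\widehat{G}}(s_a,t_{ab})$, lying on some $\pi(s_a,t_{aj'},\widehat{G})$ that continues strictly past $t_{ab}$; symmetrically there is a $z_b$ in $\bunch(s_b)$. My plan is to use these extensions to produce a second intersection point of the two bunches distinct from $t_{ab}$, which then contradicts the first part of the lemma already established. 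Concretely, I intend to trace $\pi(s_a,t_{aj'},\widehat{G})$ and $\pi(s_b,t_{b\ell'},\widehat{G})$ past $t_{ab}$ and use \Cref{obs:disjoint} together with the fact that these extensions share the common ``pivot'' $t_{ab}$ to locate a vertex lying in both bunches other than $t_{ab}$.

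The main obstacle I anticipate is bridging the local definition of $t_{ab}$ (anchored at $v_{ab}$) with the global tree structure of the bunches, which depends on all of the other $v_{aj}$ and $v_{bk}$: the candidate produced in the first part is most naturally parameterized by $x$, $v_{aj}$, $v_{b\ell}$, yet admissibility in \Cref{def:tij} is phrased in terms of $v_{ab}$. Overcoming this requires exploiting that all paths we manipulate are simultaneously shortest in both $\widehat{G}$ and $G$, so that any detour created by replacing $t_{ab}$ with $x$ cannot shorten $G$-distances, while the $\widehat{G}$-sums are strictly decreased by uniqueness — this is the quantitative gap that yields the contradiction against (b).
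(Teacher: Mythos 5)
There is a genuine gap, and it is the central ingredient of the lemma: your proposal never uses the \emph{shattering} property. In the paper's proof, the engine is that since $\widehat{X}_{v_{aj}}\cap Y=\{(s_a,\Delta_a),(s_j,\Delta_j)\}$, one has $(s_b,\Delta_b)\notin\widehat{X}_{v_{aj}}$ whenever $j\neq b$, which yields the strict inequality $d_G(v_{aj},s_b)>\Delta_b+d_G(v_{aj},s_0)$ (and symmetrically $d_G(v_{b\ell},s_a)>\Delta_a+d_G(v_{b\ell},s_0)$). These exclusion inequalities, combined with the Item~(a) bounds at $v_{aj}$ and $v_{b\ell}$, give a direct numerical contradiction at any putative intersection point in the main case $\{j,\ell\}\cap\{a,b\}=\emptyset$ (\Cref{clm:jk-12-dist}); no new candidate for $t_{ab}$ is constructed and the minimality condition~(b) is not used at all there. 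Your plan instead funnels every case through ``stitch a better candidate and contradict~(b),'' and this cannot be carried out in that main case: the intersection point $x$ sits on paths anchored at $v_{aj}$ and $v_{b\ell}$, so there is no way to certify Item~(a) for any candidate with distances measured from $v_{ab}$, and no reason the candidate lies on $\pi(v_{ab},s_a,\widehat{G})\cup\pi(v_{ab},s_b,\widehat{G})$ as \Cref{def:tij} requires --- exactly the obstacle you flag, but ``$\widehat{G}$-paths are $G$-shortest'' does not bridge it; only the shattering exclusions do. Even in the special case where one of the paths is $\pi(s_a,t_{ab},\widehat{G})$ (where the paper does use minimality of $t_{ab}$, as in \Cref{clm:jk-12-b2}), verifying Item~(a) for the new candidate still requires the exclusion $(s_a,\Delta_a)\notin\widehat{X}_{v_{b\ell}}$, which is absent from your argument.

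Your endpoint argument also misses the simpler and correct mechanism. If $t_{ab}$ is internal to both bunches, then $\pi(s_a,t_{ab},\widehat{G})$ and $\pi(s_b,t_{ab},\widehat{G})$ extend to paths $\pi(s_a,t_{ap},\widehat{G})$ and $\pi(s_b,t_{bq},\widehat{G})$ with $\{p,q\}\cap\{a,b\}=\emptyset$, and then $t_{ab}$ \emph{itself} is a common vertex of two paths that the generic-case claim proves disjoint --- that is the contradiction. Hunting for a ``second intersection point distinct from $t_{ab}$'' is both unnecessary and unsupported: two subtrees can meet in a single vertex that is internal to both, so the first part of the lemma alone does not force such a point to exist, and nothing in \Cref{obs:disjoint} produces one.
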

\begin{proof} By the symmetry of $s_{i}$, we prove the lemma for $a = 1,b = 2$. Let $\pi(s_{1},t_{1j},\widehat{G})$ and $\pi(s_{2},t_{2\ell},\widehat{G})$ be paths in  $\bunch(s_{1})$ and $\bunch(s_{2})$, respectively.  
	
	\begin{claim}\label{clm:jk-12-dist} If $\{j,\ell\}\cap\{1,2\} = \emptyset$, then $\pi(s_{1},t_{1j},\widehat{G})\cap  \pi(s_{2},t_{2\ell},\widehat{G}) = \emptyset$.
	\end{claim}
	\begin{proof}
		Suppose otherwise, there exists $x\in \pi(s_{1},t_{1j},\widehat{G})\cap  \pi(s_{2},t_{2k},\widehat{G})$; see \Cref{fig:VCGen}(c). Let:
		\begin{equation*}
			\begin{split}
				a_1 = d_G(t_{1j},x)&\qquad a_2 = d_G(x,s_{1})\\
				a_3 = d_G(t_{2\ell},x)&\qquad a_4 = d_G(x,s_{2})\\
				b_1 = d_G(v_{1j}, t_{1j}) &\qquad b_2 = d_G(v_{2\ell}, t_{2\ell}) 
			\end{split}
		\end{equation*}
		
		By definition of $t_{1j}$ (Item (a) in \Cref{def:tij}), it holds that $b_1 + a_1+a_2\leq \Delta_{1} + d_G(v_{1j}, s_{0})$. For the same reason, we have $b_2 + a_3+a_4\leq \Delta_{2} + d_G(v_{2k}, s_0)$.  It follows that:
		\begin{equation}\label{eq:a1234-sum2}
			a_1 + a_2 + a_3+ a_4 + b_1 + b_2\leq   \Delta_{1} + \Delta_{2} +  d_G(v_{1j}, s_0) + d_G(v_{2\ell}, s_0)
		\end{equation}
		On the other hand, $(s_{2}, \Delta_{2})\not\in  \widehat{X}_{v_{1j}}$. Thus, $d_{G}(v_{1j},s_{2}) > \Delta_{2} + d_{G}(v_{1j},s_0) $. By the triangle inequality, $b_1 + a_1+a_4\geq d_{G}(v_{1j},s_{2}) $, which implies that:
		\begin{equation}\label{eq:subeq-114}
			b_1 + a_1+a_4  > \Delta_{2} + d_{G}(v_{1j},s_0)
		\end{equation}
		By the same argument, we have that $b_2 + a_2+a_3  > \Delta_{1} + d_{G}(v_{2\ell},s_0) $.  Combining with \Cref{eq:subeq-114}, we get:
		\begin{equation}\label{eq:a1234-sum-final}
			a_1 + a_2 + a_3+ a_4 + b_1 + b_2 >  \Delta_{1} + \Delta_{2} +  d_G(v_{1j}, s_0) + d_G(v_{2\ell}, s_0),
		\end{equation}
		which contradicts \Cref{eq:a1234-sum2}. Thus, $x$ does not exist.
	\end{proof}

	\begin{claim}\label{clm:jk-12-b2} If $\{j,\ell\}\cap\{1,2\}  \not=\emptyset$, then $\pi(s_{1},t_{1j},\widehat{G})\cap  \pi(s_{2},t_{2\ell},\widehat{G})\subseteq \{t_{12}\}$, and that $t_{12}$ is either an endpoint of $\bunch(s_{1})$ or $\bunch(s_{2})$ or both.
	\end{claim}
	\begin{proof} If $t_{12}\in \pi(s_{1},t_{1j},\widehat{G})\cap  \pi(s_{2},t_{2\ell},\widehat{G})$, then we claim that $t_{12}$ must be an endpoint of $\bunch(s_{1})$ or $\bunch(s_{2})$ or both. Suppose otherwise; that is $t_{12}$ is not an endpoint of either $\bunch(s_{1})$ or $\bunch(s_{2})$. It means that there are two paths $\pi(s_{1},t_{1p},\widehat{G})$ and $\pi(s_{2},t_{2q},\widehat{G})$ such that $\{t_{12}\} \subseteq \pi(s_{1},t_{1p},\widehat{G})\cap  \pi(s_{2},t_{2q},\widehat{G})$ and that $\{p,q\}\cap\{1,2\} = \emptyset$; see \Cref{fig:VCGen} (d). The existence of such two paths contradicts \Cref{clm:jk-12-dist}.
		
	We are now proving  that $\pi(s_{1},t_{1j},\widehat{G})\cap  \pi(s_{2},t_{2\ell},\widehat{G})\subseteq \{t_{12}\}$. Observe that if $j=2$ and $\ell=1$, then $t_{12}= \pi(s_{1},t_{1j},\widehat{G}) \cap \pi(s_{2},t_{2\ell},\widehat{G})$ by \Cref{obs:disjoint}. Thus, \Cref{clm:jk-12-b2} follows. It remains to consider two other cases: (i) $j=2, \ell\not=1$ or (ii) $j\not=2, \ell=1$. Both cases are symmetric, and hence w.l.o.g, we only consider case (i). 
		
		Suppose that there exists $x\not= t_{12}$ such that  $\{x\}\in  \pi(s_{1},t_{12},\widehat{G}) \cap \pi(s_{2},t_{2\ell},\widehat{G})$ ($j=2$ now); see \Cref{fig:VCGen}(e). We define $a_1,a_2,a_3,a_4,b_1,b_2$ as in \Cref{clm:jk-12-dist}, specifically:
		\begin{equation*}
			\begin{split}
				a_1 = d_G(t_{12},x)&\qquad a_2 = d_G(x,s_{1})\\
				a_3 = d_G(t_{2\ell},x)&\qquad a_4 =  d_G(x,s_{2})\\
				b_1 = d_G(v_{12},x) &\qquad b_2 = d_G(v_{2\ell},x)
			\end{split}
		\end{equation*}
		By definition of $t_{12}$ (Item (a) in \Cref{def:tij}), $b_1 + a_1 + a_2 \leq \Delta_{1} + d_G(v_{12},s_0)$. By the same argument,  $b_2 + a_3 + a_4 \leq \Delta_{2} + d_G(v_{2\ell},s_0)$. Thus, 
		\begin{equation}\label{eq:a1234-sum3}
			b_1 + b_2 + a_1 + a_2  + a_3+a_4 \leq \Delta_{1} + \Delta_{2} + d_G(v_{12},s_0) + d_G(v_{2\ell},s_0).
		\end{equation}
		
		Since  $(s_{1},\Delta_{1})\not\in \widehat{X}_{v_{2\ell}}$,  $d_G(v_{2\ell}, s_{1}) > \Delta_{1} + d_{G}(v_{2\ell},s_0)$. By the triangle inequality, we have that $b_2 + a_3 + a_2 > \Delta_{2} + d_G(v_{2k},s_0)$. Thus, by \Cref{eq:a1234-sum3}, $b_1 +  a_1 + a_4  \leq \Delta_{2} +  d_G(v_{12},s_0)$.  In summary, we have:
		
		\begin{equation*}
			\begin{split}
				b_1 + a_1 + a_2 &\leq \Delta_{1} + d_G(v_{12},s_0)\\
				b_1 +  a_1 + a_4  &\leq \Delta_{2} +  d_G(v_{12},s_0)\\
			\end{split}
		\end{equation*}
		By the triangle inequality, we have that $d_G(v_{12},x)\leq b_1 + a_1$. Recall that $a_2 = d_G(x,s_{1})$ and $a_4 = d_G(x,s_{2})$.  It follows that:
		\begin{equation}\label{key}
			\begin{split}
				d_G(v_{12},x) + d_G(x,s_{1}) &\leq \Delta_{1} + d_G(v_{12},s_0)\\
				d_G(v_{12},x) + d_G(x,s_{2})  &\leq \Delta_{2} +  d_G(v_{12},s_0)\\
			\end{split}
		\end{equation}
		
		Thus, $x$ satisfies Item (a) of \Cref{def:tij}. We now show that $d_{\widehat{G}}(x, s_{1}) + d_{\widehat{G}}(x, s_{2}) < d_{\widehat{G}}(t_{12}, s_{1}) + d_{\widehat{G}}(t_{12}, s_{2})$, which will give a contradiction by the choice of $t_{12}$ in Item (b) of \Cref{def:tij}. 
		
		Observe that by a triangle inequality, $d_G(x,s_{2}) \leq d_G(x, t_{12}) + d_G(t_{12}, s_{2})$. Since shortest paths are unique in $\widehat{G}$, $d_{\widehat{G}}(x,s_{2}) < d_{\widehat{G}}(x, t_{12}) + d_{\widehat{G}}(t_{12}, s_{2})$. (See \Cref{fig:VCGen}(e).) Since $\pi(t_{12}, s_{1},\widehat{G})[x,s_{1}]$ is a shortest path in $\widehat{G}$ , $d_{\widehat{G}}(x,s_{1}) = d_{\widehat{G}}(t_{12}, s_{1}) - d_{\widehat{G}}(x, t_{12})$. It follows that:
		
		\begin{equation*}
			\begin{split}
				d_{\widehat{G}}(x, s_{1}) + d_{\widehat{G}}(x, s_{2})  &< d_{\widehat{G}}(t_{12}, s_{1}) -  d_{\widehat{G}}(x, t_{12}) + d_{\widehat{G}}(x, t_{12}) + d_{\widehat{G}}(t_{12}, s_{2})\\
				& = d_{\widehat{G}}(t_{12},s_{1})+ d_{\widehat{G}}(t_{12}, s_{2}),
			\end{split}
		\end{equation*}
		as desired.
	\end{proof}
	We observe that \Cref{lm:bunch-disjiont2} follows directly from \Cref{clm:jk-12-dist} and \Cref{clm:jk-12-b2}.
\end{proof}

We now continue the proof of \Cref{thm:LW-undirected}.  Consider the subgraph $H = \cup_{1\leq i\leq h}\bunch(s_{i})$. We construct a $K_h$ minor of $H$ as follows (see \Cref{fig:VCGen}(f)). Let $\pi(s_{i},t_{ij},G)$ be a path in $\bunch(s_{i})$ such that $i< j$. If $t_{ij}$ is an endpoint of  $\bunch(s_{j})$, then we contract $\pi(s_{i},t_{ij},G)$  to $s_{i}$. Otherwise, we contract $\pi(s_{i},t_{ij},G)\setminus\{t_{ij}\}$ to $v_i$; \Cref{lm:bunch-disjiont2} implies that $t_{ij}$ will be contracted to $v_j$. The resulting graph is a $K_h$-minor of $H$ as the paths 	$\pi(s_{i},t_{ij},G)$ and $\pi(s_j,t_{ij},G)$ for any $i\not= j$ are internally disjoint  by \Cref{obs:disjoint}. This completes the proof of  \Cref{thm:LW-undirected}.

\begin{remark}\label{rm:LP} We remark the following regarding  \Cref{thm:LW-undirected}:
	\begin{itemize}
		\item The proof of \Cref{thm:LW-undirected} breaks down if we apply it to the set system by Li and Parter in \Cref{def:LP-setsystem}. Specifically, in \Cref{eq:a1234-sum2}, $d_G(v_{1j}, s_0) + d_G(v_{2\ell}, s_0)$ will be replaced by $d_G(v_{1j}, s_0) + d_G(v_{2\ell}, s_1)$ while in \Cref{eq:a1234-sum-final}, $d_G(v_{1j}, s_0) + d_G(v_{2\ell}, s_0)$ will be replaced by $d_G(v_{1j}, s_1) + d_G(v_{2\ell}, s_0)$, and hence we could not obtain a contradiction in the proof of \Cref{clm:jk-12-dist}. The same happens to the proof of \Cref{clm:jk-12-b2}. 
		
		\item  	The VC dimension bound obtained by Li and Parter~\cite{LP19}  is $3$ for the setting of $S$ on the outer face of a planar graph $G$, while our \Cref{thm:LW-undirected} gives VC dimension $4$. However, we can modify the proof slightly to improve the VC dimension to $3$ by  only requiring that $G$ excludes a $K_h$-minor where each vertex of the clique minor must correspond to a connected subgraph of $G$ containing at least one vertex in $S$. We say that $G$ is\emph{ $S$-restricted $K_h$-minor-free}. (A $K_h$-minor-free graph is $S$-restricted $K_h$-minor-free graph for any subset $S$.) The graph and the vertex set $S$ considered in the setting of Li and Parter is $S$-restricted $K_4$-minor-free and hence \Cref{thm:LW-undirected} gives VC dimension bound of $3$, matching the original bound of Li and Parter.
	\end{itemize}
\end{remark}

\subsection{Algorithmic Applications}\label{subsec:apps-undirected}

In this section, we explore algorithmic applications of \Cref{thm:LW-undirected}. \emph{Graphs in this section are \emph{unweighted}, and hence the distances are unweighted distances.} We will use the notion of \emph{patterns}, introduced by Fredslund-Hansen, Mozes, and Wulff-Nilsen~\cite{FMW20}, though our pattern is defined slightly differently. Specifically, our definition rests on the VC set system $\wlp_{G,M}$, while Fredslund-Hansen, Mozes, and Wulff-Nilsen relied on the VC set system $\lp_{G,M}$ by Li and Parter~\cite{LP19}.

Let $H$ be a connected, induced subgraph of $G$. Recall that $\partial H$ denotes the set of all boundary vertices of $H$. Fix an arbitrary \emph{sequence} $\sigma_H$ of vertices of $\partial H$, which is a linear order of  $\partial H$. We write $\sigma_H = \langle s_0,s_1,\ldots, s_{|\partial H|-1}\rangle$. For each vertex $v\in V$, we define a \emph{pattern} of $v$ w.r.t $\sigma_H$, denoted by $\patt_v$, to be a $|\partial H|$ dimensional vector where:
\begin{equation}\label{eq:pattern-def}
	\patt_v[i] = d_G(v,s_i)- d_G(v,s_0) \quad\text{for every $0\leq i \leq |\partial H|-1$}
\end{equation}

Note that $\patt_v[0] = 0$ by definition. We bound the number of all possible patterns w.r.t. $\sigma_H$. 

\begin{lemma}\label{lm:pattern-bound-undir} Let $H$ be a connected, induced subgraph of a $K_h$-minor-free graph $G$, and $\sigma_H$ be an arbitrary sequence of vertices in $\partial H$. Let $P = \{\patt_v: v\in V\}$ be the set of all patterns w.r.t. $\sigma_H$. Then $|P| = O((|\partial H|\cdot |V(H)|)^{h-1})$. 
\end{lemma}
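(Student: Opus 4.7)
The plan is to reduce the counting of patterns to counting the sets $\widehat{X}_v$ of a suitably chosen VC set system, then invoke \Cref{thm:LW-undirected} together with the Sauer--Shelah Lemma (\Cref{lm:SS}).

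The first step is to bound the range of each entry of $\patt_v$. Since $G$ is unweighted, $\patt_v[i] = d_G(v,s_i) - d_G(v,s_0)$ is an integer. Moreover, the triangle inequality gives $|\patt_v[i]| \leq d_G(s_0,s_i)$, and since $H$ is a connected subgraph of $G$ containing both $s_0$ and $s_i$, we have $d_G(s_0,s_i) \leq d_H(s_0,s_i) \leq |V(H)|-1$. Consequently every pattern entry lies in the set $M \defi \{-(|V(H)|-1),\ldots,|V(H)|-1\}$, which has size $O(|V(H)|)$.

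The second step is the key observation that each pattern $\patt_v$ is determined by, and determines, the set $\widehat{X}_v$ arising from \Cref{def:LP-variant} applied to $S = \sigma_H$ and the $M$ chosen above. Indeed, $\widehat{X}_v$ is a subset of $[|\partial H|-1]\times M$, and for each $i$ the value $\patt_v[i]$ is precisely $\min\{\Delta\in M : (i,\Delta)\in \widehat{X}_v\}$; conversely $\patt_v$ clearly determines $\widehat{X}_v$. Hence $v\mapsto \patt_v$ and $v\mapsto \widehat{X}_v$ have the same fibers, so $|P| = |\wlp_{G,M}(\sigma_H)|$.

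The third step is to apply \Cref{thm:LW-undirected} to $G$ with terminal set $\sigma_H$ and value set $M$: the VC dimension of $\wlp_{G,M}(\sigma_H)$ is at most $h-1$. The ground set $[|\partial H|-1]\times M$ has size $O(|\partial H|\cdot |V(H)|)$, so \Cref{lm:SS} yields
\[
  |P| = |\wlp_{G,M}(\sigma_H)| = O\bigl((|\partial H|\cdot |V(H)|)^{h-1}\bigr),
\]
as required. There is no serious obstacle once the range of $\patt_v[i]$ has been correctly localized; the only thing to be careful about is using the connectedness of $H$ to bound $d_G(s_0,s_i)$ by $|V(H)|-1$ (rather than by a global diameter of $G$), since this is exactly what makes $|M|$ scale with $|V(H)|$ and gives the claimed bound.
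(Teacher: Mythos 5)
Your proposal is correct and follows essentially the same route as the paper: bound each entry of $\patt_v$ by $\pm(|V(H)|-1)$ using the connectedness of $H$ (and unweightedness), identify patterns bijectively with the sets $\widehat{X}_v$ of $\wlp_{G,M}(\sigma_H)$ for $M = \{-(|V(H)|-1),\ldots,|V(H)|-1\}$ (the paper phrases this as ``flattening'' $\patt_v$), and then apply \Cref{thm:LW-undirected} with the Sauer--Shelah Lemma (\Cref{lm:SS}) on the ground set $[|\partial H|-1]\times M$. No gaps; your explicit recovery of $\patt_v[i]$ as $\min\{\Delta\in M:(i,\Delta)\in\widehat{X}_v\}$ is just a slightly more detailed justification of the same bijection.
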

\begin{proof} Since $H$ is connected, by the triangle inequality, $ -(|V(H)|-1) \leq d_G(v,s_i)- d_G(v,s_0) \leq |V(H)|-1$. Let $M = \{-(|V(H)|-1),\ldots, -1,0,1,\ldots, (|V(H)|-1) \}$ and $S$ be the set of all boundary vertices of $H$. Let $\bar{p}_v$ be a set obtained by \emph{flattening} $\patt_v$; that is, for each $i\in [1,|\partial H|-1]$, we add to the set $\bar{p}_v$ a pair $(i,\Delta)$ for every $\Delta \in M$ such that $\Delta\geq \patt_v[i]$.  Observe by definition of $\widehat{X}_v$ in \Cref{eq:X-widehat} that $\bar{p}_v = \widehat{X}_v$. Thus, there is a bijection between the set of patterns $P$ and  $\wlp_{G,M}$.
	
	By the Sauer–Shelah Lemma (\Cref{lm:SS}), we have  $|\wlp_{G,M}| = O((|S||M|)^{h-1}) = O((|\partial H|\cdot |V(H)|)^{h-1}) $ as claimed.
\end{proof}

Let $v$ be a vertex in $H$, and $\patt$ be a pattern (of some vertex $u$) w.r.t. $\sigma_H$. We define the distance between $v$ in $\patt$, denoted by $d(\patt,v)$, to be:
\begin{equation}\label{eq:dis-patt-v}
	d(\patt,v) = \min_{0\leq i\leq |\partial H|-1}\{d_G(v,s_i) + \patt[i]\}
\end{equation} 

The distance between a vertex and a pattern can be used to compute the distance between two vertices as shown by the following lemma, due to Fredslund-Hansen, Mozes, and Wulff-Nilsen~\cite{FMW20}. Since our definition of a distance between a pattern and a vertex in \Cref{eq:dis-patt-v} is slightly different from that of \cite{FMW20}, we include a proof for completeness.

\begin{lemma}[Fredslund-Hansen, Mozes, and Wulff-Nilsen, Lemma 7~\cite{FMW20}]\label{lm:dist-via-pattern} Let $u \in V\setminus V(H)$ be a vertex not in $H$, and $\patt_u$ be the pattern of $u$ w.r.t $\sigma_H$. Let $v$ be a vertex in $H$. Then:
	\begin{equation}
		d_G(u,v) = d_G(u,s_0) + d(\patt_u,v)
	\end{equation}
\end{lemma}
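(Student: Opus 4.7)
The plan is to reduce the stated identity to the more transparent claim
\[
d_G(u,v) \;=\; \min_{0\le i\le |\partial H|-1}\bigl\{\, d_G(u,s_i) + d_G(s_i,v)\,\bigr\}.
\]
Substituting the definition $\patt_u[i] = d_G(u,s_i) - d_G(u,s_0)$ into $d(\patt_u,v)$ and adding $d_G(u,s_0)$ gives exactly $\min_i\{d_G(u,s_i)+d_G(s_i,v)\}$, so proving this identity proves the lemma.

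I would first establish the easy $(\le)$ direction: for every boundary vertex $s_i$, the triangle inequality yields $d_G(u,v)\le d_G(u,s_i)+d_G(s_i,v)$, whence $d_G(u,v)\le \min_i\{d_G(u,s_i)+d_G(s_i,v)\}$.

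For the $(\ge)$ direction I would use the structural assumption that $H$ is an \emph{induced connected} subgraph. Fix any shortest $u$--$v$ path $\pi$ in $G$. Since $u\in V\setminus V(H)$ and $v\in V(H)$, walking along $\pi$ from $u$ to $v$ we must traverse at least one edge $(x,y)$ with $x\notin V(H)$ and $y\in V(H)$. Because $H$ is induced, the endpoint $y$ is adjacent (in $G$) to the vertex $x\notin V(H)$, so by definition $y\in\partial H$; say $y=s_j$. Optimality of $\pi$ implies that the two subpaths split cleanly: $d_G(u,v)=d_G(u,s_j)+d_G(s_j,v)$, and in particular $d_G(u,v)\ge \min_i\{d_G(u,s_i)+d_G(s_i,v)\}$.

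The two bounds together give the desired equality, and then unwinding $d(\patt_u,v)$ as noted above completes the proof. The only subtle point is the argument that a $u$--$v$ shortest path must hit $\partial H$, which relies crucially on $H$ being induced (so that the ``entry'' vertex into $V(H)$ is forced to be a boundary vertex in the sense of the definition in \Cref{sec:prelim}); this is the step to state carefully, but it is not a real obstacle.
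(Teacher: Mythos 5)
Your proposal is correct and follows essentially the same route as the paper: it identifies the boundary vertex where a shortest $u$--$v$ path enters $H$ (which exists since $u\notin V(H)$, $v\in V(H)$, and the entry vertex is adjacent to a vertex outside $H$), splits the distance there, and then unwinds the definition of $\patt_u$ and $d(\patt_u,v)$. Your version merely makes the two inequalities of the min-identity explicit, which the paper's chain of equalities leaves implicit.
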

\begin{proof}
	Observe that for each boundary vertex $s_{i}$ for $0\leq i\leq  |\partial H|$, $d_G(u,s_{i}) = \patt_u[i] + d_G(u,s_0)$. Let $s_{\ell}$ be the boundary vertex in $\pi(u,v,G)\cap \partial H$; $s_{\ell}$ exists since $H$ is an induced subgraph, and $u\not\in V(H)$, $v\in V(H)$. Then: 
	\begin{equation*}
		\begin{split}
			d_G(u,v) &= d_G(u,s_\ell) + d_G(s_\ell,v) = \min_{0\leq i \leq  |\partial H|-1}\{d_G(u,s_{i}) + d_G(s_{i},v)\}\\
			&=  \min_{0\leq i \leq  |\partial H|-1}\{ d_G(u,s_0) + \patt_u[i] + d_G(s_{i},v)\}\\
			&= d_G(u,s_0) + \min_{0\leq i \leq  |\partial H|-1}\{  d_G(v,s_i) + \patt_u[i]\}\\ & = d_G(u,s_0) + d(\patt_u,v)~,
		\end{split}
	\end{equation*}
	as desired.
\end{proof}

In \Cref{subsec:diameter-unweighted} and \Cref{subsec:Wiener}, we present algorithms to compute the diameter, eccentricities and the Wiener index. Our algorithm builds on an earlier algorithm by Wulff-Nilsen~\cite{WulffNilsen2009}. Here we use \Cref{lm:pattern-bound-undir} to improve the running time to truly subquadratic time.  In \Cref{subsec:oracle-undirected}, we construct a distance oracle with truly subquadratic space and constant query time. The algorithm is almost the same as the algorithm by Fredslund-Hansen, Mozes, and Wulff-Nilsen~\cite{FMW20}, except that we will use \Cref{lm:pattern-bound-undir}. In \Cref{subsec:other-undirected}, we mention other algorithmic applications.

\subsubsection{Diameter and Eccentricities}\label{subsec:diameter-unweighted}

In this section, we show how to compute all-vertices eccentricities in truly subquadratic time as described in \Cref{cor:dimater-unweighted}. Computing the diameter trivially follows by finding the maximum eccentricity in $O(n)$ time. The algorithm has three steps:


\begin{itemize}
	\item \textbf{(Step 1).~} Construct an $r$-division $\mathcal{R}$ of $G$ for $r = n^{2/(3h-1)}$. For each cluster $R\in \mathcal{R}$, form a sequence of boundary vertices $\sigma_R$ in an arbitrary way. Then compute the set of patterns w.r.t $\sigma_R$: $P_R = \{u\in V: \patt_u\}$. We store $P_R$ in a table  $T^{(1)}_R$.
	
	\item \textbf{(Step 2).~} For each cluster $R\in \mathcal{R}$ and each pattern $\patt \in P_R$, find $v = \argmax_{v\in V(R)} d(v,\patt)$. That is, $v$ is the vertex that has the maximum distance to $\patt$ over all vertices in $V(R)$; we say that $v$ is the \emph{furthest vertex} from $\patt$. We then store the distance $d(\patt,v)$ in a table $T^{(2)}_R$  of $R$; the key to access $T^{(2)}_R$ is (the ID of) $\patt$.  
	
	\item \textbf{(Step 3).~} We now compute $\ecc(u)$ for each vertex $u\in V$. For each cluster $R\in \mathcal{R}$, we compute the distance from $u$ to the vertex $v\in R$ furthest from $u$, denoted by $\Delta(u,R)$, as follows. \begin{itemize}
	    \item If $u\not\in R$, let $\patt_u$ be the pattern of $u$ w.r.t $\sigma_R$ computed in (Step 1). Let $v$ be the furthest vertex from $\patt_u$, computed in (Step 2). Then we return $\Delta(u,R) = d_G(u,s_0) + d(\patt_u,v)$ where $s_0$ is the first vertex of $\sigma_R$. Finally, we compute $\ecc(u) = \max_{R\in \mathcal{R}}\Delta(u,R)$. 
            \item   If $u\in R$, then we compute a distance $d_{R}(u,v)$ using BFS. Then, compute $\tilde{d}_G(u,v) = \min\{d_G(u,s_0) + d(\patt_u,v), d_{R}(u,v)\}$ and finally return $\Delta(u,R) = \max_{v\in R}{\tilde{d}_G(u,v)}$.
	\end{itemize}

\end{itemize}

By \Cref{lm:dist-via-pattern} and the computation in (Step 2),   if $u\not\in R$, then  $\pi(u,v,G)\cap \partial R\not=\emptyset$ and hence $\Delta(u,R)$ is correctly computed in Step 3. If $u\in R$, it is possible that  $\pi(u,v,G)$ contains outside, and in this case $d_G(u,v) = d_G(u,s_0) + d(\patt_u,v)$; otherwise, $d_G(u,v)  = d_R(u,v)$. As the algorithm takes the minimum, it correctly returns  $\Delta(u,R) = \max_{v\in R}{\tilde{d}_G(u,v)}$, and therefore, $\ecc(u)$ is correctly computed.

We now implement each step of the algorithm efficiently, assuming that $h$ is a constant. We can assume $h\geq 4$, as (connected) $K_3$-minor-free graphs are trees and hence all problems mentioned here can be solved in linear time. Let  $B$ be the set of boundary vertices of the $r$-division $\mathcal{R}$: $B = \cup_{R\in \mathcal{R}} \partial R$.  By \Cref{lm:r-division}, $|B| = \tilde{O}(n/\sqrt{r})$. Thus, we can find all BFS trees, each rooted at a vertex of $B$, in $\tilde{O}(n^2/r)$ time. 

\begin{observation}\label{obs:Tcal-Time} Let $D(B,V) = \{d_G(b,v): (b,v)\in B\times V \}$. Then $D(B,V)$ can be computed in time $\tilde{O}(n^{2}/\sqrt{r})$.
\end{observation}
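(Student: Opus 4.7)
The plan is to compute $D(B,V)$ by running a breadth-first search (BFS) from every vertex in $B$. The key ingredients to combine are (i) an upper bound on $|B|$ coming from the $r$-division and (ii) the linear edge count of $K_h$-minor-free graphs.

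First I would bound $|B|$. By definition, $B = \bigcup_{R \in \mathcal{R}} \partial R$, so $|B| \leq \sum_{R \in \mathcal{R}} |\partial R|$, and part (a) of \Cref{lm:r-division} gives $\sum_{R \in \mathcal{R}} |\partial R| = \tilde{O}(nh/\sqrt{r}) = \tilde{O}(n/\sqrt{r})$, since $h$ is a fixed constant throughout this section.

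Next I would use the classical fact (Kostochka–Thomason) that any $K_h$-minor-free graph on $n$ vertices has $O(h \sqrt{\log h}\, n) = O(n)$ edges for constant $h$. Because $G$ is unweighted, a single BFS from any $b \in B$ computes $d_G(b,v)$ for every $v \in V$ in $O(|V|+|E|) = O(n)$ time, and a direct storage of the result fills in a row of an $|B| \times |V|$ table representing $D(B,V)$.

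Combining the two bounds, the total running time (and space) is $|B| \cdot O(n) = \tilde{O}(n^2/\sqrt{r})$, as claimed. There is no substantive obstacle here; the observation is essentially a one-line corollary of \Cref{lm:r-division} and the $O(n)$ edge bound for minor-free graphs, and is recorded so that later steps of the diameter/eccentricity algorithm may treat $D(B,V)$ as precomputed data whenever $r$ is chosen large enough that $n^2/\sqrt{r}$ is truly subquadratic.
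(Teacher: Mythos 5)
Your proposal is correct and matches the paper's approach: the paper likewise bounds $|B| = \tilde{O}(n/\sqrt{r})$ via \Cref{lm:r-division} and runs a BFS from each boundary vertex, with your explicit appeal to the $O(n)$ edge bound for $K_h$-minor-free graphs merely making precise what the paper leaves implicit. (The paper's in-text statement of $\tilde{O}(n^2/r)$ for this step is a typo; your $\tilde{O}(n^2/\sqrt{r})$ is the bound actually asserted in the observation and used later.)
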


By \Cref{lm:r-division}, each cluster $R\in \mathcal{R}$ has at most $r$ vertices and $\tilde{O}(\sqrt{r})$ boundary vertices. The following is a direct corollary of \Cref{lm:pattern-bound-undir}.

\begin{corollary}\label{cor:pattern-count}$|P_R| =  \tilde{O}(r^{3(h-1)/2})$ for every $R\in \mathcal{R}$.
\end{corollary}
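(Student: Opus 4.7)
The statement is essentially a one-line substitution, so the plan is to directly combine \Cref{lm:pattern-bound-undir} with the parameters guaranteed by \Cref{lm:r-division}. First I would observe that since $\mathcal{R}$ is an $r$-division of the $K_h$-minor-free graph $G$, each cluster $R \in \mathcal{R}$ is by definition a connected induced subgraph of $G$, so the hypotheses of \Cref{lm:pattern-bound-undir} are met with $H = R$ and with $\sigma_H = \sigma_R$.

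Next I would invoke \Cref{lm:pattern-bound-undir} to obtain
\begin{equation*}
|P_R| = O\bigl((|\partial R|\cdot |V(R)|)^{h-1}\bigr).
\end{equation*}
Then I would plug in the two bounds supplied by part (b) of \Cref{lm:r-division}, namely $|V(R)| \leq r$ and $|\partial R| = \tilde{O}(h\sqrt{r})$. Since $h$ is treated as a constant throughout this section, the factor $h^{h-1}$ is absorbed into the $\tilde{O}$ notation, and the polylogarithmic factors hidden in $|\partial R|$ are likewise absorbed. This yields
\begin{equation*}
|P_R| = \tilde{O}\bigl((\sqrt{r}\cdot r)^{h-1}\bigr) = \tilde{O}\bigl(r^{3(h-1)/2}\bigr),
\end{equation*}
which is exactly the claimed bound.

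There is essentially no obstacle here: the content of the corollary is entirely in \Cref{lm:pattern-bound-undir} (which already absorbed the VC-dimension bound via the Sauer--Shelah Lemma), and the corollary is simply the specialization to the sizes delivered by the $r$-division. The only mild care needed is to note that the $\tilde{O}$ notation legitimately swallows the $h^{O(h)}$ and $\polylog n$ factors, which is consistent with the convention fixed earlier and with the assumption that $h$ is a constant.
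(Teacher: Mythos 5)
Your proposal is correct and matches the paper's own proof: both simply instantiate \Cref{lm:pattern-bound-undir} with the cluster sizes from \Cref{lm:r-division}(b), namely $|V(R)|\leq r$ and $|\partial R| = \tilde{O}(h\sqrt{r})$, to get $\tilde{O}((\sqrt{r}\cdot r)^{h-1}) = \tilde{O}(r^{3(h-1)/2})$. Your extra remark that each cluster is a connected induced subgraph (so the lemma's hypotheses apply) and that $h$-dependent and polylogarithmic factors are absorbed into $\tilde{O}$ is consistent with the paper's conventions.
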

\begin{proof}
	By \Cref{lm:pattern-bound-undir}, the number of patterns is $O((|\partial R|\cdot |V(R)|)^{h-1}) = \tilde{O}(r^{3(h-1)/2})$.
\end{proof}

Next, we bound the running time of (Step 1). 

\begin{lemma}\label{lm:diam-Step1} Given $D(B,V)$, we can implement (Step 1) in $\tilde{O}(n^2/\sqrt{r})$ time. 
\end{lemma}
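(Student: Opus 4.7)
The plan is to break (Step 1) into three sub-tasks and argue that each fits within the $\tilde{O}(n^2/\sqrt{r})$ budget: (a) construct the $r$-division $\mathcal{R}$, (b) for every cluster $R$ and every vertex $u\in V$ compute the pattern $\patt_u$ w.r.t.\ $\sigma_R$, and (c) deduplicate these patterns so that $T^{(1)}_R$ stores $P_R$ together with, for each $u\in V$, a pointer into $P_R$ identifying $\patt_u$. Sub-task (a) is immediate from \Cref{lm:r-division}: choosing $\eps$ sufficiently small, the construction time is $O(n^{1+\eps}\sqrt{r})$, which with $r=n^{2/(3h-1)}$ and $h\geq 4$ is dominated by $n^{2-1/(3h-1)}=n^2/\sqrt{r}$. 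Fixing an arbitrary linear order $\sigma_R$ on $\partial R$ is just $O(|\partial R|)$ additional work per cluster.

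For (b), the key observation is that a single pattern entry $\patt_u[i]=d_G(u,s_i)-d_G(u,s_0)$ can be read off in $O(1)$ time from the precomputed table $D(B,V)$, since every boundary vertex of every cluster lies in $B$. Thus computing $\patt_u$ for a fixed $u$ and a fixed cluster $R$ costs $O(|\partial R|)$ time, and the total work over all $(u,R)$ pairs is
\[
\sum_{R\in\mathcal{R}}\sum_{u\in V} O(|\partial R|)
\;=\; O(n)\cdot\sum_{R\in\mathcal{R}}|\partial R|
\;=\; \tilde{O}\!\left(n\cdot \frac{nh}{\sqrt{r}}\right)
\;=\; \tilde{O}(n^2/\sqrt{r}),
\]
using the bound $\sum_{R}|\partial R|=\tilde{O}(nh/\sqrt{r})$ from \Cref{lm:r-division} and treating $h$ as a constant.

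For (c), to build $T^{(1)}_R$ we hash each pattern $\patt_u$ into a dictionary keyed by $R$, inserting it only if it is new, and storing next to each $u$ the pointer to its canonical copy in $P_R$. Hashing a pattern of length $|\partial R|$ is $O(|\partial R|)$ time (using standard string hashing on the integer coordinates, whose magnitudes are bounded by $|V(R)|\leq r$), so this stage obeys exactly the same $\sum_R n\cdot|\partial R|$ accounting as step (b) and again fits in $\tilde{O}(n^2/\sqrt{r})$ time. The main obstacle to watch is making sure one does \emph{not} pay $\Theta(|V|\cdot|\partial R|)$ for each cluster independently — that would yield $\Theta(n^2)$ — but instead globally amortizes against the total boundary size, which is precisely where \Cref{lm:r-division} buys us the $\sqrt{r}$ savings. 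Together with sub-task (a), this gives the claimed $\tilde{O}(n^2/\sqrt{r})$ bound.
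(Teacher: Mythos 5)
Your proposal is correct and follows essentially the same route as the paper: use \Cref{lm:r-division} to bound the $r$-division construction time by $\tilde{O}(n^2/\sqrt{r})$, read each pattern entry in $O(1)$ from $D(B,V)$ so that computing $\patt_u$ costs $O(|\partial R|)$, and charge the total work $O(n\sum_{R}|\partial R|)=\tilde{O}(n^2/\sqrt{r})$ against the global boundary bound. The only difference is that you deduplicate patterns with a hash dictionary while the paper uses a trie, an immaterial implementation choice with the same $O(|\partial R|)$ per-pattern cost.
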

\begin{proof}
	First, by \Cref{lm:r-division}, $\mathcal{R}$ can be constructed in time $O(n^{1+\eps}\sqrt{r})$ for any fixed constant $\eps > 0$. As $h\geq 4$,  $n^2/\sqrt{r} = n^{2-1/(3h-1)} = \Omega(n^{1.5})$ while $n^{1+\eps}\sqrt{r} = n^{1+\eps + 1/(3h-1)} = n^{1.2 + \eps}$. Thus, by choosing $\eps = 0.1$, we have  $n^{1+\eps}\sqrt{r} = O(n^2/\sqrt{r})$. That is, $\mathcal{R}$ can be constructed in $\tilde{O}(n^2/\sqrt{r})$ time. 
	
	Next we compute $P_R$, which is initialized to be $\emptyset$. Then for each $u \in V$, we look up the distance from all vertices of $\partial R$ to $u$ in $D(B,V)$. Then we compute the pattern $\patt_u$ from $u$ to $R$, in $O(|\partial R|)$ time. We then add $\patt_u$ to $P_R$ if $\patt_u$ is currently not in $P_R$; this check can be done in $O(|\partial R|)$ time using a trie data structure, say. The total running time to compute $P_R$ is $O(n|\partial R|)$. Thus, the total running time of this step is $O(n \sum_{R\in \mathcal{R}}|\partial R|) = \tilde{O}(n^2/\sqrt{r})$ by \Cref{lm:r-division}.   
\end{proof}

\begin{lemma}\label{lm:diam-Step2}  Given $D(B,V)$ and $\{P_R\}_{R\in \mathcal{R}}$, we can implement (Step 2) in $\tilde{O}(n r^{(3h-2)/2})$ time.
\end{lemma}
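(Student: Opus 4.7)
The plan is to reduce, for each cluster and each pattern, the computation of $\max_{v\in V(R)} d(\patt,v)$ to a single multi-source Dijkstra on a sparse shortcut-augmented graph of size $\tilde O(r)$, so that each pattern costs only $\tilde O(r)$ rather than the $\tilde O(r^{3/2})$ a naive per-vertex iteration over $\partial R$ would yield.

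First, for each cluster $R \in \mathcal{R}$, build an auxiliary graph $\tilde R$ from $R$ by adding, for every pair $s_i,s_j \in \partial R$, a shortcut edge of weight $d_G(s_i,s_j)$, read off $D(B,V)$. A routine argument shows $d_{\tilde R}(v,s_i) = d_G(v,s_i)$ for every $v\in V(R)$ and $s_i \in \partial R$: a $G$-shortest path from $v$ to $s_i$ alternates between maximal in-$R$ segments and out-of-$R$ segments whose endpoints lie in $\partial R$, and each out-of-$R$ segment can be replaced, without changing length, by the corresponding shortcut edge in $\tilde R$; conversely, every edge of $\tilde R$ corresponds to a real $G$-walk of the same length. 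Since $K_h$-minor-freeness gives $|E(R)| = O(r)$ and \Cref{lm:r-division} yields $|\partial R|^2 = \tilde O(r)$, the graph $\tilde R$ has $\tilde O(r)$ vertices and edges and can be constructed in $\tilde O(r)$ time; summed over $|\mathcal R| = \tilde O(n/\sqrt r)$ clusters, this preprocessing costs $\tilde O(n \sqrt r)$, well within the target.

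For each pattern $\patt \in P_R$, run a Dijkstra on $\tilde R$ in which each $s_i \in \partial R$ is inserted into the priority queue with initial tentative distance $\patt[i] + L$, where $L = |V(R)|-1$ is a uniform shift ensuring non-negativity (and which is subtracted from all final labels). The distance label computed at $v \in V(R)$ equals $\min_i(\patt[i] + d_{\tilde R}(s_i,v)) = \min_i(\patt[i] + d_G(s_i,v)) = d(\patt,v)$, so a single linear scan over $V(R)$ identifies the furthest vertex and its distance, which we insert into $T^{(2)}_R$ keyed by $\patt$. Each Dijkstra run, together with the scan, takes $\tilde O(r)$ time.

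Putting the pieces together, the cost per cluster is $|P_R| \cdot \tilde O(r) = \tilde O(r^{3(h-1)/2} \cdot r) = \tilde O(r^{(3h-1)/2})$ by \Cref{cor:pattern-count}, and summing over $|\mathcal R| = \tilde O(n/\sqrt r)$ clusters gives total time $\tilde O(n \cdot r^{(3h-2)/2})$, matching the claim. The main conceptual obstacle is that $d(\patt,v)$ is defined using the \emph{ambient} distances $d_G(v,s_i)$, so one cannot simply run shortest paths inside $R$ (where $d_R(v,s_i)$ may be strictly larger) nor afford a full-graph computation per pattern; the shortcut augmentation reconciles these by encoding the outside metric compactly inside a local sparse graph of size $\tilde O(r)$, which is precisely what makes the $\tilde O(r)$-per-pattern bound possible.
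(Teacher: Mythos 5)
Your proposal is correct, but it proves the lemma by a genuinely different (and heavier) route than the paper. The paper's proof is the naive one you dismiss: for each pattern $\patt\in P_R$ and each $v\in V(R)$ it evaluates $d(\patt,v)=\min_{i}\{d_G(v,s_i)+\patt[i]\}$ directly by scanning the $|\partial R|=\tilde{O}(\sqrt{r})$ boundary vertices, with all $d_G(v,s_i)$ read off $D(B,V)$; this costs $\tilde{O}(\sqrt{r}\,|V(R)|)$ per pattern and $\tilde{O}(r^{(3h-2)/2})|V(R)|$ per cluster, and the stated total follows because the paper sums $\sum_{R\in\mathcal{R}}|V(R)|=\tilde{O}(n)$ (interior vertices are counted once and boundary vertices with multiplicity contribute only $\tilde{O}(nh/\sqrt{r})$ by \Cref{lm:r-division}), rather than charging every cluster the worst case $|\mathcal{R}|\cdot r$. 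So the premise in your opening sentence --- that the per-vertex scan would be too slow and a faster per-pattern primitive is needed --- is not accurate; the loss you perceive comes only from the coarse accounting $|\mathcal{R}|\cdot(\text{cost per cluster with }|V(R)|=r)$, not from the algorithm. Your shortcut-graph construction itself is sound: the subpath-optimality argument does give $d_{\tilde{R}}(v,s_i)=d_G(v,s_i)$, the shifted multi-source Dijkstra correctly computes $\min_i\{\patt[i]+d_G(v,s_i)\}$ since all edge weights are non-negative, and $|E(\tilde{R})|=O(r)+\tilde{O}(|\partial R|^2)=\tilde{O}(r)$ by minor-freeness and \Cref{lm:r-division}, so your per-cluster bound $|P_R|\cdot\tilde{O}(r)=\tilde{O}(r^{(3h-1)/2})$ (via \Cref{cor:pattern-count}) and the final $\tilde{O}(n r^{(3h-2)/2})$ are valid. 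What your route buys is a per-pattern cost of $\tilde{O}(r)$ instead of $\tilde{O}(r^{3/2})$, which is robust to cruder accounting and could matter if one wanted a larger pattern budget; what the paper's route buys is simplicity --- no auxiliary graph, no Dijkstra, no weight shift --- at the price of needing the tighter summation over cluster sizes.
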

\begin{proof}
	For each pattern $\patt\in P_R$, we can compute the distance $d(\patt,v)$ for each $v\in V(R)$ in time $O(|\partial R|) = \tilde{O}(\sqrt{r})$. Thus, finding the furthest vertex from $\patt$ takes $\tilde{O}(\sqrt{r}|V(R)|)$ time. By \Cref{cor:pattern-count}, the running time to compute the table $T^{(2)}_R$ is $\tilde{O}(r^{1/2}|V(R)| r^{3(h-1)/2}) = \tilde{O}(r^{(3h-2)/2})|V(R)|$. Thus, the total running time of (Step 2) is $\tilde{O}(r^{(3h-2)/2})\sum_{R\in \mathcal{R}}|V(R)|) = \tilde{O}(nr^{(3h-2)/2})$, as claimed.
\end{proof}

\begin{lemma}\label{lm:diam-Step3}  (Step 3) can be implemented in $\tilde{O}(n^2/\sqrt{r})$ time given the information computed in (Step 1) and (Step 2).
\end{lemma}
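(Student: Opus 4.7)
The plan is to implement (Step 3) by organizing the outputs of (Step 1) and (Step 2) so that the required lookups cost $O(1)$, and then to analyze the two sub-cases $u\notin R$ and $u\in R$ separately. Specifically, during (Step 1) I would store, for every pair $(u, R)$, a pointer to the entry of $\patt_u$ in $T^{(1)}_R$ (this fits inside the $\tilde{O}(n^2/\sqrt{r})$ time bound already established for (Step 1)), and I would index $T^{(2)}_R$ by pattern identifier so that the furthest-vertex distance $d(\patt_u, v^*)$ is retrievable in $O(1)$ once the identifier of $\patt_u$ is known. The distance $d_G(u, s_0)$ is fetched directly from $D(B, V)$ in $O(1)$.

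In the case $u\notin R$, (Step 3) then performs a constant number of $O(1)$ lookups followed by one addition, so the total cost over all such pairs is $O(n\cdot |\mathcal{R}|) = \tilde{O}(n^2/\sqrt{r})$ by \Cref{lm:r-division}. The outer aggregation $\ecc(u) = \max_R \Delta(u, R)$ over the $\tilde{O}(n/\sqrt{r})$ clusters contributes another $\tilde{O}(n^2/\sqrt{r})$.

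The case $u\in R$ is where the outside shortcut breaks: the vertex that maximizes $d(\patt_u, \cdot)$ over $V(R)$ need not maximize $\tilde{d}_G(u, \cdot) = \min\{d_G(u, s_0) + d(\patt_u, \cdot), d_R(u, \cdot)\}$, because the in-cluster distance truncates the through-boundary distance in a $v$-dependent way. I would therefore compute $\Delta(u, R)$ directly: run a BFS in $R$ from $u$ in $O(|V(R)|)$ time (using that $K_h$-minor-free graphs have linearly many edges for constant $h$) to obtain $d_R(u, \cdot)$; then for each $v\in V(R)$ evaluate $d(\patt_u, v) = \min_{s_i\in\partial R}\{d_G(v, s_i) + \patt_u[i]\}$ in $\tilde{O}(|\partial R|) = \tilde{O}(\sqrt{r})$ time by reading the values $d_G(v, s_i)$ from $D(B, V)$; combine with $d_R(u, v)$ and take the maximum. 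The per-$u$ cost is $\tilde{O}(|V(R)|\cdot \sqrt{r}) = \tilde{O}(r^{3/2})$, and summing over the incidences $\sum_{R}|V(R)| = \tilde{O}(n)$ (which follows from \Cref{lm:r-division} since clusters are disjoint except at boundaries) yields $\tilde{O}(nr^{3/2})$ for this case.

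With $r = n^{2/(3h-1)}$, the inequality $nr^{3/2} \le n^2/\sqrt{r}$ is equivalent to $r^2 \le n$, which holds since $2/(3h-1)\le 1/2$ whenever $h\ge 2$; under the paper's standing assumption $h\ge 4$ this is very comfortable. Thus both sub-cases fit in $\tilde{O}(n^2/\sqrt{r})$, proving the lemma. The main obstacle I anticipate is precisely the $u\in R$ case: the Step-2 table is not directly usable because of the minimum between the outside route and $d_R(u, v)$, so one is forced to re-enter the cluster with a BFS and re-evaluate patterns per vertex, and the running-time budget is tight enough that the choice of $r = n^{2/(3h-1)}$ is essentially calibrated to just absorb this cost.
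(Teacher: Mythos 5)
Your proposal is correct and follows essentially the same route as the paper: $O(1)$ table lookups (pattern pointer, furthest-vertex distance, $d_G(u,s_0)$ from $D(B,V)$) for the $u\notin R$ case, costing $\tilde{O}(n\cdot|\mathcal{R}|)=\tilde{O}(n^2/\sqrt{r})$, plus a direct in-cluster computation when $u\in R$. You are in fact a bit more careful than the paper on the $u\in R$ case: the paper charges only $O(|V(R)|)$ per vertex, glossing over the $|\partial R|$ factor needed to evaluate $d(\patt_u,v)$ for every $v\in V(R)$, whereas your $\tilde{O}(nr^{3/2})$ accounting makes this explicit and still fits within $\tilde{O}(n^2/\sqrt{r})$ because $r^2\le n$ for the chosen $r=n^{2/(3h-1)}$.
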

\begin{proof} First we bound the running time to compute $\ecc(u)$ for a given vertex $u\in V$.  For the cluster $R\in \mathcal{R}$ such that $u\in R$, computing $\Delta(u,R)$ takes $O(|V(R)|) = O(r)$ time. If $u\not\in R$, we can look up (the ID of) the pattern $\patt_u$ in $T^{(1)}_R$ in $O(1)$ time. Given $\patt_u$, we can lookup $d(\patt_u,v)$ in $O(1)$ time from $T^{(2)}_R$ constructed in (Step 2).  Furthermore, $d_G(u,s_0)$ can be found  directly from $D(B,V)$ in $O(1)$ time. Thus, the running time to compute $\Delta(u,R)$ is $O(1)$. We conclude that the total running time to compute $\Delta(u,R)$ for all $R$ is $O(r + |\mathcal{R}|)$. By \Cref{lm:r-division}, 
	\begin{equation*}
		O(r + |\mathcal{R}|) = \tilde{O}(r + n/\sqrt{r}) = \tilde{O}(n/\sqrt{r})
	\end{equation*}
	as $r = n^{2/(3h-1)} \leq n^{0.2}$ with $h\geq 4$.  This means that the total running time to compute all the eccentricities is $\tilde{O}(n^2/\sqrt{r})$.
\end{proof}

By \Crefrange{lm:diam-Step1}{lm:diam-Step3}, the total running time to compute all the eccentricities (and hence the diameter) of $G$ is:

\begin{equation}
	\tilde{O}(\frac{n^2}{\sqrt{r}} + n r^{(3h-2)/2}) = \tilde{O}(n^{2-\frac{1}{3h-1}})
\end{equation}
when $r = n^{2/(3h-1)}$.

\subsubsection{Wiener Index}\label{subsec:Wiener}

We show how to compute Wiener index in truly subquadratic time as described in \Cref{cor:dimater-unweighted}. For any two set of vertices $X,Y\subseteq V$, let $W(X,Y)  = \sum_{x\in X}\sum_{y\in Y}d_G(x,y)$. The Wiener index of $G$ is $\frac{1}{2}W(V,V)$, and thus our goal is to compute $W(V,V)$. Let $\mathcal{R}$ be an $r$-division of $G$ computed by \Cref{lm:r-division} for  $r = n^{2/(3h-1)}$.  Let $R^{\circ} = V(R)\setminus \partial R$. Recall that in \Cref{subsec:diameter-unweighted}, we define  $B = \cup_{R\in \mathcal{R}}\partial R$.  Observe that:
\begin{equation}\label{eq:W-VV}
	\begin{split}
		W(V,V) &= W(B,V) + \sum_{R\in \mathcal{R}} W(R^{\circ}, V)
		\\&=  W(B,V) + \sum_{R\in \mathcal{R}}W(R^{\circ}, V(R)) + \sum_{R\in \mathcal{R}}W(R^{\circ}, V\setminus V(R))
	\end{split}
\end{equation}

First, we focus on computing $ W(B,V)$ and $\sum_{R\in \mathcal{R}}W(R^{\circ}, V(R))$.

\begin{lemma}\label{lm:WVB} $W(B,V)$ can be computed in time $\tilde{O}(n^2/\sqrt{r})$ and $\sum_{R\in \mathcal{R}}W(R^{\circ}, V(R))$ can be computed in time $O(nr)$.
\end{lemma}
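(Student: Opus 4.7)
The plan is to handle the two sums separately. For $W(B,V)=\sum_{b\in B}\sum_{v\in V} d_G(b,v)$, I would simply sum the $|B|\cdot|V|=\tilde{O}(n^2/\sqrt{r})$ entries of the distance table $D(B,V)$ already produced in \Cref{obs:Tcal-Time}; this fits within the stated $\tilde{O}(n^2/\sqrt{r})$ budget and requires no extra shortest-path work.

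For $\sum_{R\in\mathcal{R}} W(R^{\circ},V(R))$, the technical difficulty is that a shortest $u$--$v$ path in $G$ with $u\in R^{\circ}$ and $v\in V(R)$ may exit and re-enter $R$ arbitrarily many times, so $d_G(u,v)$ in general differs from $d_R(u,v)$. My plan is to introduce, for each cluster $R$, a \emph{shortcut graph} $\hat{R}$ on vertex set $V(R)$ consisting of $E(R)$ together with an extra weighted edge of weight $d_G(b_1,b_2)$ for every pair $b_1,b_2\in \partial R$; all these weights are available in $D(B,V)$, so $\hat{R}$ is built in $\tilde{O}(r)$ time per cluster. The hard part will be verifying $d_{\hat{R}}(u,v)=d_G(u,v)$ for all $u,v\in V(R)$: since $R$ is an induced subgraph, any shortest $G$-path between vertices of $V(R)$ decomposes into alternating segments inside $R$ and segments outside $R$ whose endpoints lie in $\partial R$, and each outside segment can be replaced by its shortcut edge without changing the length; conversely every walk in $\hat{R}$ corresponds to a walk in $G$ of the same length, giving both inequalities.

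With this equivalence established, I would run Dijkstra from each $u\in R^{\circ}$ in $\hat{R}$ and accumulate its distances to the vertices of $V(R)$ into the running total. By \Cref{lm:r-division} we have $|V(\hat{R})|\le r$, while $|E(R)|=O(r)$ by $K_h$-minor-freeness and $|\partial R|^2=\tilde{O}(r)$, so $\hat{R}$ has $\tilde{O}(r)$ edges and each Dijkstra call costs $\tilde{O}(r)$. Since the interiors $R^{\circ}$ partition $V\setminus B$ and hence have total size $O(n)$, the combined cost across all clusters is $\tilde{O}(r)\cdot O(n)=\tilde{O}(nr)$, matching the claimed $O(nr)$ bound up to polylogarithmic factors. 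Everything else is routine bookkeeping: keep a scalar accumulator, add the sum of distances from each Dijkstra call, and report the final value.
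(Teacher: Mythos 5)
Your handling of $W(B,V)$ is exactly the paper's: sum the $|B|\cdot n$ entries of $D(B,V)$ from \Cref{obs:Tcal-Time}. For $\sum_{R\in\mathcal{R}}W(R^{\circ},V(R))$, however, you take a genuinely different route. The paper simply runs a BFS inside each cluster $R$ from every $u\in R^{\circ}$, charging $O(r)$ per source for a clean $O(nr)$ total; it does not address the possibility that a shortest $u$--$v$ path in $G$ with $u,v\in V(R)$ leaves and re-enters $R$, i.e.\ it implicitly works with $d_R$ where the definition of $W$ calls for $d_G$ (in the eccentricity algorithm the paper repairs exactly this issue by taking $\min\{d_R(u,v),\,d_G(u,s_0)+d(\patt_u,v)\}$, but no such correction appears in this lemma's proof). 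You identify that issue head-on and fix it by augmenting $R$ with shortcut edges of weight $d_G(b_1,b_2)$ between boundary pairs (all available in $D(B,V)$) and proving $d_{\hat R}=d_G$ on $V(R)\times V(R)$ via the standard decompose-into-inside/outside-segments argument, then running Dijkstra from each interior vertex. Your correctness argument is sound, and the cost analysis works since $|E(R)|=O(r)$ for constant $h$ and $|\partial R|^2=\tilde{O}(r)$ by \Cref{lm:r-division}. What each approach buys: the paper's BFS is simpler and gives the stated $O(nr)$ with no hidden factors, but rests on the unstated identification of $d_R$ with $d_G$ inside a cluster; your shortcut-graph construction is rigorous about global shortest paths at the price of a Dijkstra log factor and the $|\partial R|^2$ extra edges, yielding $\tilde{O}(nr)$ rather than $O(nr)$ --- a weakening that is immaterial for \Cref{cor:dimater-unweighted}, since the final bound $\tilde{O}(n^2/\sqrt{r}+nr^{(3h-2)/2})$ already absorbs polylogarithmic factors.
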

\begin{proof}
	By \Cref{obs:Tcal-Time}, all the distances from vertices in $B$ to vertices in $V$ (the set $D(B,V)$) can be computed in time $\tilde{O}(n^2/\sqrt{r})$, which also is the running time to compute  $W(B,V)$.
	
	By \Cref{lm:r-division}, each cluster has a size at most $r$. Furthermore, computing the distance from a vertex in $R$ to all other vertices $R$ can be done in $O(r)$ time using BFS. Thus, the running time to compute $\sum_{R\in \mathcal{R}}W(R^{\circ}, R)$ is $\sum_{R\in \mathcal{R}} O(r |R^{\circ}|)  = O(nr)$.
\end{proof}

Next, we bound the running time to compute $\sum_{R\in \mathcal{R}}W(R^{\circ}, V\setminus V(R))$.

\begin{lemma}\label{lm:WRV} $\sum_{R\in \mathcal{R}}W(R^{\circ}, V\setminus V(R))$ can be computed in time $\tilde{O}(n r^{(3h-2)/2} + n^2/\sqrt{r})$,
\end{lemma}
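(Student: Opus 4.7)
The plan is to reuse the pattern machinery from Section~\ref{subsec:diameter-unweighted}, but instead of extracting a single furthest distance per pattern we will aggregate a sum. By \Cref{lm:dist-via-pattern}, for every $R\in\mathcal{R}$, every $u\notin V(R)$, and every $v\in R^{\circ}\subseteq V(R)$, we have $d_G(u,v) = d_G(u,s_0) + d(\patt_u,v)$, where $s_0$ is the first vertex of the chosen sequence $\sigma_R$ and $\patt_u$ is the pattern of $u$ w.r.t.\ $\sigma_R$. Defining
\begin{equation*}
  f_R(\patt) \;=\; \sum_{v\in R^{\circ}} d(\patt,v) \qquad \text{for each } \patt\in P_R,
\end{equation*}
this rewrites as
\begin{equation*}
  W(R^{\circ},\,V\setminus V(R)) \;=\; \sum_{u\in V\setminus V(R)} \Bigl( |R^{\circ}|\cdot d_G(u,s_0) \;+\; f_R(\patt_u)\Bigr).
\end{equation*}
The point is that the vertex-dependent part on the right reduces to two scalar lookups once $\patt_u$ is known and $f_R$ is tabulated.

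First I would invoke (Step~1) of Section~\ref{subsec:diameter-unweighted} to build $D(B,V)$, the $r$-division $\mathcal{R}$, and, for each cluster $R$, the sequence $\sigma_R$, the set of patterns $P_R$, and the table $T^{(1)}_R$ that maps each vertex $u\in V$ to (the identifier of) its pattern $\patt_u$. By \Cref{obs:Tcal-Time} and \Cref{lm:diam-Step1} this takes $\tilde O(n^2/\sqrt r)$ time.

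Next, for each cluster $R$ and each $\patt\in P_R$, I would compute $f_R(\patt)$ by iterating over $v\in R^{\circ}$ and evaluating $d(\patt,v)=\min_{0\le i\le|\partial R|-1}\{d_G(v,s_i)+\patt[i]\}$ using the precomputed entries of $D(B,V)$. Each evaluation costs $O(|\partial R|)=\tilde O(\sqrt r)$, so computing $f_R$ on all of $P_R$ costs $\tilde O(|P_R|\cdot|R^{\circ}|\cdot\sqrt r)$. Summed over clusters, \Cref{cor:pattern-count} and \Cref{lm:r-division} give
\begin{equation*}
  \sum_{R\in\mathcal R} \tilde O\bigl( r^{3(h-1)/2}\cdot|R^{\circ}|\cdot\sqrt r\bigr) \;=\; \tilde O\bigl(r^{(3h-2)/2}\bigr)\sum_{R\in\mathcal R}|R^{\circ}| \;=\; \tilde O\bigl(n\,r^{(3h-2)/2}\bigr).
\end{equation*}
Store these values in a table indexed by pattern identifier within $R$.

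Finally, for each cluster $R$ I would sweep over all $u\in V\setminus V(R)$, look up $\patt_u$ in $T^{(1)}_R$, $d_G(u,s_0)$ in $D(B,V)$, and $f_R(\patt_u)$ in the table just built, and accumulate $|R^{\circ}|\cdot d_G(u,s_0) + f_R(\patt_u)$; this is $O(1)$ per vertex-cluster pair, for a total of $O(n|\mathcal R|)=\tilde O(n^2/\sqrt r)$. Adding the three phases yields the claimed $\tilde O(n\,r^{(3h-2)/2} + n^2/\sqrt r)$ bound. The only subtle point worth flagging, which is already handled by \Cref{lm:dist-via-pattern}, is that for $u\notin V(R)$ every shortest path $\pi(u,v,G)$ to $v\in V(R)$ must cross $\partial R$ because $R$ is a connected induced subgraph; this is what lets us replace $d_G(u,v)$ by $d_G(u,s_0)+d(\patt_u,v)$ and thereby factor the computation through patterns.
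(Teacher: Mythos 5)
Your proposal is correct and follows essentially the same route as the paper: it invokes \Cref{lm:diam-Step1} for the patterns, uses \Cref{lm:dist-via-pattern} to split $W(u,R^{\circ})$ into $|R^{\circ}|\,d_G(u,s_0)$ plus a per-pattern sum, tabulates $\sum_{v\in R^{\circ}} d(\patt,v)$ for all patterns in $\tilde O(n\,r^{(3h-2)/2})$ total time, and finishes with an $O(1)$-per-(vertex, cluster) lookup sweep costing $\tilde O(n^2/\sqrt r)$. No gaps; this matches the paper's argument.
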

\begin{proof} First we compute the set of patterns $\{P_R\}_{R\in \mathcal{R}}$ of all clusters in $\mathcal{R}$ in time  $\tilde{O}(n^2/\sqrt{r})$ by \Cref{lm:diam-Step1}. Next, we observe that:
	\begin{equation}\label{eq:WRV}
		\begin{split}
			\sum_{R\in \mathcal{R}}W(R^{\circ}, V\setminus V(R)) = 	\sum_{R\in \mathcal{R}} \sum_{u\in V\setminus V(R)} W(u,R^{\circ})
		\end{split}
	\end{equation}
	Furthermore, by \Cref{lm:dist-via-pattern},
	\begin{equation*}
		W(u,R^{\circ}) = \sum_{v\in R^{\circ}} d_G(u,v)  = |R^{\circ}|d_G(u,s_0) + \sum_{v\in R^{\circ}} d(\patt_u,v)
	\end{equation*}
	where $s_0$ is the first vertex in the boundary sequence $\sigma_R$ of cluster $R$. The distance $d_G(u,s_0)$ is already computed, i.e, $d_G(u,s_0)\in D(B,V)$. 
	
	In \Cref{lm:diam-Step2}, we find the furthest vertex from each pattern $\patt \in P_R$ by iterating over all vertices of $R$ in total time  $\tilde{O}(\sqrt{r}|V(R)|)$ time. Thus, we can compute the sum $\sum_{v\in R^{\circ}} d(\patt,v)$ in time $\tilde{O}(\sqrt{r}|V(R)|)$, and running time for to compute all the sums of \emph{all patterns} in $P_R$ is $\tilde{O}(\sqrt{r}|V(R)|r^{3(h-1)/2}) = \tilde{O}(r^{(3h-2)/2})|V(R)|$. We can think of this as preprocessing time for computing $	W(u,R^{\circ}) $. Over all clusters in $\mathcal{R}$, the total preprocessing time is:
	\begin{equation}\label{eq:processing-time}
		\sum_{R\in \mathcal{R}}\tilde{O}(r^{(3h-2)/2})|V(R)| = \tilde{O}(nr^{(3h-2)/2})
	\end{equation}
	which is the first term in the running time. 
	
	Once the sums of distances for all patterns in $P_R$ are given, we can store them in a table keyed by the ID of the patterns, and then we can look up $\sum_{v\in R^{\circ}} d(\patt_u,v)$ in $O(1)$ time. As a result, we can compute  $	W(u,R^{\circ})$ in $O(1)$ time, and hence by \Cref{eq:WRV}, $W(R^{\circ}, V\setminus V(R))$ can be computed in time:
	\begin{equation*}
		\sum_{R\in \mathcal{R}} O(n) = O(n |\mathcal{R}|) = \tilde{O}(n^2/\sqrt{r})
	\end{equation*}
	by \Cref{lm:r-division}, which is the second term in the running time.
\end{proof}

By \Cref{lm:WVB} and \Cref{lm:WRV}, the total running time to compute $W(V,V)$ is $\tilde{O}(\frac{n^2}{\sqrt{r}} + n r^{(3h-2)/2}) = \tilde{O}(n^{2-\frac{1}{3h-1}})$ when  $r = n^{2/(3h-1)}$ as claimed in \Cref{cor:dimater-unweighted}.

\subsubsection{Exact Distance Oracle}\label{subsec:oracle-undirected}

We construct the first exact distance oracle for unweighted minor-free graphs with subquadratic space-query time trade-off and subquadratic preprocessing time as described in \Cref{cor1:oracle-unweighted}.

\paragraph{Construction.} The construction has two steps:

\begin{itemize}
	\item \textbf{(Step 1).~} Construct an $r$-division $\mathcal{R}$ of $G$ with $r = n^{2/(3h-1)}$, and for each cluster $R\in \mathcal{R}$, store a set of patterns $P_R$ w.r.t an (arbitrary) sequence of boundary vertices $\sigma_R$ in a table $T_R$. We also store the exact distances of all pairs of vertices in $R$.
	\item \textbf{(Step 2).~} For each vertex $u$ and a region $R\in \mathcal{R}$: (2a) if $u\in R$, we store $d(\patt,u)$ for every pattern $\patt \in P_R$; (2b) if $u\not\in R$, we store a pointer from $u$ to  its pattern $\patt_u$ in table $T_R$ and the distance from $u$ to the first vertex in the sequence of boundary vertices $\sigma_R$. 
\end{itemize}

\paragraph{Querying distances.~} Given two vertices $u$ and $v$, if there is a region $R$ containing both $u$ and $v$, we can simply look up their distance stored at $R$ in $O(1)$ time. Otherwise, let $R_v$ be the region containing $v$.  First, we look up the distance from $u$ to the first vertex in the boundary sequence $\sigma_{R_v}$, say $s_0$, in $O(1)$ time. Then, we look up the pattern $\patt_u \in P_{R_{v}}$ of $u$ in $R_v$, and the distance $d(v, \patt_u)$ in total $O(1)$ time due to the construction in (Step 2). Finally, we return:
\begin{equation}\label{eq:exact-dist-uw}
	d_G(u,s_0) + d(\patt_u,v)~.
\end{equation}
\Cref{lm:dist-via-pattern} implies that the returned distance is $d_G(u,v)$. The total query time is $O(1)$. 

\paragraph{Space analysis.} By \Cref{cor:pattern-count}, the number of patterns is $\tilde{O}(r^{3(h-1)/2})$ and by \Cref{lm:r-division}, $|\mathcal{R}| =  \tilde{O}(n/\sqrt{r})$. The total space of (Step 1) and (Step 2(a)) is:
\begin{equation}
	\tilde{O}(\sum_{R\in \mathcal{R}}(r^{3(h-1)/2}|V(R)| + |V(R)|^2)) = \tilde{O}(\sum_{R\in \mathcal{R}}(r^{3(h-1)/2}|V(R)| + r |V(R)|)) = \tilde{O}(nr^{3h/2 - 2})
\end{equation}
as $h\geq 4$. The total space of Step 2(b) is $O(n|\mathcal{R}|) = \tilde{O}(n^2/\sqrt{r})$ by \Cref{lm:r-division}. Thus, the total space of the oracle is:

\begin{equation}
	\tilde{O}(nr^{3h/2 - 2} + n^2/\sqrt{r}) = \tilde{O}(n^{2 - \frac{1}{3h-1}})
\end{equation}
with $r = n^{2/(3h-1)}$.

\paragraph{Construction Time.} We observe that the amount of information we need to construct the distance oracle is exactly the amount of information we need to compute the diameter and the Wiener index. Thus, the running time to compute all the information is  $\tilde{O}(n^{2-\frac{1}{3h-1}})$. 

\begin{remark}\label{rm:space} We can further reduce the space of the oracle by increasing the construction time by choosing $r$ differently, or by increasing the query time using the nested $r$-division following the line of reasoning in~\cite{FMW20}.
\end{remark}

\subsubsection{Other Applications}\label{subsec:other-undirected}

Here we discuss other algorithmic applications of our \Cref{thm:LW-undirected}. 

\paragraph{Metric compression.~} Li and Parter~\cite{LP19} showed that for any two sets of vertices $S,T$ in an unweighted \emph{planar} graph of diameter $D$ such that $S$ is on the boundary of the outer face of the graph, then one can compress all the distances from $T$ to $S$ using only $\tilde{O}(|S|^3 D  + |T| |\log(|S| D)|)$ bits. Here it is instructive to think of a canonical regime where $D$ is a constant and $1\ll \poly(|S|) \ll |T|$. In this canonical regime, the compression scheme has space $\tilde{O}(\poly(|S|)  + |T|)$ bits instead of $\tilde{O}(|S|\cdot |T|)$ bits by simply storing all the distances from $T$ to $S$. This compression scheme has an application in computing the diameter of planar graphs in the distributed CONGEST model. 

Using \Cref{thm:LW-undirected}, we improve the compression scheme by Li and Parter in two aspects: $S$ is no longer restricted, and $G$ could be any minor-free graphs. In the canonical regime, the space of our compression scheme is $\tilde{O}(\poly(|S|) +  |T|)$ which is the same as Li-Parter space bound up to a  factor of $\poly(|S|)$ in the additive term.

Now we give a more formal description of our result. Let $S = \{s_0,s_1,\ldots, s_{k-1}\}$ and $T= \{t_1,t_2,\ldots, t_{\ell}\}$ where $k = |S|$ and $\ell = |T|$. For each vertex $v\in V$, we define $$\tuple(v) = \langle d_G(v,s_0), \ldots, d_G(v,s_{k-1})\rangle~,$$ which is called a \emph{distance tuple} of $v$ w.r.t $S$. Li and Pater showed in their Theorem 2.2~\cite{LP19} that the set $\tuple(V) = \{\tuple(v): v\in V\}$ has size $O(|S^3|D)$ when $S$ is on the outer face of a planar graph of diameter $D$. Hence, to compress the distances from $T$ to $S$, one only needs to store $\tuple(V)$ using $\tilde{O}(|S|^3D)$ bits and then for each $t\in T$, one stores a pointer from $t$ to its corresponding distance tuple $\tuple(t)\in \tuple(V)$. Here, we show that in our more general setting where $S$ has no restriction and $G$ is $K_h$-minor-free, the number of tuples is bounded by $O((|S|\cdot D)^{O(h)}) = O(\poly(|S|))$ for fixed $h$ and $D$. This implies our result on the metric compression.

\begin{lemma}\label{lm:distance-tuple} $|\tuple(V)|  = O(|S|^{h-1}\cdot D^h)$ when $G$ is a $K_h$-minor-free  graph and has diameter at most $D$.
\end{lemma}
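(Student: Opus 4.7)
The plan is to reduce $|\tuple(V)|$ to counting sets in $\wlp_{G,M}(S)$ for a suitable $M$ and then apply the VC-dimension bound of \Cref{thm:LW-undirected} together with the Sauer--Shelah Lemma. The key observation is that a distance tuple $\tuple(v)$ can be split into two pieces: the ``anchor'' $d_G(v,s_0)$ and the vector of offsets $\bigl(d_G(v,s_i)-d_G(v,s_0)\bigr)_{1\le i\le k-1}$; the set $\widehat{X}_v$ in \Cref{def:LP-variant} is essentially a flattening of the offset vector.

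Concretely, I would take $M=\{-D,-D+1,\dots,D\}$, which has size $2D+1$ and (since $G$ is unweighted of diameter at most $D$) contains every possible value of $d_G(v,s_i)-d_G(v,s_0)$. For any $v$, the offset $d_G(v,s_i)-d_G(v,s_0)$ can be recovered from $\widehat{X}_v$ as $\min\{\Delta\in M:(i,\Delta)\in\widehat{X}_v\}$, so the map $v\mapsto \widehat{X}_v$ is injective on the set of offset vectors. Hence the number of distinct offset vectors is at most $|\wlp_{G,M}(S)|$. By \Cref{thm:LW-undirected} the VC-dimension of $\wlp_{G,M}(S)$ is at most $h-1$, and the ground set $[k-1]\times M$ has size $(k-1)(2D+1)=O(|S|\cdot D)$. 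Applying Sauer--Shelah (\Cref{lm:SS}) yields
\begin{equation*}
|\wlp_{G,M}(S)| \;=\; O\bigl((|S|\cdot D)^{h-1}\bigr).
\end{equation*}

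Finally, since $G$ is unweighted with diameter at most $D$, the anchor $d_G(v,s_0)$ takes at most $D+1$ distinct integer values. Each distance tuple is determined by the pair (anchor, offset vector), so
\begin{equation*}
|\tuple(V)| \;\le\; (D+1)\cdot O\bigl((|S|\cdot D)^{h-1}\bigr) \;=\; O\bigl(|S|^{h-1}\cdot D^h\bigr),
\end{equation*}
which is exactly the claimed bound.

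There is no real obstacle here: once one notices the anchor/offset decomposition and the fact that $\widehat{X}_v$ losslessly encodes the offsets when $M$ is chosen to cover $[-D,D]$, the bound falls out of \Cref{thm:LW-undirected} and Sauer--Shelah. The only point requiring mild care is justifying that distances are integers in $[0,D]$ so that the finite choice $M=\{-D,\dots,D\}$ suffices; this uses the standing assumption of \Cref{subsec:apps-undirected} that $G$ is unweighted.
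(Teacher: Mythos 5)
Your proposal is correct and follows essentially the same route as the paper: choose $M=\{-D,\dots,D\}$, observe that flattening the offset vector $\bigl(d_G(v,s_i)-d_G(v,s_0)\bigr)_i$ gives an injection into $\wlp_{G,M}(S)$, bound that family by $O((|S|\cdot D)^{h-1})$ via \Cref{thm:LW-undirected} and Sauer--Shelah, and multiply by the $O(D)$ possible values of the anchor $d_G(v,s_0)$. Your explicit recovery of each offset as $\min\{\Delta:(i,\Delta)\in\widehat{X}_v\}$ is a slightly more careful justification of the bijection the paper asserts, but the argument is the same.
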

\begin{proof}
	The proof is the same as the proof of \Cref{lm:pattern-bound-undir}. The only difference is that now  $ -D \leq d_G(v,s_i)- d_G(v,s_0) \leq D$ by the triangle inequality. Let's fix the distance from $d_G(v,s_0)$, and $M = \{-D,\ldots, -1,0,1,\ldots D\}$. Let $\bar{p}_v$ be a set obtain by adding pairs  $(i,\Delta)$ for $\Delta \in M$ such that $d_G(v,s_i)-d_G(v,s_{i-1})\leq \Delta$  to the set $\bar{p}_v$ for each $i\in [1,k-1]$. Then  there is a bijection between the set $\{\bar{p}_v\}_{v\in V}$ and  $\wlp_{G,M}$.  By the Sauer–Shelah Lemma (\Cref{lm:SS}), we have  $|\wlp_{G,M}| = O((|S||M|)^{h-1}) = O((|S|\cdot D)^{h-1})$. As we have $D$ choices for $d_G(v,s_0)$, the number of different distance tuples is at most  $O((|S|\cdot D)^{h-1} \cdot D) = O(|S|^{h-1}D^h)$.  
\end{proof}

\paragraph{Computing diameter and all-vertices eccentricities in low-treewidth minor-free graphs.~} Abboud, Williams, and Wang~\cite{AWW16} studied the problem of computing diameter in unweighted graphs of treewidth $k$. They showed surprisingly that, there exists a constant $c > 0$ such that for any $k \leq c\cdot\log n$, under the Strong Exponential Time Hypothesis (SETH), there is no algorithm with running time $n^{2-\eps}2^{\Omega(k)}$ to compute the diameter for any fixed $\eps > 0$. That is, if one insists on having an algorithm with truly subquadratic time, one has to pay an \emph{exponential dependency} on the treewidth. They also presented an algorithm for distinguishing diameter 2 vs diameter 3 graphs with running time $n^{1+o(1)}2^{O(k\log k)}$. Husfeldt~\cite{Husfeldt17} designed an improved algorithm with running time $O(d^{O(k)}n)$ where $d$ is the diameter using dynamic programming. An open question is to design in algorithm with running time $O(d^{O(1)}2^{O(k)}n)$. 

We show that if the input graph $G$ has treewidth $k$, and in addition, is $K_h$-minor-free, for a fixed constant $h$, then one can find the diameter of $G$ in time $O((dk)^{O(1)} n)$. Notably, the dependency on the treewidth $k$ is polynomial instead of exponential. We note that the class of $K_h$-minor-free graphs of treewidth $k$ includes well-studied classes of graphs, such as  $k$-outerplanar graphs,  Halin graphs, and series-parallel graphs.

Here, we sketch our argument. The basic idea is to use \Cref{thm:LW-undirected} to optimize the running time of the dynamic programming algorithm by Husfeldt~\cite{Husfeldt17} (for computing all-vertices eccentricities and hence diameter). For each bag $B = \{s_0,s_1,\ldots, s_{k-1}\}$ of size $k$ in the tree decomposition, the dynamic program keeps track of all the distance tuples of vertices in the graph induced by vertices in descendant bags of $B$ (and including $B$). The maximum number of distance tuples is $d^{O(k)}$, which results in running time  $d^{O(k)}n$. When $G$ is $K_h$-minor-free, then by \Cref{lm:distance-tuple}, the number of distance tuples is $\poly(k\cdot d)$, and hence the running time of the dynamic program becomes  $\poly(k\cdot d)n$.

\paragraph{Approximate distance oracles in planar graphs.~} In \cite{Le22}, Le constructed a $(1+\eps)$-approximate distance oracle for planar graphs with $\tilde{O}(n/\eps^{o(1)})$ space and $\tilde{O}(1)$ query time. That is, the space-query product trade-off depends sublinearly on $1/\eps$. A key ingredient of the construction is a polynomial bound on the number of (approximate) distance tuples by Li and Parter~\cite{LP19}. Our set system $\wlp_{G,M}$ also gives a polynomial bound on the number of such distance tuples and hence could be used in the same way to derive the result in  \cite{Le22}.

\section{VC Dimension of Digraphs and Applications}

In this section, $G = (V,E)$ denotes a $K_h$-minor-free \emph{digraphs}. $G$ could be weighted or unweighted. In bounding the VC-dimension, we allow edges of $G$ to have arbitrary non-negative weights, while in the algorithmic applications, $G$ is unweighted.

\subsection{VC dimension of \texorpdfstring{$\vec{\lp}_{G,M}$}{DiLP}}\label{subsec:VC-dim-vecLP}

In this section, we prove \Cref{thm:LW-digraph}, which we restate below.

\LWDir*

Suppose that $\vlp_{G,M}$ shatters a set $X = \{(s_{1},\Delta_1), (s_{1},\Delta_2)\ldots, (s_{q},\Delta_q)\}$ of size $q$. Our goal is to show that (the undirected counterpart of) $G$ has a clique minor of size at least $\lfloor \sqrt{q} \rfloor$, which gives the bound on the VC dimension of $\vlp_{G,M}$, as $\lfloor \sqrt{q} \rfloor\leq h-1$. The major difficulty in the proof is that, in digraphs, we do not have strong properties of $\bunch(s_{i})$---we construct $\bunch(s_{i})$ in the same way---as we do in the proof of \Cref{thm:LW-undirected} in \Cref{subsec:Vc-dim-LPhat}. More precisely, \Cref{lm:bunch-disjiont2} no longer holds. This makes the construction of the clique minor more difficult, and as a result, we could not show the linear bound on the VC dimension. On the other hand, we show that an analog of \Cref{obs:disjoint} suffices for our construction of a clique minor of size $\sqrt{q}$.

We now present the proof. By the same reasoning in \Cref{obs:s-diff}, we have that $s_{i}\not=s_{j}$ for all $i\not= j$. For every pair $(s_{i},\Delta_j), (s_{j},\Delta_j)$, let $t_{ij}$ be such that $\lvec{X}_{t_{ij}} \cap (V\times M) = \{(s_{i},\Delta_j), (s_{j},\Delta_j)\}$.

\begin{figure}[!htb]
	\centering{\includegraphics[width=.8\textwidth]{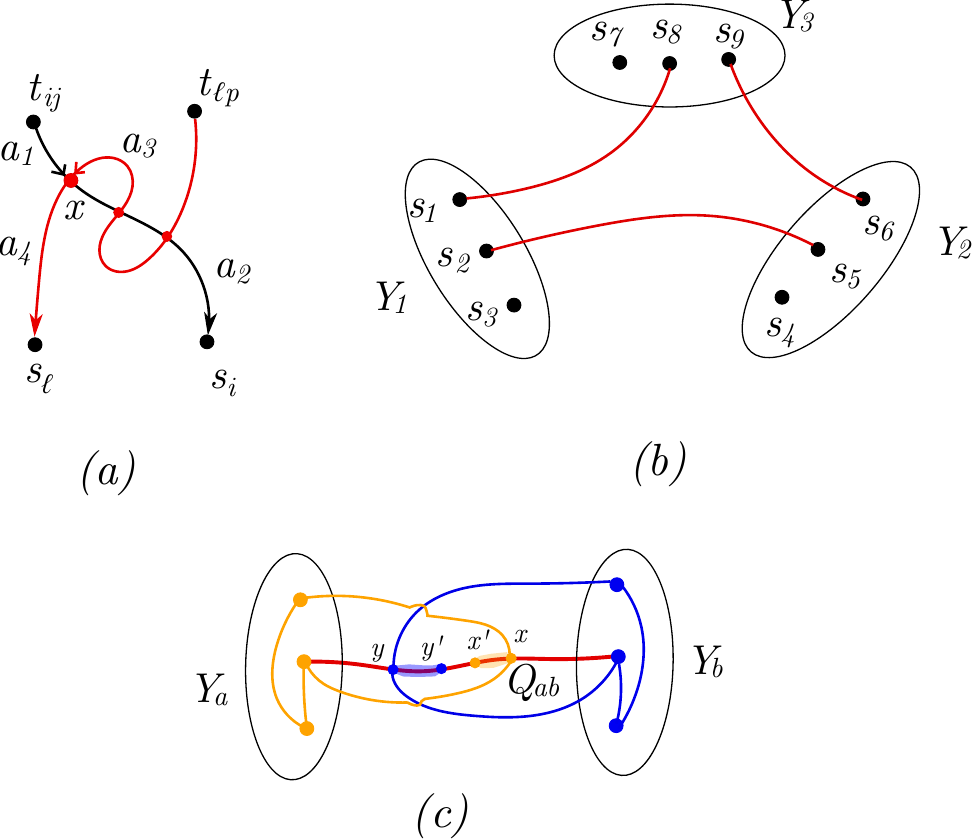}}
	\caption{(a) Assume that $x\in \pi(t_{ij}\rightarrow s_{i},G)\cap  \pi(t_{\ell p}\rightarrow s_{\ell},G)$; (b) three sets $Y_1,Y_2,Y_3$ when $k=9$ and three paths $Q_{12},Q_{23},Q_{13}$ where no two paths share the same endpoint; (c) $Q_{ab}[x,x']$ is added to $H_a$ (orange) and $Q_{ab}[y,y']$ is added to $H_b$ (blue).}\label{fig:directed-VC}
\end{figure}

\begin{lemma}\label{lm:LP-disjoint} If $i,j, \ell,p \in [q]$ are pairwise different, then $\pi(t_{ij}\rightarrow s_{i}, G)\cap \pi(t_{\ell p}\rightarrow s_{\ell}, G)=\emptyset$.
\end{lemma}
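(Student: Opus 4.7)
The plan is to adapt the undirected argument of \Cref{clm:jk-12-dist} to the directed setting, replacing the tie-breaking via $\widehat{G}$ and the auxiliary vertices $v_{ij}, b_1, b_2$ with a more direct use of the directed triangle inequality. Here $t_{ij}$ itself plays the role that $v_{ij}$ did in the undirected case, so there is no need to introduce a perturbed graph.

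I will argue by contradiction. Suppose there exists $x \in \pi(t_{ij}\rightarrow s_i, G)\cap \pi(t_{\ell p}\rightarrow s_\ell, G)$. Set $a_1 = d_G(t_{ij}\rightarrow x)$, $a_2 = d_G(x\rightarrow s_i)$, $a_3 = d_G(t_{\ell p}\rightarrow x)$, and $a_4 = d_G(x\rightarrow s_\ell)$; see \Cref{fig:directed-VC}(a). Since $x$ lies on the respective directed shortest paths, $a_1 + a_2 = d_G(t_{ij}\rightarrow s_i)$ and $a_3 + a_4 = d_G(t_{\ell p}\rightarrow s_\ell)$. The memberships $(s_i, \Delta_i) \in \lvec{X}_{t_{ij}}$ and $(s_\ell, \Delta_\ell) \in \lvec{X}_{t_{\ell p}}$ then give
\begin{equation*}
a_1 + a_2 + a_3 + a_4 \;\leq\; \Delta_i + \Delta_\ell + d_G(t_{ij}\rightarrow s_0) + d_G(t_{\ell p}\rightarrow s_0).
\end{equation*}

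For the contradicting lower bound, I will exploit pairwise distinctness of $i, j, \ell, p$: since $\ell \notin \{i,j\}$, we have $(s_\ell, \Delta_\ell)\notin \lvec{X}_{t_{ij}}$, so $d_G(t_{ij}\rightarrow s_\ell) > \Delta_\ell + d_G(t_{ij}\rightarrow s_0)$. The directed triangle inequality through $x$ gives $a_1 + a_4 \geq d_G(t_{ij}\rightarrow s_\ell)$, and hence $a_1 + a_4 > \Delta_\ell + d_G(t_{ij}\rightarrow s_0)$. The symmetric argument using $i \notin \{\ell, p\}$ yields $a_2 + a_3 > \Delta_i + d_G(t_{\ell p}\rightarrow s_0)$. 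Summing these two strict inequalities contradicts the displayed upper bound, so no such $x$ exists.

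The main subtlety here is not the calculation, which is essentially a transcription of the undirected argument modulo orientation, but rather the asymmetry of directed distances: the directed triangle inequality must be applied in the outgoing direction from $t_{ij}$ (resp.\ $t_{\ell p}$), because $\lvec{X}_v$ only encodes distances out of $v$. This forces the hypothesis that $s_i$ and $s_\ell$ each lie outside the defining index pair of the other $t$-vertex, which is why the lemma is stated only for pairwise-distinct indices. A directed analog of the undirected \Cref{clm:jk-12-b2} (the case where the intersection target sits in the other pair) does not seem to follow from this technique, and I expect this to be the reason the downstream minor-building argument for \Cref{thm:LW-digraph} must group indices and settle for the quadratic $O(h^2)$ bound.
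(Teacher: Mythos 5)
Your proof is correct and is essentially the paper's own argument: the same four distances $a_1,\dots,a_4$, the same upper bound from the shattering memberships at $t_{ij}$ and $t_{\ell p}$, and the same strict lower bounds from the non-memberships $(s_\ell,\Delta_\ell)\notin \lvec{X}_{t_{ij}}$ and $(s_i,\Delta_i)\notin \lvec{X}_{t_{\ell p}}$ via the directed triangle inequality through $x$. Your closing remarks also match the paper's discussion of why the bunch-disjointness claim fails in the directed setting, forcing the grouping argument and the $O(h^2)$ bound.
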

\begin{proof}
	Suppose otherwise, there exists $x\in \pi(t_{ij}\rightarrow s_{i}, G)\cap \pi(t_{\ell p}\rightarrow s_{\ell}, G)$; see \Cref{fig:directed-VC}(a). Let:
	\begin{equation*}
		\begin{split}
			a_1 = d_G(t_{ij}\rightarrow x)&\qquad a_2 = d_G(x\rightarrow s_{i})\\
			a_3 = d_G(t_{\ell p}\rightarrow x)&\qquad a_4 = d_G(x\rightarrow s_{\ell})
		\end{split}
	\end{equation*}
	Then 	$a_1 + a_2 \leq d_G(t_{ij}\rightarrow s_0) + \Delta_i$ and $a_3+ a_4\leq  d_G(t_{\ell p}\rightarrow s_0) + \Delta_{\ell}$. Thus, we have:
	\begin{equation*}
		a_1 + a_2 + a_3+ a_4\leq  d_G(t_{ij}\rightarrow s_0) + d_G(t_{\ell p}\rightarrow s_0)  + \Delta_i + \Delta_{\ell}
	\end{equation*}
	Furthermore, since $(s_{\ell},\Delta_{\ell})\not\in X_{t_{ij}}$, we have $d_G(t_{ij}\rightarrow s_{\ell}) > d_G(t_{ij}\rightarrow s_0)+\Delta_{\ell}$. This implies that  $a_1+ a_4 > d_G(t_{ij}\rightarrow s_0)+\Delta_{\ell}$. By the same argument, $a_2+a_3 > d_G(t_{\ell p}\rightarrow s_0)+\Delta_i$. Thus, $	a_1 + a_2 + a_3+ a_4>  d_G(t_{ij}\rightarrow s_0) + d_G(t_{\ell p}\rightarrow s_0)  + \Delta_i + \Delta_{\ell}$, a contradiction.
\end{proof}

We now ignore the direction of $G$ and focus on constructing a clique minor of size $\lfloor \sqrt{q} \rfloor$. 
Let $\pi(t_{ij}\not\rightarrow s_i,G)$ be the undirected path obtained by ignoring the direction of edges in $\pi(t_{ij}\rightarrow s_i,G)$. For every $i\not=j$ we denote by $P_{ij}$ the path from $s_i$ to $s_j$ obtained by simplifying the (undirected) walk from $s_i$ to $s_j$ obtained by gluing two paths $\pi(t_{ij} \not\rightarrow s_i,G)$ and $\pi(t_{ij} \not\rightarrow s_j,G)$ at $t_{ij}$. \Cref{lm:LP-disjoint} implies:

\begin{corollary}\label{cor:disjoint} $P_{ij}\cap P_{\ell p}  = \emptyset$ when $i,j,\ell,p$ are pairwise different.
\end{corollary}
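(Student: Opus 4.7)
The plan is to unpack the definition of $P_{ij}$ and then reduce to four applications of \Cref{lm:LP-disjoint}. By construction, $P_{ij}$ is obtained by simplifying the undirected walk formed by gluing $\pi(t_{ij}\not\rightarrow s_i,G)$ and $\pi(t_{ij}\not\rightarrow s_j,G)$ at $t_{ij}$. Simplifying a walk can only delete vertices and edges, so as a vertex/edge set
\[
P_{ij} \;\subseteq\; \pi(t_{ij}\not\rightarrow s_i,G) \,\cup\, \pi(t_{ij}\not\rightarrow s_j,G),
\]
and likewise $P_{\ell p} \subseteq \pi(t_{\ell p}\not\rightarrow s_\ell,G) \cup \pi(t_{\ell p}\not\rightarrow s_p,G)$. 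Since ignoring edge orientations does not change which vertices/edges a path visits, the intersection $P_{ij}\cap P_{\ell p}$ sits inside the union of four pairwise intersections of the underlying directed shortest paths.

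The main step, then, is to check that each of these four intersections is empty. In every one of them the four indices that appear on $t$ and $s$ are a permutation of $\{i,j,\ell,p\}$ and thus still pairwise distinct, so the hypothesis of \Cref{lm:LP-disjoint} is satisfied in each case. Applying the lemma four times gives
\[
\pi(t_{ij}\rightarrow s_a,G)\cap \pi(t_{\ell p}\rightarrow s_b,G)=\emptyset
\quad\text{for all } (a,b)\in\{i,j\}\times\{\ell,p\},
\]
and the corollary follows immediately by combining this with the inclusion above.

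I do not expect any real obstacle here: the entire content of the corollary is bookkeeping on top of \Cref{lm:LP-disjoint}. The one place to be mildly careful is the sentence ``simplifying the walk,'' to make sure that passing from the glued walk to the simple path $P_{ij}$ never introduces new vertices or edges outside of $\pi(t_{ij}\not\rightarrow s_i,G)\cup \pi(t_{ij}\not\rightarrow s_j,G)$; this is immediate from the usual definition of walk simplification (repeatedly short-circuit a repeated vertex), so no further argument is needed.
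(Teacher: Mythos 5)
Your proposal is correct and matches the paper's reasoning: the paper simply states that \Cref{lm:LP-disjoint} implies the corollary, and your argument spells out exactly the intended bookkeeping (four applications of the lemma, using that $t_{ij}$ is defined for the unordered pair so the index quadruple in each application is a permutation of $\{i,j,\ell,p\}$, plus the observation that un-orienting and simplifying the glued walk adds no new vertices). No gaps.
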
 

That is the two paths between two pairs of vertices in $X$ can intersect if and only if they share one endpoint. In this case, they could intersect in an arbitrarily complicated way.

We partition $X$ into $\sqrt{q}$ subsets $Y_1,\ldots, Y_{\sqrt{q}}$ each contains $\sqrt{q}$ vertices in $X$; for ease of notation, we assume that $\sqrt{q}$ is an integer. For every pair $(Y_a,Y_b)$ for $a,b\in [\sqrt{q}], a\not= b$, let $\mathcal{P}_{ab} = \{P_{ij}: s_i\in Y_a, s_j \in Y_b\}$ be the set of paths between $Y_a$ and $Y_b$. We then choose a path $Q_{ab}\in \mathcal{P}_{ab}$ such that the set of chosen paths, denoted by $\mathcal{Q} = \{Q_{ab}\}_{(a,b)\in [\sqrt{q}]\times [\sqrt{q}], a < b}$, has no two paths sharing the same endpoint; we can pick $\mathcal{Q}$ in a greedy manner. $\mathcal{Q}$ exists since each $Y_a$ has $\sqrt{q}$ vertices while we only need $\sqrt{q}-1$ paths in $\mathcal{Q}$ to connect $Y_a$ to other sets.  See \Cref{fig:directed-VC}(b).

We now construct a $K_{\sqrt{q}}$-minor as follows. For each $Y_a$, $a\in [\sqrt{q}]$, let $H_a = \cup_{s_i,s_j\in Y_a} P_{ij}$. Clearly, $H_a$ is connected and furthermore, by \Cref{cor:disjoint}, $V(H_a)\cap V(H_b) =\emptyset$.  Between $H_a$ and $H_b$, we have a path $Q_{ab} \in \mathcal{Q}$ that is vertex disjoint from all other paths in $\mathcal{Q}$. ($H_a$ and $H_b$ could contain vertices of $Q_{ab}$ other than its endpoints.) Since $V(H_a)\cap V(H_b) = \emptyset$, there must be a subpath $Q_{ab}[x,y]$ from a vertex $x$ to  a vertex $y$ such that $x\in H_a$ and $y\in H_b$ and no other vertex in $Q_{ab}[x,y]\setminus \{x,y\}$ belongs to $H_a\cup H_b$. (It could be that $Q_{ab}[xy]$ is an edge.) Pick an arbitrary edge $e_{ab} = (x',y')\in Q[x,y]$; we assume w.l.o.g that $x' \in Q_{ab}[x,y']$. Then we add $Q_{ab}[x,x']$ to $H_a$ and $Q_{ab}[y',y]$ to $H_b$. See Figure~\ref{fig:directed-VC}(c). Let $H'_a$ be the graph $H_a$ after applying this process to all pairs $(a,b)\in [\sqrt{q}]\times[\sqrt{q}], a < b$. Then $\{H'_a\}_{a\in [\sqrt{q}]}$  are pairwise vertex-disjoint, and there is an edge connecting every pair of graphs. These graphs induce a $K_{\sqrt{q}}$ of $G$, as desired.

\subsection{VC dimension of \texorpdfstring{$\vec{\mathcal{B}}(G)$}{BG}}\label{subsec:VC-dim-diballs}

We show \Cref{thm:diball-vc}, which states that the set system of balls $\lvec{B}(G)$ defined in \Cref{eq:diball} is a VC set system. We tailor the proof by Bousquet and Thomass{\'{e}}~\cite{BT15} for the undirected case to the directed case.

\DiBallVC*

The proof follows the presentation of the proof of \Cref{thm:LW-undirected} though several details are different. Specifically, we  assume for contradiction that $\lvec{B}(G)$ shatters a set $X = \{v_1,v_2,\ldots, v_h\}\subseteq V$ of size $h$.  Then for every $i\not=j$, there is a ball $\lvec{B}(t_{ij}, r_{ij})$ such that $\lvec{B}(t_{ij}, r_{ij})\cap X = \{v_i,v_j\}$. We choose $t_{ij}$ and $r_{ij}$ such that
\begin{equation}\label{eq:choice-tij}
	r_{ij} \mbox{ is minimum.}
\end{equation}
We then can assume that $r_{ij}  = \max\{d_{G}(t_{ij}\rightarrow v_i),d_{G}(t_{ij} \rightarrow v_j)\}$ as otherwise, $r_{ij} > \max\{d_{G}(t_{ij}\rightarrow v_i),d_{G}(t_{ij} \rightarrow v_j)\}$ and we can always set $r_{ij}$ to be $\max\{d_{G}(t_{ij}\rightarrow v_i),d_{G}(t_{ij} \rightarrow v_j)\}$. Our goal is to construct a $K_h$-minor of $G$ as we did in the proof of \Cref{thm:LW-undirected}.  We observe that \Cref{obs:disjoint} remains true in this setting of digraphs.

\begin{figure}[!htb]
	\centering{\includegraphics[width=.8\textwidth]{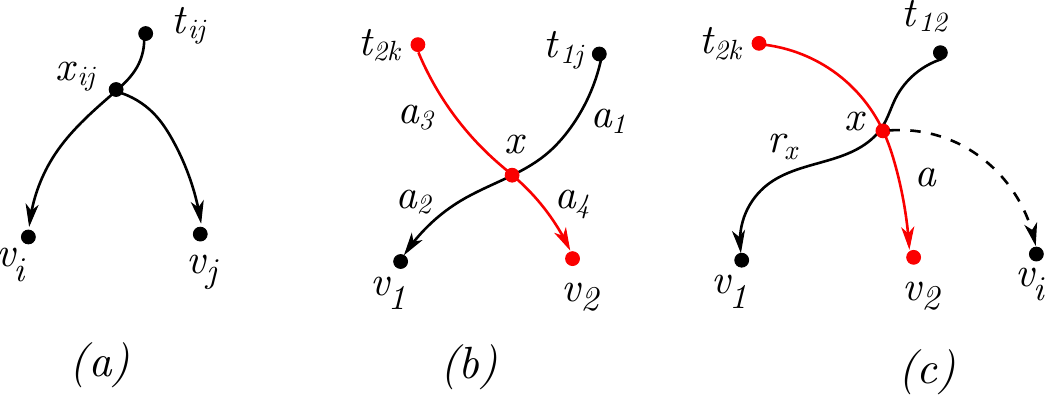}}
	\caption{Illustration for the proof of \Cref{thm:diball-vc}.}\label{fig:diballs}
\end{figure}

\begin{observation}\label{obs:disjont-di}	$\pi(t_{ij}\rightarrow v_i,G)$ and $\pi(t_{ij}\rightarrow v_j,G)$ are internally disjoint.
\end{observation}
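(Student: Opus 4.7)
The plan is to mimic the structure of the minimum-radius argument used by Bousquet–Thomassé in the undirected case, carefully adapted to directions. Assume for contradiction that there is an internal vertex $x$ lying in both $\pi(t_{ij}\rightarrow v_i,G)$ and $\pi(t_{ij}\rightarrow v_j,G)$. The key observation is that, because $x$ is on both shortest paths out of $t_{ij}$, we have the equalities $d_G(t_{ij}\rightarrow v_i) = d_G(t_{ij}\rightarrow x)+d_G(x\rightarrow v_i)$ and $d_G(t_{ij}\rightarrow v_j) = d_G(t_{ij}\rightarrow x)+d_G(x\rightarrow v_j)$. This suggests considering the directed ball centered at $x$ with radius $r' \defi \max\{d_G(x\rightarrow v_i),\, d_G(x\rightarrow v_j)\}$, which clearly contains $\{v_i, v_j\}$.

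The crucial step is to argue that $\lvec{B}(x, r') \cap X = \{v_i, v_j\}$, which I would prove by showing the inclusion $\lvec{B}(x, r') \subseteq \lvec{B}(t_{ij}, r_{ij})$. For any $u \in \lvec{B}(x, r')$, by the (directed) triangle inequality $d_G(t_{ij}\rightarrow u) \leq d_G(t_{ij}\rightarrow x) + d_G(x\rightarrow u) \leq d_G(t_{ij}\rightarrow x) + r'$. Plugging in the definition of $r'$ and using the two shortest-path identities above, one gets $d_G(t_{ij}\rightarrow u) \leq \max\{d_G(t_{ij}\rightarrow v_i), d_G(t_{ij}\rightarrow v_j)\} = r_{ij}$, establishing the inclusion. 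Hence $\lvec{B}(x, r')$ meets $X$ in exactly $\{v_i, v_j\}$ as well.

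The last step is to derive a contradiction with the minimality of $r_{ij}$ by showing $r' < r_{ij}$. Since $x$ is internal on both paths, $x\neq t_{ij}$, and we would like $d_G(t_{ij}\rightarrow x) > 0$ so that $r' = r_{ij} - d_G(t_{ij}\rightarrow x) < r_{ij}$. The main obstacle is precisely this strict inequality: with only non-negative edge weights, a zero-weight edge could make $d_G(t_{ij}\rightarrow x) = 0$. I would handle this exactly as in the proof of \Cref{thm:LW-undirected}: work in a perturbed digraph $\widehat{G}$ obtained via the Isolation Lemma so that all directed shortest paths are unique and have strictly positive weights on edges not already at weight $0$ in a controlled way, and strengthen the tie-breaking in (\ref{eq:choice-tij}) by choosing, among all pairs $(t_{ij}, r_{ij})$ with $r_{ij}$ minimum, one that also minimizes $d_{\widehat G}(t_{ij}\rightarrow v_i) + d_{\widehat G}(t_{ij}\rightarrow v_j)$. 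With this secondary tie-breaker, either $r'<r_{ij}$ contradicts the minimality of $r_{ij}$, or $r' = r_{ij}$ but the sum of perturbed distances from $x$ to $v_i, v_j$ is strictly smaller (since $d_{\widehat{G}}(t_{ij}\rightarrow x) > 0$ in $\widehat{G}$), again contradicting the choice. This forces $x$ not to exist, completing the proof of \Cref{obs:disjont-di}.
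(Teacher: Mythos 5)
Your proposal is correct and is essentially the paper's own argument: the paper also takes an internal vertex $x$ on both paths and observes that $\lvec{B}(x,\, r_{ij}-d_G(t_{ij}\rightarrow x))\cap X=\{v_i,v_j\}$ (your $r'$ is exactly this radius), contradicting the minimality of $r_{ij}$ in \Cref{eq:choice-tij}. Your extra perturbation/secondary tie-breaking step only addresses the degenerate possibility $d_G(t_{ij}\rightarrow x)=0$ under zero-weight edges, a case the paper's one-line proof implicitly ignores (and which is vacuous in the unweighted setting used in the applications), so it is a welcome refinement rather than a different route.
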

\begin{proof}
	Suppose otherwise; there would be a vertex $x_{ij} \in (\pi(t_{ij}\rightarrow v_i,G)\cap \pi(t_{ij}\rightarrow v_j,G))\setminus \{t_{ij}\}$. Observe that  $\lvec{B}(x_{ij},r_{ij}-d_G(t_{ij} \rightarrow  x_{ij}))\cap X = \{v_i,v_j\}$, contradicting the choice of $r_{ij}$ in \Cref{eq:choice-tij}; see \Cref{fig:diballs}(a). 
\end{proof}

For each $v_i$, we define $\bunch(v_i)$ as in \Cref{eq:bunch2}, ignoring the directions of the paths. 
\begin{equation}\label{eq:bunch-dir}
	\bunch(v_i) = \cup_{j\not=i} \pi(t_{ij}\not\rightarrow v_i,G)
\end{equation}
Here $\pi(t_{ij}\not\rightarrow v_i,G)$ is an undirected path obtained by ignoring the directions of edges in $\pi(t_{ij}\rightarrow v_i,G)$. The proof of \Cref{thm:LW-undirected} in \Cref{subsec:Vc-dim-LPhat} implies that the existence of a $K_h$-minor is reduced to showing the following lemma.
\begin{lemma}\label{lm:di-bunch-disjiont} For every $a\not=b$, $\bunch(v_a)\cap \bunch(v_b) = \{t_{ab}\}$. Furthermore, $t_{ab}$ is either an endpoint of $\bunch(v_a)$, or an endpoint of $\bunch(v_b)$, or both.
\end{lemma}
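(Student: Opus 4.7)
The plan is to follow exactly the template of the proof of the analogous \Cref{lm:bunch-disjiont2}, splitting into two sub-claims according to whether the index pair $\{j,\ell\}$ overlaps with $\{a,b\}$ (by symmetry I assume $a=1, b=2$, and take two representative bunch-paths $\pi(t_{1j}\rightarrow v_1, G)$ and $\pi(t_{2\ell}\rightarrow v_2, G)$). The bookkeeping in terms of distance differences used in the undirected $\wlp$ argument will be replaced here by bookkeeping driven purely by the minimality of the $r_{ij}$'s together with the defining property $\lvec{B}(t_{ij},r_{ij})\cap X=\{v_i,v_j\}$.

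For the first sub-claim, where $\{j,\ell\}\cap\{1,2\}=\emptyset$, I assume there is a common vertex $x$ on both directed paths and derive a contradiction from a double-counting. Because $x$ lies on the two directed shortest paths, $d_G(t_{1j}\rightarrow v_1)=d_G(t_{1j}\rightarrow x)+d_G(x\rightarrow v_1)$ and similarly for the other path; summing together with $r_{1j}\ge d_G(t_{1j}\rightarrow v_1)$ and $r_{2\ell}\ge d_G(t_{2\ell}\rightarrow v_2)$ gives an upper bound on $d_G(t_{1j}\rightarrow x)+d_G(x\rightarrow v_1)+d_G(t_{2\ell}\rightarrow x)+d_G(x\rightarrow v_2)$ of $r_{1j}+r_{2\ell}$. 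On the other hand, since $v_2\notin\lvec{B}(t_{1j},r_{1j})$ and $v_1\notin\lvec{B}(t_{2\ell},r_{2\ell})$, one has $d_G(t_{1j}\rightarrow v_2)>r_{1j}$ and $d_G(t_{2\ell}\rightarrow v_1)>r_{2\ell}$; routing each of these distances through $x$ via the directed triangle inequality gives the reverse strict inequality on the same quantity, which is the desired contradiction.

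For the second sub-claim, where $\{j,\ell\}\cap\{1,2\}\neq\emptyset$, the subcase $j=2,\ell=1$ is handled immediately by \Cref{obs:disjont-di}, so I focus on the representative subcase $j=2$, $\ell\notin\{1,2\}$ (the symmetric one is identical). Suppose $x\in\pi(t_{12}\rightarrow v_1, G)\cap\pi(t_{2\ell}\rightarrow v_2, G)$ with $x\neq t_{12}$. I first rule out the scenario $d_G(x\rightarrow v_1)<d_G(x\rightarrow v_2)$: from $v_1\notin\lvec{B}(t_{2\ell},r_{2\ell})$ together with $d_G(t_{2\ell}\rightarrow v_2)=d_G(t_{2\ell}\rightarrow x)+d_G(x\rightarrow v_2)\le r_{2\ell}$ and the triangle inequality on $t_{2\ell}\rightarrow x\rightarrow v_1$, one obtains $d_G(x\rightarrow v_1)>d_G(x\rightarrow v_2)$, a contradiction. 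Hence $d_G(x\rightarrow v_2)\le d_G(x\rightarrow v_1)$, so the ball $\lvec{B}(x, d_G(x\rightarrow v_1))$ contains both $v_1$ and $v_2$. For any other $v_k$ in this ball, the triangle inequality through $x$ together with the identity $d_G(t_{12}\rightarrow v_1)=d_G(t_{12}\rightarrow x)+d_G(x\rightarrow v_1)$ shows $v_k\in\lvec{B}(t_{12},r_{12})$ and therefore $k\in\{1,2\}$. Since $x$ is an internal vertex of $\pi(t_{12}\rightarrow v_1)$, this new ball has radius strictly less than $d_G(t_{12}\rightarrow v_1)\le r_{12}$, violating the minimality of $r_{12}$ chosen in \eqref{eq:choice-tij}.

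The endpoint assertion then follows for free: if $t_{12}$ were internal to both $\bunch(v_1)$ and $\bunch(v_2)$, it would lie on some $\pi(t_{1p}\rightarrow v_1)$ with $p\neq 2$ and some $\pi(t_{2q}\rightarrow v_2)$ with $q\neq 1$, and since then $\{p,q\}\cap\{1,2\}=\emptyset$, this contradicts the first sub-claim. The only genuinely delicate step is the elimination of the case $d_G(x\rightarrow v_1)<d_G(x\rightarrow v_2)$ in the second sub-claim, which is precisely the place where the asymmetry of directed distances could interfere; once that case is dispatched, the remainder of the argument proceeds in direct analogy with the ball-based undirected proof of Chepoi--Estellon--Vaxes and Bousquet--Thomass\'e.
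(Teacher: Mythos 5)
Your proposal is correct and follows essentially the same route as the paper's proof: the same two sub-claims split by whether $\{j,\ell\}$ meets $\{1,2\}$, the same four-term double counting using ball membership and non-membership in the disjoint case, and the same construction of a smaller ball $\lvec{B}(x,d_G(x\rightarrow v_1))$ contradicting the minimality of $r_{12}$ in the overlapping case, with the endpoint assertion deduced from the first sub-claim exactly as in the undirected argument.
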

\begin{proof}
	We follow the same proof strategy of \Cref{lm:bunch-disjiont2}: considering $a = 1$ and $b = 2$. 
	Let $\pi(t_{1j} \rightarrow v_1,G)$ and $\pi(t_{2k}\rightarrow v_2,G)$ be paths whose undirected counterparts are in $\bunch(v_1)$ and $\bunch(v_2)$, respectively.  The following claim is analogous to \Cref{clm:jk-12-dist}.
	
	\begin{claim}\label{clm:jk-12-a} If $\{j,k\}\cap\{1,2\} = \emptyset$, then $\pi(t_{1j}\rightarrow v_1,G)\cap  \pi(t_{2k}\rightarrow v_2,G) = \emptyset$.
	\end{claim}
	\begin{proof}
		Suppose otherwise, there exists $x\in \pi(t_{1j}\rightarrow v_1,G)\cap  \pi(t_{2k}\rightarrow v_2,G)$; see \Cref{fig:diballs}(b). Let:
		\begin{equation*}
			\begin{split}
				a_1 = d_G(t_{1j}\rightarrow x)&\qquad a_2 = d_G(x \rightarrow v_1)\\
				a_3 = d_G(t_{2k}\rightarrow x) &\qquad a_4 = d_G(x \rightarrow v_2)
			\end{split}
		\end{equation*}
		Since $v_1 \in \lvec{B}(t_{1j}, r_{1j})$ and $v_2 \in \lvec{B}(t_{2k}, r_{2k})$, 	$a_1 + a_2 \leq r_{1j}$ and $a_3+ a_4\leq r_{2k}$. This implies that 
		\begin{equation}\label{eq:a1234-sum1}
			a_1 + a_2 + a_3+ a_4\leq  r_{1j}  + r_{2k}
		\end{equation}
		On the other hand, $v_2\not\in  \lvec{B}(t_{1j}, r_{1j})$ and $v_1\not\in  \lvec{B}(t_{2k}, r_{2k})$, which gives	$a_1 + a_4 > r_{1j}$ and $a_2+ a_3 > r_{2k}$. This implies that $a_1 + a_2 + a_3+ a_4 >  r_{1j}  + r_{2k}$, contradicting \Cref{eq:a1234-sum1}. Thus, $x$ does not exist.
	\end{proof}
	
	The proof of the lemma follows directly from the following claim.
	\begin{claim}\label{clm:jk-12-b} If $\{j,k\}\cap\{1,2\}  \not=\emptyset$, then $\pi(t_{1j}\rightarrow v_1,G)\cap  \pi(t_{2k}\rightarrow v_2,G)\subseteq \{t_{12}\}$, and that $t_{12}$ is either an endpoint of $\bunch(v_1)$ or $\bunch(v_2)$ or both.
	\end{claim}
	\begin{proof} W.l.o.g., we assume that $j = 2$ and $k\not=1$.  Suppose that there exists $x\in \pi(t_{12}\rightarrow v_1,G)\cap  \pi(t_{2k}\rightarrow v_2,G)$ such that $x\not= t_{12}$. Let $a = d_G(x\rightarrow v_2)$. Then $d_G(x\rightarrow v_1) > a$ as otherwise, $v_1 \in \lvec{B}(t_{2k}, r_{2k})$, a contradiction. Let $r_x = d_G(x\rightarrow v_1)$. Then $\{v_2,v_1\}\subseteq \lvec{B}(x,r_x)\cap X $. We claim that $\lvec{B}(x,r_x)\cap X$ contains no other vertex other than  $v_1,v_2$; see \Cref{fig:diballs}(c).
		
		Suppose otherwise, there exists $v_i\in \lvec{B}(x,r_x)\cap X$ for $v_i \not= v_1,v_2$. Then $v_i \in  \lvec{B}(x,r_x)$  and hence $d_G(x\rightarrow v_i) \leq r_x = d_G(x\rightarrow v_1)$. This implies that $d_G(t_{12},v_i)\leq d_G(t_{12}\rightarrow v_1)\leq r_{12}$; that is, $v_i$ also belongs to the ball $\lvec{B}(t_{12},r_{12})$ contradicting the fact that $\lvec{B}(t_{12},r_{12})$  only shatters $\{v_1,v_2\}$.
		
		Since $\lvec{B}(x,r_x)\cap X$ contains no other vertex other than  $v_1,v_2$ and  $r_x < r_{12}$, we obtain a contradiction to the choice of $t_{12}$ in \Cref{eq:choice-tij}, as $\max\{d_{G}(x\rightarrow v_1),d_{G}(x \rightarrow v_2)\}  < \max\{d_{G}(t_{12}\rightarrow v_1),d_{G}(t_{12} \rightarrow v_2)\} $.
	\end{proof}
	The lemma then follows directly from \Cref{clm:jk-12-a} and \Cref{clm:jk-12-b}. 
\end{proof}

\subsection{Algorithmic Applications}

In this section, we explore algorithmic applications of two VC set systems $\vlp_{G,M}$ and $\lvec{B}(G)$.  \emph{Digraphs in this section are \emph{unweighted} and hence the distances are unweighted directed distances.} A central concept in the algorithmic applications of $\wlp_{G,M}$ in undirected graphs in \Cref{subsec:apps-undirected} is the notion of patterns and polynomial bounds on the number of patterns in a connected subgraph in \Cref{lm:pattern-bound-undir}. The same bound on the number of patterns \emph{completely breaks down} in digraphs, as the triangle inequality no longer holds. Only an asymmetric version fo the triangle inequality holds in digraphs, but this is not enough for deriving \Cref{lm:pattern-bound-undir} in digraphs. Indeed, we believe that  \Cref{lm:pattern-bound-undir} does not hold in digraphs. The implication of  not having a polynomial bound on the number of patterns is clear:  we could not easily derive analogous algorithmic results presented in \Cref{subsec:apps-undirected} for digraphs.  Instead, obtain similar results using  $\vlp_{G,M}$ and $\lvec{B}(G)$.

First, we devise a new way to exploit the set system of balls  $\lvec{B}(G)$ to design a distance oracle for digraphs with truly subquadratic space and logarithmic query time. The VC set system of balls is very hard to manipulate, as evidenced in the work of Ducoffe, Habib, and Viennot~\cite{DHV20} since it does not encode distances directly into the system. Thus, we believe that our technique is of independent interest; the details are in \Cref{subsec:oracle-digraphs}.

Second,  we modify the notion of patterns to include $\pm\infty$, called \emph{infinite patterns}, as a marker for the failure of the triangle inequality. We then are able to bound the number of infinite patterns, obtaining a lemma analogous to \Cref{lm:pattern-bound-undir}. We note that we still do not know how to exploit infinite patterns in constructing distance oracles in digraphs, as they do not enjoy the same properties as their (finite) counterpart. However, we are able to exploit infinite patterns to design truly subquadratic time algorithms for computing all-vertices eccentricities and the diameter of digraphs. The technical details are in \Cref{subsec:oracle-digraphs}.

\subsubsection{Distance oracle in digraphs.~}\label{subsec:oracle-digraphs} 

In this section, we construct an exact distance oracle for unweighted minor-free \emph{digraphs} with  $\tilde{O}(n^{2-\frac{1}{2(h-2)}})$ space and $O(\log(n))$ query time as described in \Cref{cor1:oracle-digraph}. We will use a well-known property of VC set system restricted to a subset, as described in the following lemma.




\begin{lemma}\label{lm:VC-restriction} Let $\mathcal{F}$ be a set system of a ground set $U$ of VC-dimension $d\geq 1$. Let $X$ be any subset of $U$. Then $\mathcal{F}_{X} = \{Y\cap X: Y\in \mathcal{F}\}$ has VC dimension at most $d$. We call $\mathcal{F}_{X}$ the $X$-restriction of $\mathcal{F}$. 
\end{lemma}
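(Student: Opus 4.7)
The plan is to show that any set shattered by $\mathcal{F}_X$ is already shattered by $\mathcal{F}$, and then invoke the assumed VC-dimension bound on $\mathcal{F}$. This reduces the claim to a single observation about how intersections compose.

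First I would let $Z \subseteq X$ be an arbitrary subset shattered by $\mathcal{F}_X$. By the definition of shattering, for every $W \subseteq Z$ there is a set $Y_W \cap X \in \mathcal{F}_X$ (with $Y_W \in \mathcal{F}$) such that $(Y_W \cap X) \cap Z = W$. The key step is then to observe that because $Z \subseteq X$, intersection with $X$ is redundant on elements of $Z$: $(Y_W \cap X) \cap Z = Y_W \cap (X \cap Z) = Y_W \cap Z = W$. Therefore the family $\{Y_W : W \subseteq Z\} \subseteq \mathcal{F}$ witnesses that $Z$ is shattered by $\mathcal{F}$ itself.

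Finally, since $\mathcal{F}$ has VC-dimension $d$, every subset it shatters has size at most $d$, so $|Z| \leq d$. As $Z$ was an arbitrary shattered subset for $\mathcal{F}_X$, this gives that the VC-dimension of $\mathcal{F}_X$ is at most $d$, as required.

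There is no real obstacle here; the only thing to be careful about is the trivial set-theoretic identity $(Y \cap X) \cap Z = Y \cap Z$ for $Z \subseteq X$, which is what makes the restriction operation preserve (and only possibly decrease) the VC-dimension. The lemma is the standard ``restriction'' or ``trace'' property of VC classes.
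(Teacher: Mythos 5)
Your proof is correct: the observation that $(Y_W\cap X)\cap Z = Y_W\cap Z$ whenever $Z\subseteq X$ immediately transfers any set shattered by $\mathcal{F}_X$ to a set shattered by $\mathcal{F}$, which is exactly the standard trace argument. The paper itself states this lemma as a well-known fact and gives no proof, so there is nothing to compare against; your write-up supplies the canonical argument and is complete.
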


\begin{figure}[!htb]
	\centering{\includegraphics[width=.8\textwidth]{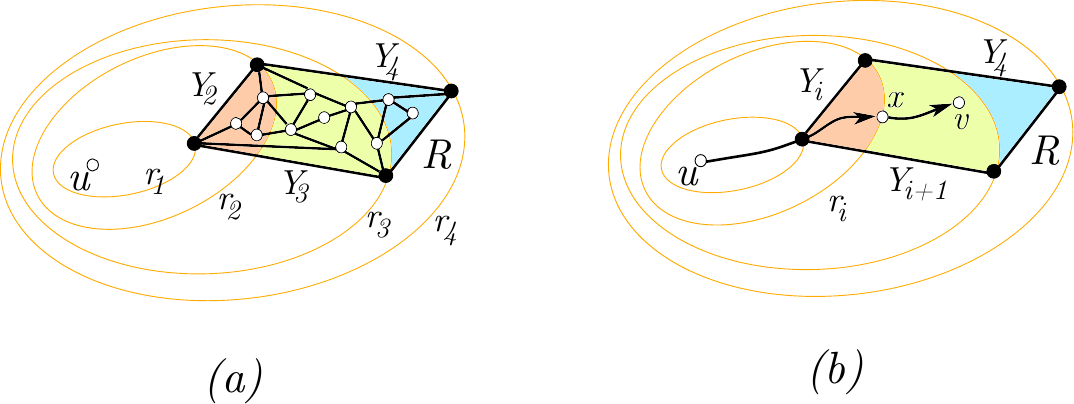}}
	\caption{(a) A region $R$ with 4 boundary vertices; the set $Y_1 = \vec{B}(u,r_1)\cap R$ only contains the boundary vertex closest to $u$. We ignore the directions of edges in $R$ in this figure for better visualization. (b) Querying distance from $u$ to $v$.}\label{fig:oracle}
\end{figure}

\paragraph{Construction.} The construction has three steps.

\begin{itemize}
	\item \textbf{(Step 1).~} Construct an $r$-division $\mathcal{R}$ of $G$ with $r = n^{2/(2h-1)}$, and for each cluster $R\in \mathcal{R}$, we store the exact distances of all pairs of vertices in $R$. Let $\lvec{\mathcal{B}}_R$ be the $V(R)$-restriction of $\lvec{\mathcal{B}}(G)$. We store (the IDs of) the sets of $\lvec{\mathcal{B}}_R$ in a table. 
	\item \textbf{(Step 2).~} For each cluster $R\in \mathcal{R}$ and each vertex $v\in R$, we store: (2a) the distance $d_G(s\rightarrow v)$ from each vertex $s\in \partial R$ to $v$; (2b) for each set $Y \in \lvec{\mathcal{B}}_R$, store  $d_G(Y\rightarrow v) \defi \min_{y\in Y} d_G(y\rightarrow v)$.
	
	\item \textbf{(Step 3).~} For each cluster $R\in \mathcal{R}$ and each vertex $u\not\in R$, let $k_R = |\partial R|$. Let $\lvec{B}(u,r_1),\ldots, \lvec{B}(u,r_{k_R})$ be a sequence of nest balls centered at $u$ where $r_1\leq r_2 \leq \ldots \leq r_{k_R}$ such that $\lvec{B}(u,r_1)$ is the smallest ball containing at least one vertex of $\partial R$, and $\lvec{B}(u,r_i)$ is the smallest ball containing at least one vertex of $\partial R\setminus \lvec{B}(u,r_{i-1})$; see \Cref{fig:oracle}(a). (The number of balls could be smaller than $k_R$; for simplicity, we assume that there are exactly $k_R$ balls.) Then we store at $u$ the radius $r_i$ and (the  IDs of) the restriction $Y_i = \lvec{B}(u,r_i)\cap V(R)$  for all $i\in [k_R]$ in a list $L(u,R)$. Note that $Y_i\in \lvec{\mathcal{B}}_R$ by the construction in (Step 1). We also store the distance $d_G(u\rightarrow s)$ from $u$ to every boundary vertex $s\in \partial R$. 
\end{itemize}

\paragraph{Querying distances.~} Given two vertices $u$ and $v$, if there is a cluster $R$ containing both $u$ and $v$, we can simply look up their distance stored at $R$ in $O(1)$ time. Otherwise, let $R$ be the cluster containing $v$. Let $Y_i = \lvec{B}(u,r_{i})\cap V(R)$. We then do a binary search on the list $L(u,R)$ to find the first radius $r_i$ such that $v\not\in Y_i$ and $v\in Y_{i+1}$; see \Cref{fig:oracle}(b). Note that we can check whether $v$ is in $Y_i$ or not in $O(1)$ time by the construction in (Step 2), in particular (2b) since $v\in Y_i$ if and only if $d_G(Y_i\rightarrow v) =0$. We then return:
\begin{equation}\label{eq:digraph-returned-dist}
	r_i + d_G(Y_i\rightarrow v)
\end{equation}
as the distance from $u$ to $v$. We note that $d_G(Y_i\rightarrow v)$ is stored in (2b) of (Step 2), so we can look up this distance in $O(1)$ time.

The query time is dominated by the time to do binary search on  $L(u,R)$, which is $O(\log |L(u,R)|) = O(\log|\partial R|) = O(\log r) = O(\log n)$.

\paragraph{Correctness.~} By the definition of $Y_{i}$ and $Y_{i+1}$, the shortest path from $u$ to $v$ must go through a vertex in $\partial R\cap Y_{i}$. Let $x$ be the last vertex on $\pi(u\rightarrow v, G)$ that is contained in $Y_i$; see \Cref{fig:oracle}(b). Then $d_G(u\rightarrow v) = d_G(u\rightarrow x) + d_G(x\rightarrow v)$. Since $G$ is unweighted, it must be that $d_G(u\rightarrow x) = r_i$. Furthermore,  $d_G(x\rightarrow v) = d_G(Y_i \rightarrow v)$ since otherwise, $d_G(x\rightarrow v) > d_G(Y_i,v)$ which means there is a path from $u$ to $v$ of length less than $d_G(u\rightarrow v)$, a contradiction. Thus, $d_G(u\rightarrow v) = r_i + d_G(Y_i \rightarrow v)$ as desired. 

\paragraph{Space analysis.~}  By \Cref{thm:diball-vc} and \Cref{lm:VC-restriction}, $\lvec{\mathcal{B}}_R$ has VC-dimension at most $h-1$. By \Cref{lm:SS}, $|\lvec{\mathcal{B}}_R| = O(r^{h-1})$ and hence the total space of Step 1 is $\tilde{O}((n/\sqrt{r})(r^{h-1}+r^2) = \tilde{O}(nr^{h-3/2}))$. The total space of Step 2 is $\tilde{O}((n/\sqrt{r})(r\cdot r^{h-1})) = \tilde{O}(nr^{h-3/2})$. For each vertex $u$ and cluster $R$ in Step 3, the total space is $O(|\partial R|)$. Thus, the total space of Step 3 is $n$ times the total number of boundary vertices, which is  $\tilde{O}(n^2/\sqrt{r})$ by \Cref{lm:r-division}. In summary, the total space of the oracle is:

\begin{equation}
	\tilde{O}(nr^{h-3/2} + n^2/\sqrt{r}) = \tilde{O}(n^{2-\frac{1}{2(h-2)}}) 
\end{equation}
when  $r = n^{1/(h-2)}$.

\subsubsection{Computing all-vertices eccentricities and diameter.}

\paragraph{Infinite patterns.} Let $H$ be an induced sub-digraph of $G$; $H$ might or might not be (even weakly) connected.  Recall that $\partial H$ is the set of all boundary vertices of $H$.  Let $r = |V(H)|$ and $b = |\partial H|$. Fix an arbitrary \emph{sequence} $\sigma_H$ of vertices of $\partial H$, which is a linear order of  $\partial H$. We write $\sigma_H = \langle s_0,s_1,\ldots, s_{b-1}\rangle$. For each vertex $v\in V$, we define an \emph{infinite pattern} of $v$ w.r.t $\sigma_H$, denoted by $\patt_v$ be a $b$-dimensional vector where for each $i \in [0, b-1]$

\begin{equation}\label{eq:infinite}
	\patt_v[i] =  \begin{cases}
		-\infty &\text{if $d_G(v\rightarrow s_i)- d_G(v\rightarrow s_0) \leq -r$}\\
		d_G(v\rightarrow s_i)- d_G(v\rightarrow s_0) &\text{if $-(r-1) \leq d_G(v\rightarrow s_i)- d_G(v\rightarrow s_0) \leq r-1$}\\
		+\infty &\text{if $d_G(v\rightarrow s_i)- d_G(v\rightarrow s_0) \geq r$}	
	\end{cases}
\end{equation}

In particular, two values $-\infty$ and $+\infty$ are used to mark that the distance $d_G(v\rightarrow s_i)$ is far smaller or larger than $d_G(v\rightarrow s_0)$. We have the following lemma analogous to \Cref{lm:pattern-bound-undir}.

\begin{lemma}\label{lm:pattern-bound-di} $H$ be an induced sub-digraph of a $K_h$-minor-free digraph $G$, and $\sigma_H$ be an arbitrary sequence of vertices in $\partial H$. Let $P = \{\patt_v: v\in V\}$ be the set of all infinite patterns w.r.t. $\sigma_H$. Then $|P| = O((|\partial H|\cdot |V(H)|)^{h^2})$. 
\end{lemma}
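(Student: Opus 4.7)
The plan is to mirror the proof of \Cref{lm:pattern-bound-undir}, substituting the directed VC-dimension bound of \Cref{thm:LW-digraph} for the undirected one and taking some care of the $\pm\infty$ entries. The truncation used in \Cref{eq:infinite} is chosen precisely so that a suitable flattening of each infinite pattern lands injectively in $\vlp_{G,M}(\sigma_H)$ for an appropriate finite $M$, after which Sauer--Shelah finishes the job.

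Concretely, I would set $r = |V(H)|$, $b = |\partial H|$, $S = \sigma_H = \langle s_0, \ldots, s_{b-1}\rangle$, and $M = \{-(r-1), \ldots, -1, 0, 1, \ldots, r-1\}$, which has $|M| = 2r-1$ elements. For each $v \in V$, flatten $\patt_v$ to the set
\[ \bar p_v = \{(i, \Delta) \in [b-1]\times M : d_G(v \to s_i) - d_G(v \to s_0) \leq \Delta\}, \]
adopting the convention that the defining inequality is vacuously true when $\patt_v[i] = -\infty$ (which covers the degenerate case in which $d_G(v \to s_0)$ is infinite while $d_G(v \to s_i)$ is finite) and vacuously false when $\patt_v[i] = +\infty$. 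Comparing with \Cref{def:LP-directed}, $\bar p_v$ is exactly $\lvec X_v$, so the image of the map $v \mapsto \bar p_v$ is the set system $\vlp_{G,M}(S)$.

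Next I would check that the induced map $\patt_v \mapsto \bar p_v$ is a bijection between $P$ and $\vlp_{G,M}(S)$. For each coordinate $i$, the slice $A_i(v) = \{\Delta \in M : (i,\Delta) \in \bar p_v\}$ alone recovers $\patt_v[i]$: if $A_i(v) = \emptyset$ then $\patt_v[i] = +\infty$; if $A_i(v) = M$ then $\patt_v[i] = -\infty$; and otherwise $\patt_v[i] = \min A_i(v)$. Hence distinct infinite patterns yield distinct flattened sets and $|P| \leq |\vlp_{G,M}(S)|$.

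To finish, I would apply \Cref{thm:LW-digraph} to bound the VC-dimension of $\vlp_{G,M}(S)$ by $h^2$ and invoke the Sauer--Shelah Lemma (\Cref{lm:SS}) on the ground set $[b-1]\times M$ of size $(b-1)(2r-1) = O(|\partial H|\cdot|V(H)|)$, obtaining $|P| \leq |\vlp_{G,M}(S)| = O((|\partial H|\cdot |V(H)|)^{h^2})$. The main thing to watch out for is the handling of infinite directed distances, since $H$ need not be strongly (or even weakly) connected; but this is precisely what the $\pm\infty$ truncation in \Cref{eq:infinite} was designed to absorb, so the conventions above suffice and no further structural argument about $H$ is needed.
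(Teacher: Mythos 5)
Your overall route is exactly the paper's: flatten each infinite pattern into the directed set system $\vlp_{G,M}(\sigma_H)$ for a finite window $M$ of size $O(|V(H)|)$, invoke \Cref{thm:LW-digraph} to get VC-dimension $h^2$, and finish with Sauer--Shelah on the ground set $[b-1]\times M$. The only difference is your choice of $M$, and that is where the argument has a genuine (though small) flaw. With $M = \{-(r-1),\ldots,r-1\}$ the pattern is \emph{not} recoverable from the flattened set $\bar p_v = \lvec{X}_v$: the slice $A_i(v)$ equals all of $M$ both when $d_G(v\rightarrow s_i)-d_G(v\rightarrow s_0) \leq -r$ (so $\patt_v[i]=-\infty$) and when the difference is exactly $-(r-1)$ (so $\patt_v[i]=-(r-1)$), since in either case every $\Delta \geq -(r-1)$ satisfies the defining inequality. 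So your recovery rule ``$A_i(v)=M \Rightarrow \patt_v[i]=-\infty$'' is wrong at this boundary, the map from patterns to sets of $\vlp_{G,M}$ need not be injective, and without injectivity the inequality $|P| \leq |\vlp_{G,M}(S)|$ does not follow (a single flattened set could a priori correspond to up to $2^{b-1}$ distinct patterns, one binary choice per coordinate whose slice is all of $M$).

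The fix is a one-element enlargement of the window: take $M$ to contain $-r$ as well (the paper uses $M=\{-r,-(r-1),\ldots,r-1,r\}$, of size at most $2r+1$). Then $-r \in A_i(v)$ if and only if $\patt_v[i]=-\infty$, while $\patt_v[i]=-(r-1)$ gives the strictly smaller slice $M\setminus\{-r\}$, so the pattern is determined by $\lvec{X}_v$ and $|P| \leq |\vlp_{G,M}(S)| = O((|\partial H|\cdot|V(H)|)^{h^2})$ exactly as you intend; the rest of your argument (the $+\infty$ case via empty or top-truncated slices, the treatment of unreachable boundary vertices, and the application of \Cref{lm:SS}) goes through unchanged and matches the paper's proof.
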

\begin{proof} The proof follows the same line of the proof of \Cref{lm:pattern-bound-undir}: we show that there is a bijection between the set of patterns $P$ and  $\wlp_{G,M}(\partial H)$ for an appropriate choice if $M$.  Let $M = \{-r, -(r-1), \ldots, (r-1), +r\}$. Observe that $|M| \leq 2r+1$. Consider the VC set system $\wlp_{G,M}(\partial H)$, which has VC dimension at most $h^2$ by \Cref{thm:LW-digraph}. By the Sauer–Shelah Lemma (\Cref{lm:SS}), we have  $|\wlp_{G,M}(\partial H)| = O((br)^{h^2})$. To see that there is a bijection between the set of patterns $P$ and  $\wlp_{G,M}(\partial H)$, we simply flatten each pattern $\patt_v$ to obtain a set $\widehat{X}_v \in \wlp_{G,M}(\partial H)$ in exactly the same way we did in \Cref{lm:pattern-bound-undir}.	
\end{proof}

We now define the distance from an infinite pattern to a vertex.  Let $v$ be a vertex in $H$, and $\patt$ be a pattern (of some vertex $u$) w.r.t. $\sigma_H$. Let $\reach(v,\partial H)$ be the set of boundary vertices of $H$ that can reach $v$ (via directed paths) \emph{in $H$}. (We do not count boundary vertices that can reach $v$ in $G$.)  We define the distance from $\patt$ to $v$, denoted by $d(\patt\rightarrow v)$, to be:
\begin{equation}\label{eq:dis-patt-dir}
	d(\patt \rightarrow v) =  \begin{cases}
		\mathrm{undefined} &\text{if $\patt[i] = +\infty$ for some $i$}\\
		\min_{s_i \in \reach(v,\partial H)}\{d_G(s_i\rightarrow v) + \patt[i]\} &\text{otherwise}	
	\end{cases}
\end{equation}

In undirected graphs, we show in \Cref{lm:dist-via-pattern} that if $u\not\in V(H)$ and $v\in V(H)$, then $d_G(u,s_0) + d_G(\patt_u,v) = d_G(u,v)$ where $\patt_u$ is the pattern of $u$. This no longer holds in digraphs. In particular, $d_G(u \rightarrow s_0) + d_G(\patt_u \rightarrow v)$ now may be undefined or larger than $d_G(u\rightarrow v)$. However, we are still able to extract information by looking at all distances $\{d_G(\patt_u \rightarrow v)\}_{v\in V(H)}$.  In particular, we show in the following lemma that we can recover \emph{the maximum distance} from $u$ to a vertex in $H$, via  $\{d_G(\patt_u \rightarrow v)\}_{v\in V(H)}$, \emph{provided that $d_G(u\rightarrow s_0)$ is the maximum among all boundary vertices}.

\begin{lemma}\label{lm:dist-dir-via-pattern} Let $u \in V\setminus V(H)$ be a vertex not in $H$, and $\patt_u$ be the pattern of $u$ w.r.t $\sigma_H$.  Define:
	\begin{equation}
		\Delta(u\rightarrow H)= d_G(u\rightarrow s_0) + \max_{v\in V(H)}\{ d(\patt_u\rightarrow v)\}
	\end{equation}
\end{lemma}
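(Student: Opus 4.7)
My plan is to prove the identity $\Delta(u\rightarrow H) = \max_{v\in V(H)} d_G(u\rightarrow v)$ under the standing hypothesis that $s_0$ realizes $\max_{s \in \partial H} d_G(u\rightarrow s)$. Let $v^* \defi \argmax_{v\in V(H)} d_G(u\rightarrow v)$ be a farthest vertex in $H$ from $u$; the target then splits into an upper bound $d_G(u\rightarrow s_0)+d(\patt_u\rightarrow v) \leq d_G(u\rightarrow v^*)$ valid for every $v\in V(H)$, and a matching lower bound achieved at $v=v^*$.

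For the upper bound, I would focus on those $v$ with $d(\patt_u\rightarrow v)$ finite, since otherwise the contribution is $-\infty$ and the bound is vacuous. On the shortest path $\pi(u\rightarrow v,G)$, let $s_k$ be the last boundary vertex of $H$; such a vertex exists because the path starts outside $V(H)$ and ends inside it, and the suffix $\pi(u\rightarrow v,G)[s_k,v]$ stays inside $V(H)$, since any edge from $V(H)$ to $V\setminus V(H)$ would mark its tail as another boundary vertex occurring after $s_k$, contradicting the choice of $s_k$. Consequently $s_k \in \reach(v,\partial H)$ and $d_G(u\rightarrow s_k)+d_G(s_k\rightarrow v) = d_G(u\rightarrow v)$. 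Rewriting $\patt_u[i] = d_G(u\rightarrow s_i)-d_G(u\rightarrow s_0)$ inside the $\min$ defining $d(\patt_u\rightarrow v)$ and using the triangle inequality $d_G(u\rightarrow s_i)+d_G(s_i\rightarrow v) \geq d_G(u\rightarrow v)$ for every $i$, the minimum is attained at $s_k$, giving $d_G(u\rightarrow s_0)+d(\patt_u\rightarrow v) = d_G(u\rightarrow v) \leq d_G(u\rightarrow v^*)$.

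For the lower bound, the task is to ensure that $d(\patt_u\rightarrow v^*)$ is finite, after which the previous paragraph forces equality. The $s_0$-maximality hypothesis immediately implies $\patt_u[i] \leq 0$ for every $i$, so no entry equals $+\infty$. To rule out a $-\infty$ entry on $\reach(v^*,\partial H)$, I would argue by contradiction: if some $s_i\in\reach(v^*,\partial H)$ satisfied $d_G(u\rightarrow s_i) \leq d_G(u\rightarrow s_0)-r$, then because $s_i$ reaches $v^*$ inside $H$ (whence $d_G(s_i\rightarrow v^*)\leq r-1$) we would have $d_G(u\rightarrow v^*) \leq (d_G(u\rightarrow s_0)-r)+(r-1) < d_G(u\rightarrow s_0)$, contradicting the maximality of $v^*$ over $V(H)$ together with $s_0\in V(H)$. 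Hence every relevant entry of $\patt_u$ is finite, the last-boundary-vertex argument applies verbatim to $v^*$, and we conclude $d_G(u\rightarrow s_0)+d(\patt_u\rightarrow v^*) = d_G(u\rightarrow v^*)$.

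The main obstacle is exactly this bookkeeping of $\pm\infty$ values: unlike the undirected analogue of \Cref{lm:dist-via-pattern}, per-vertex distance recovery genuinely fails in digraphs, and individual quantities $d(\patt_u\rightarrow v)$ can legitimately be $-\infty$. The proof goes through only because the clipping threshold $r=|V(H)|$ in Equation~\eqref{eq:infinite} matches the maximum distance any $s_i \in \reach(v,\partial H)$ can be from $v$ inside $H$, which is precisely what shields the extremal vertex $v^*$ from $-\infty$ contamination once the hypothesis on $s_0$ is in force. This delicate interaction is also what explains why the lemma can only recover the maximum distance rather than individual distances.
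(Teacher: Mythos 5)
Your proof is correct and follows essentially the same route as the paper: both arguments show that any $v$ whose reach set $\reach(v,\partial H)$ has a $-\infty$ entry can be ignored, recover $d_G(u\rightarrow v)=d_G(u\rightarrow s_0)+d(\patt_u\rightarrow v)$ for the remaining vertices via the last boundary vertex on $\pi(u\rightarrow v,G)$, and use the threshold $r=|V(H)|$ together with the maximality of $d_G(u\rightarrow s_0)$ to show the farthest vertex $v^*$ has no $-\infty$ entries on its reach set. The only cosmetic difference is that the paper additionally notes such ignored vertices also do not attain $\max_{v\in V(H)}d_G(u\rightarrow v)$, whereas you observe (correctly) that their $-\infty$ contribution already makes the upper bound vacuous.
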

If $d_G(u\rightarrow s_0) = \max_{0\leq i \leq |\partial H|-1}\{d_G(u\rightarrow s_i)\}$,  then $\Delta(u\rightarrow H) = \max_{v\in V(H)} d_G(u\rightarrow v)$.
\begin{proof}
	As $d_G(u\rightarrow s_0)$ is maximum, $d_G(u,s_i) - d_G(u\rightarrow s_0)\leq 0$ for every $i\in [0,b-1]$. Thus, no entry of $\patt_u$ is $+\infty$. Therefore, $d(\patt_u\rightarrow v)$ is defined (but could still be $-\infty$). 
	
	\begin{claim}\label{clm:minus-infty} If there exists a boundary vertex $s_i \in \reach(v,\partial H)$ such that $\patt_u[i] = -\infty$, then $d_G(u\rightarrow v) < d_G(u\rightarrow s_0)$.
	\end{claim}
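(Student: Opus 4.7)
The plan is to chain the directed triangle inequality with the definition of the $-\infty$ entry and the fact that $H$ is a subgraph of diameter at most $|V(H)|-1$ in the unweighted setting. First I would unpack what the hypothesis gives: $\patt_u[i] = -\infty$ means, by \Cref{eq:infinite}, that $d_G(u\rightarrow s_i) - d_G(u\rightarrow s_0) \leq -r$ where $r = |V(H)|$, i.e.\ $d_G(u\rightarrow s_i) \leq d_G(u\rightarrow s_0) - r$.

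Next I would use the fact that $s_i \in \reach(v,\partial H)$, which by definition means there is a directed path from $s_i$ to $v$ lying entirely in $H$. Since $G$ (and hence $H$) is unweighted and $|V(H)| = r$, any simple directed path in $H$ has length at most $r-1$, so $d_G(s_i\rightarrow v) \leq d_H(s_i\rightarrow v) \leq r-1$.

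Finally I would apply the (asymmetric) directed triangle inequality in $G$ along the concatenation $u \rightsquigarrow s_i \rightsquigarrow v$:
\begin{equation*}
d_G(u\rightarrow v) \;\leq\; d_G(u\rightarrow s_i) + d_G(s_i\rightarrow v) \;\leq\; \bigl(d_G(u\rightarrow s_0) - r\bigr) + (r-1) \;=\; d_G(u\rightarrow s_0) - 1,
\end{equation*}
which is strictly less than $d_G(u\rightarrow s_0)$, as desired. There is no real obstacle here; the argument is a direct calculation, and the only subtle point is remembering that unweightedness is what caps $d_H(s_i\rightarrow v)$ by $r-1$ and thereby makes the threshold $-r$ in the definition of $-\infty$ exactly tight for the inequality to go through.
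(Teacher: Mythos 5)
Your proposal is correct and follows essentially the same argument as the paper: unpacking $\patt_u[i]=-\infty$ to get $d_G(u\rightarrow s_i)\leq d_G(u\rightarrow s_0)-r$, bounding $d_G(s_i\rightarrow v)\leq r-1$ via reachability in the unweighted subgraph $H$, and concluding by the directed triangle inequality along $u\rightarrow s_i\rightarrow v$. No issues to report.
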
 
	\begin{proof}$\patt_u[i] = -\infty$ implies that $d_G(u,s_i) \leq d_G(u\rightarrow s_0) - r$. As $v$ is reachable from $s_i$ in $H$, there is a path of length at most $|V(H)|-1= r-1$ from $s_i$ to $v$, meaning that $d_G(s_i\rightarrow v) \leq r-1$. Thus, $d_G(u\rightarrow v)\leq d_G(u\rightarrow s_i) + d_G(s_i\rightarrow v)\leq d_G(u\rightarrow s_0) - r + (r-1) < d_G(u\rightarrow s_0)$ as claimed.
	\end{proof}
	
	By definition of the distance in \Cref{eq:dis-patt-dir}, if there exists a boundary vertex $s_i \in \reach(v,\partial H)$ such that $\patt_u[i] = -\infty$, then $d(\patt_u\rightarrow v) = -\infty$ and hence would have no effect in the computation of $\Delta(u\rightarrow H)$. And by \Cref{clm:minus-infty}, such a vertex $v$ also do not contribute to $ \max_{v\in V(H)} d_G(u\rightarrow v)$. Thus, we only need to consider vertices $v$ such that for every boundary vertex $s_i \in \reach(v,\partial H)$, $\patt_u[i] \not=-\infty$. We claim that for such vertices, $d_G(u\rightarrow s_0) + d(\patt_u\rightarrow v)$ is the distance from $u$ to $v$.
	
	\begin{claim}\label{clm:real-dist} If for every boundary vertex $s_i \in \reach(v,\partial H)$, $\patt_u[i] \not= -\infty$, then $d_G(u\rightarrow v) = d_G(u\rightarrow s_0) +  d(\patt_u\rightarrow v)$.
	\end{claim}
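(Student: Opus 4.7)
The plan is to establish the two directions of the claimed equality $d_G(u\rightarrow v) = d_G(u\rightarrow s_0) + d(\patt_u\rightarrow v)$ separately. Before either step, observe that under the hypothesis of the claim together with the standing assumption of \Cref{lm:dist-dir-via-pattern}, every entry $\patt_u[i]$ with $s_i\in \reach(v,\partial H)$ is \emph{finite} and therefore equals the true difference $d_G(u\rightarrow s_i) - d_G(u\rightarrow s_0)$: the hypothesis of the claim rules out $-\infty$ on such indices, while the ambient assumption that $d_G(u\rightarrow s_0)$ is the maximum over $\partial H$ rules out $+\infty$ everywhere.

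For the ``$\le$'' direction I would apply the directed triangle inequality: for every $s_i\in \reach(v,\partial H)$, $d_G(u\rightarrow v) \le d_G(u\rightarrow s_i) + d_G(s_i\rightarrow v) = d_G(u\rightarrow s_0) + \patt_u[i] + d_G(s_i\rightarrow v)$, using the preceding observation that $\patt_u[i]$ is finite. Taking the minimum over $s_i\in \reach(v,\partial H)$ yields exactly $d_G(u\rightarrow v) \le d_G(u\rightarrow s_0) + d(\patt_u\rightarrow v)$.

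For the matching ``$\ge$'' direction I would exhibit one boundary vertex that realizes equality on an actual shortest path. Fix any shortest path $\pi(u\rightarrow v, G)$ and let $s_j$ be the \emph{last} vertex of $\partial H$ appearing on it; such a vertex exists because $u\notin V(H)$ and $v\in V(H)$, forcing the path to enter $H$ through $\partial H$. After $s_j$, the path stays entirely in $V(H)$, since any later exit would require a later re-entry through $\partial H$, contradicting the choice of $s_j$; because $H$ is induced, this trailing subpath is carried by $H$, so $s_j \in \reach(v,\partial H)$. By optimality of $\pi(u\rightarrow v,G)$, the prefix and suffix at $s_j$ are themselves shortest in $G$, so $d_G(u\rightarrow v) = d_G(u\rightarrow s_j) + d_G(s_j\rightarrow v) = d_G(u\rightarrow s_0) + \patt_u[j] + d_G(s_j\rightarrow v) \ge d_G(u\rightarrow s_0) + d(\patt_u\rightarrow v)$.

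The one step that needs care, and which I view as the crux, is verifying $s_j\in \reach(v,\partial H)$: this combines the maximality of $s_j$'s position with the inducedness of $H$, and is exactly what ties the combinatorial structure of $H$ to the definition of $d(\patt_u\rightarrow v)$. Everything else is routine manipulation of the $\pm\infty$ conventions built into \Cref{eq:infinite} and \Cref{eq:dis-patt-dir}.
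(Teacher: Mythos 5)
Your proof is correct and follows essentially the same route as the paper: establish that $\patt_u[i]$ equals $d_G(u\rightarrow s_i)-d_G(u\rightarrow s_0)$ on the relevant indices, take the last boundary vertex on $\pi(u\rightarrow v,G)$ and use inducedness of $H$ to place it in $\reach(v,\partial H)$, and combine this with the directed triangle inequality over all of $\reach(v,\partial H)$. The paper merely packages the two inequalities as a single chain of equalities with the minimum, so there is no substantive difference.
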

	\begin{proof} The assumption of the claim implies that $d_G(u\rightarrow s_{i}) = d_G(u\rightarrow s_0) + \patt_u[i]$ for every boundary vertex $s_i \in \reach(v,\partial H)$. Let $s_\ell$ for some $\ell\in [0,b-1]$ be the boundary vertex  on the path $\pi(u\rightarrow v, G)$ \emph{furthest} from $u$. That is, the subpath from $s_{\ell}$ to $v$ of  $\pi(u\rightarrow v, G)$ lies entirely in $H$. Thus, $s_{\ell}\in \reach(v,\partial H)$ and $d_G(s_{\ell}\rightarrow v) = d_H(s_{\ell}\rightarrow v)$. Then: 
		\begin{equation*}
			\begin{split}
				d_G(u \rightarrow v) &= d_G(u \rightarrow s_\ell) + d_G(s_\ell \rightarrow v)\\ 
				&= \min_{s_i\in \reach(v,\partial H)}\{d_G(u\rightarrow s_{i}) + d_G(s_{i} \rightarrow v)\}\\
				&=   \min_{s_i\in \reach(v,\partial H)}\{ d_G(u\rightarrow s_0) + \patt_u[i] + d_G(s_{i} \rightarrow v)\}\\
				&= d_G(u\rightarrow s_0) + \min_{s_i\in \reach(v,\partial H)}\{  d_G(v\rightarrow s_i) + \patt_u[i]\}\\ & = d_G(u\rightarrow s_0) + d_G(\patt_u\rightarrow v)~,
			\end{split}
		\end{equation*}
		as desired.
	\end{proof}

	Let $\delta = \max_{v\in V(H)} d_G(u\rightarrow v)$ and $v^*\in V(H)$ be such that $d_G(u\rightarrow v^*) = \delta$. Observe that $\delta \geq d_G(u\rightarrow s_0)$ since $s_0$ is an eligible choice for $v^*$. We now show that for every boundary vertex $s_j$ such that $s_j\in \reach(v^*,\partial H)$, $\patt_u[j] \not= -\infty$. If so, by \Cref{clm:real-dist},  $\delta = d_G(u\rightarrow s_0) +  d(\patt_u\rightarrow v^*)$, which implies the lemma.

	To see that $\patt_u[j] \not= -\infty$, first observe that $d_G(s_j\rightarrow v^*)\leq r-1$ as $s_j\in \reach(v,\partial H)$, and that $d_G(u \rightarrow s_j) + d_G(s_j \rightarrow v^*) \geq d_G(u \rightarrow v^*) = \delta$. Thus, we have:
	\begin{equation*}
		d_G(u \rightarrow s_j) \geq \delta -  d_G(s_j \rightarrow v^*) \geq \delta - (r-1)\geq d_G(u\rightarrow s_0)- (r-1)
	\end{equation*}
	which gives $d_G(u \rightarrow s_j)  -d_G(u\rightarrow s_0) \geq - (r-1)$. Furthermore, by definition of $s_0$, $d_G(u \rightarrow s_j)  -d_G(u\rightarrow s_0) \leq 0$. Thus, $\patt_u[j] \not= -\infty$ as desired. 
\end{proof}

We call the first boundary vertex $s_0$ in a sequence of boundary vertex $\sigma_H$ of $H$ the \emph{base} of $\sigma_H$. We remark that in \Cref{lm:dist-dir-via-pattern}, it is important that the distance from $u$ to the base vertex satisfies $d_G(u\rightarrow s_0) = \max_{0\leq i \leq |\partial H|-1}\{d_G(u\rightarrow s_i)\}$, we call this condition the \emph{maximum base condition}. In general, for any fixed sequence $\sigma_H$, if the maximum base condition is satisfied for $u$, it might not be satisfied for some vertex $v$. Thus, in the following algorithm for computing all-vertices eccentricities, we have to consider $|\partial H|$ different boundary sequences, each has a different boundary vertex as the base. We note that only the base vertex is important; the order of remaining vertices in a sequence $\sigma_H$ could be arbitrary. 

\paragraph{The algorithm.~} The algorithm for computing all-vertices eccentricities has 3 steps. Here we focus on presenting the ideas and then discuss the implementation later. 

\begin{itemize}
	\item \textbf{(Step 1).~} Construct an $r$-division $\mathcal{R}$ of $G$ for $r = n^{2/(3h^2+6)}$. For each cluster $R\in \mathcal{R}$, we construct a set,  denoted by $\Gamma_R$, of $|\partial R|$ different sequences of boundary vertices of $R$ such that each sequence in $\Gamma_R$ admits a different boundary vertex as the base. We write $\Gamma_R = \{\sigma^{1}_R, \sigma^{2}_R, \ldots, \sigma^{|\partial R|}_R\}$. Then for each sequence $\sigma^t_{R}$ for $t\in [|\partial R|]$, we construct a set of infinite patterns w.r.t  $\sigma^t_R$: $P^t_R = \{u\in V: \patt^t_u\}$ where $\patt^t_u$ is the infinite pattern of $u$ w.r.t $\sigma^t_R$. Let $\mathcal{P}_R = \{P^t_R\}_t$.  
	
	
	\item \textbf{(Step 2).~} For each cluster $R\in \mathcal{R}$, 
	each pattern $\patt \in (\bigcup_{P_R\in \mathcal{P}_R}P_R)$, find $v = \argmax_{\tilde{v}\in V(R)} d(\patt\rightarrow \tilde{v})$; we exclude undefined distances in the search for $v$. That is, $v$ is the vertex that has maximum distance from $\patt$ over all vertices in $V(R)$; we say that $v$ is the \emph{furthest vertex} from $\patt$. We then store the distance $d(\patt \rightarrow v)$ in a table. 
	
	\item \textbf{(Step 3).~} We now compute $\ecc(u)$ for each vertex $u\in V$. For each cluster $R\in \mathcal{R}$, we compute the distance from $u$ to a vertex $v\in R$ furthest from $u$, denoted by $\Delta(u\rightarrow R)$, as follows.  Let $s_t$ be the furthest boundary vertex in $R$: $d_G(u\rightarrow s_t) = \max_{s\in \partial R} d_G(u\rightarrow s)$. Let $P^t_R$ be the set of infinite patterns w.r.t the boundary sequence that has $s_t$ as the base computed in (Step 1). Let $\patt^t_u$ be the pattern of $u$ in $P^t_R$. If $u\not\in R$, let $v$ be the furthest vertex from $\patt^t_u$, computed in (Step 2). Then  we return $\Delta(u\rightarrow R) = d_G(u \rightarrow s_t) + d(\patt^t_u \rightarrow v)$. Otherwise, for every vertex $v\in R$, we compute $\tilde{d}_G(u\rightarrow v) = \min\{d_G(u,s_0) + d(\patt^t_u\rightarrow v), d_{R}(u\rightarrow v)\}$ and finally return $\Delta(u,R) = \max_{v\in R}{\tilde{d}_G(u\rightarrow v)}$.
	\end{itemize}

As discussed in (Step 3), \Cref{lm:dist-dir-via-pattern} implies that the computed value $\ecc(u)$ is the eccentricity of $u$. We now show an efficient implementation and analyze its running time. 

\paragraph{Efficient implementation.~} Implementing the algorithm for digraphs shown above in truly subquadratic time turns out harder than the algorithm for undirected graphs in \Cref{subsec:diameter-unweighted}. One reason is that each vertex $u$ now is associated with up to $\sqrt{r}$ different pattern vectors, each for one boundary sequence, in the same cluster $R$. As each pattern vector has size up to $\sqrt{r}$, the total amount of information per vertex $u$, and per cluster $R$ is $O(r)$. The number of clusters is $\tilde{O}(n/\sqrt{r})$  in \Cref{lm:r-division}. The number of clusters can indeed be improved to $\tilde{O}(n/r)$ if one is willing to pay more running time. Even in the best case on the size of the number of clusters,  the total amount of computation, if done carelessly, is $\tilde{O}(n r\cdot (n/r) \cdot r) = \tilde{O}(n^2)$, which is larger than permitted.

The key idea in the implementation is not to compute all the patterns of all vertices in the graph. As we see in (Step 3), we only need to compute a pattern $\patt_u$ associated with a specific boundary sequence where the base of the sequence is the furthest boundary vertex; other boundary sequences are not relevant to compute $\Delta(u\rightarrow R)$. And this is what we will do: we \emph{will not} compute all the sets $\Gamma_R$ and $\mathcal{P}_R$ as described in (Step 1) upfront. Instead, we will implement (Step 3) directly first and then add patterns to $\mathcal{P}_R$ along the way we examine each vertex $u$.

Recall that  $B$ is the set of boundary vertices of the $r$-division $\mathcal{R}$: $B = \cup_{R\in \mathcal{R}} \partial R$.   Let $D(B\rightarrow V) = \{d_G(s\rightarrow v): (s,v)\in B\times V \}$,  $D(V\rightarrow B) = \{d_G(v\rightarrow s): (v, s)\in V\times B \}$. \Cref{obs:Tcal-Time} remains true here:

\begin{observation}\label{obs:Tcal-Dir-Time}$D(B\rightarrow V)$  and $D(V\rightarrow B)$ can be computed in time $\tilde{O}(n^{2}/\sqrt{r})$.
\end{observation}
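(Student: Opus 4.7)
The plan is to prove the observation by a direct reduction to BFS computations, one per boundary vertex, in $G$ and in the reverse digraph $G^R$ (the digraph obtained from $G$ by reversing the direction of every edge). Since $G$ is $K_h$-minor-free with $h$ a fixed constant, a standard consequence of Mader's theorem gives $|E(G)| = O(n)$, so $|E(G^R)| = O(n)$ as well; thus BFS from a single source takes $O(n)$ time in either graph. Moreover, by Lemma \ref{lm:r-division} applied to our $r$-division $\mathcal{R}$, the total boundary size is $|B| = \sum_{R \in \mathcal{R}} |\partial R| = \tilde{O}(nh/\sqrt{r}) = \tilde{O}(n/\sqrt{r})$.

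First I would compute $D(B \rightarrow V)$. For every boundary vertex $s \in B$, run a BFS from $s$ in $G$ to obtain $d_G(s \rightarrow v)$ for all $v \in V$, and store these distances in a table indexed by $(s,v)$. The total running time is $|B| \cdot O(n) = \tilde{O}(n^2/\sqrt{r})$, as desired.

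Next I would compute $D(V \rightarrow B)$. The key observation is that distances \emph{into} a vertex $s$ in $G$ coincide with distances \emph{out of} $s$ in $G^R$: namely $d_G(v \rightarrow s) = d_{G^R}(s \rightarrow v)$. So I construct $G^R$ once in $O(n)$ time (by scanning the adjacency lists of $G$), and then for every $s \in B$, run a BFS from $s$ in $G^R$ to obtain $d_{G^R}(s \rightarrow v)$ for all $v \in V$, storing the resulting value as $d_G(v \rightarrow s)$. Again, the total running time is $|B| \cdot O(n) = \tilde{O}(n^2/\sqrt{r})$.

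There is essentially no hard step here: the only points to verify are the $O(n)$-edge bound (from $K_h$-minor-freeness and the fact that $h$ is a fixed constant, absorbed into the $\tilde{O}$ notation), the $\tilde{O}(n/\sqrt{r})$ bound on $|B|$ cited from Lemma \ref{lm:r-division}, and the identity $d_G(v \rightarrow s) = d_{G^R}(s \rightarrow v)$ that reduces the backward-distance computation to a forward BFS in the reverse graph. Combining the two phases yields the $\tilde{O}(n^2/\sqrt{r})$ bound claimed by the observation.
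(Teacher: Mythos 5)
Your proposal is correct and matches the approach the paper intends: the paper justifies this (via the undirected analogue, Observation~\ref{obs:Tcal-Time}) by running a BFS rooted at each of the $\tilde{O}(n/\sqrt{r})$ boundary vertices, and for the directed case the natural extension is exactly your two phases, BFS in $G$ for $D(B\rightarrow V)$ and BFS in the reversed digraph for $D(V\rightarrow B)$, each taking $O(n)$ time per source since $K_h$-minor-freeness bounds the number of edges by $O(n)$ for fixed $h$. No gap to report.
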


We also obtain a polynomial bound on the number of infinite patterns as a corollary of \Cref{lm:pattern-bound-di}.

\begin{corollary}\label{cor:dir-pattern-count}$\sum_{P_R\in \mathcal{P}_R}|P_R| =  \tilde{O}(r^{(3h^2+1)/2})$ for every $R\in \mathcal{R}$.
\end{corollary}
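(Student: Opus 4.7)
The plan is a direct application of the infinite pattern bound from \Cref{lm:pattern-bound-di} together with the size bounds on $r$-divisions from \Cref{lm:r-division}, summed over the $|\partial R|$ choices of base boundary vertex.

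First, I would fix a cluster $R \in \mathcal{R}$ and recall the setup: the family $\Gamma_R = \{\sigma^1_R, \ldots, \sigma^{|\partial R|}_R\}$ consists of one sequence per choice of base vertex, and for each sequence $\sigma^t_R$ the set $P^t_R$ is the collection of distinct infinite patterns of vertices of $V$ with respect to $\sigma^t_R$. Since $R$ is a cluster in an $r$-division of the $K_h$-minor-free digraph $G$ (whose underlying undirected graph is $K_h$-minor-free by hypothesis), \Cref{lm:pattern-bound-di} applies to $R$ with each sequence $\sigma^t_R$, and yields
\begin{equation*}
|P^t_R| \;=\; O\bigl((|\partial R|\cdot |V(R)|)^{h^2}\bigr).
\end{equation*}

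Next I would plug in the $r$-division parameters. By \Cref{lm:r-division}, every cluster $R$ satisfies $|V(R)| \le r$ and $|\partial R| = \tilde{O}(h\sqrt{r})$. Since $h$ is a fixed constant, $|\partial R|\cdot |V(R)| = \tilde{O}(r^{3/2})$, and therefore
\begin{equation*}
|P^t_R| \;=\; \tilde{O}\!\left(r^{3h^2/2}\right)
\end{equation*}
for every $t$. Summing over the $|\partial R| = \tilde{O}(r^{1/2})$ sequences in $\Gamma_R$ gives
\begin{equation*}
\sum_{P_R \in \mathcal{P}_R} |P_R| \;=\; |\partial R| \cdot \tilde{O}\!\left(r^{3h^2/2}\right) \;=\; \tilde{O}\!\left(r^{1/2} \cdot r^{3h^2/2}\right) \;=\; \tilde{O}\!\left(r^{(3h^2+1)/2}\right),
\end{equation*}
which is the claimed bound.

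There is no real obstacle here since all the work is done by \Cref{lm:pattern-bound-di} (which is itself a restatement of \Cref{thm:LW-digraph} through the flattening bijection) and \Cref{lm:r-division}; the only thing to verify is that the polylogarithmic factors hidden by $\tilde{O}$ absorb the factors of $h$ and the $\log n$ in $|\partial R|$, which is automatic since $h$ is treated as a constant throughout this section.
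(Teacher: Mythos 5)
Your proposal is correct and follows essentially the same route as the paper: apply \Cref{lm:pattern-bound-di} to each of the $|\partial R|$ boundary sequences to get $\tilde{O}(r^{3h^2/2})$ patterns per sequence, then multiply by the $\tilde{O}(\sqrt{r})$ sequences to obtain $\tilde{O}(r^{(3h^2+1)/2})$. The only difference is that you spell out the $r$-division parameters from \Cref{lm:r-division} explicitly, which the paper leaves implicit.
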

\begin{proof}
	The number of infinite patterns per boundary sequence $\sigma_R$  is $O((|\partial R|\cdot |V(R)|)^{h^2}) = \tilde{O}(r^{3h^2/2})$. The corollary follows from the fact that we have up to $\sqrt{r}$ different boundary sequences.
\end{proof}

Now we show the detailed implementation of the algorithm, given $D(B\rightarrow V)$  and $D(V\rightarrow B)$. In (Step 1), we now only form all $|\partial R|$ boundary sequences -- the set $\Gamma_R$-- for each cluster $R$. The total running time per region is $O(|\partial R|^2) = \tilde{O}(r)$. By \Cref{lm:r-division}, the running time to find all $\{\Gamma_R\}_{R\in \mathcal{R}}$ is:
\begin{equation}\label{eq:Gamma-R}
	\tilde{O}(n \cdot r/\sqrt{r}) = \tilde{O}(n\cdot r^{1/2})
\end{equation}

Now we jump to (Step 3). For each vertex $u$ and each cluster $R$, we first find the furthest boundary vertex $s_t$, in $O(|\partial R|)$ time by looking through all the distances from $u$ to vertices $\partial R$ stored in  $D(V\rightarrow B)$. Thus, the total running time of finding all furthest boundary vertices over all vertices and all clusters is:
\begin{equation}\label{eq:furhtest-btime}
	n \sum_{R\in \mathcal{R}} |\partial R| = \tilde{O}(n^2/\sqrt{r})
\end{equation}
by \Cref{lm:r-division}. Now we know the boundary sequence $\sigma^t_R$, computed in (Step 1), as we know the furthest vertex $s_t$. We can compute $\patt^t_u$, which is the pattern satisfying the maximum base condition, in $O(|\partial R|)$ time; the same time it takes to find $s_t$. Thus, the total running time to find all infinite patterns satisfying the maximum base condition over all vertices and clusters is $\tilde{O}(n^2/\sqrt{r})$ by \Cref{eq:furhtest-btime}. 

Finally, we have to compute  $\max_{\tilde{v}\in R} d(\patt^t_u \rightarrow \tilde{v})$, and find the vertex $v$ which is furthest from $\patt^t_u$ in $R$. We could not naively iterate over all vertices in $R$ every time we examine a vertex $u$. Recall that the number of distinct infinite patterns per cluster $R$ is $\tilde{O}(r^{(3h^2+1)/2})$, and hence many vertices will share the same infinite patterns. We then could store results computed before in the table and do a table look up if we encounter the same pattern again. More specifically, we store a trie data structure $\mathcal{L}$: for a vertex $u$, if $\patt^t_u$ is not in $\mathcal{L}$, which we can check in $O(|\partial R|)$ time, then we iterate over all vertices in $R$ to find $v$, and store $v$  and $d(\patt^t_u \rightarrow v)$ in $\mathcal{L}$, keyed by $\patt^t_u$. Otherwise, we simply lookup $v$  and $d(\patt^t_u \rightarrow v)$ from $\mathcal{L}$. Modulo the running time to find $v$  and $d(\patt^t_u \rightarrow v)$ when $\patt^t_u \not\in \mathcal{L}$, the total time to look up  $v$  and $d(\patt^t_u \rightarrow v)$ will be $|\partial R|$, which is also the time to find $s_t$. Thus, the total running time is  $ \tilde{O}(n^2/\sqrt{r})$ by \Cref{eq:furhtest-btime}.

We now bound the running time to find  $v$  and $d(\patt \rightarrow v)$ for  every pattern $\patt \in (\bigcup_{P_R\in \mathcal{P}_R}P_R)$. For each given pattern $\patt$, computing the distance from $\patt$ to a vertex $\tilde{v} \in R$ can be done in $O(|\partial R|) = \tilde{O}(\sqrt{r})$ time by definition in \Cref{eq:dis-patt-dir}. Then finding $ \max_{v\in R} d(\patt\rightarrow v)$ can be done in $|V(R)| \tilde{O}(\sqrt{r}) = \tilde{O}(r^{3/2})$ time. Over all patterns in $ (\bigcup_{P_R\in \mathcal{P}_R}P_R)$, by \Cref{cor:dir-pattern-count}, the total running time is $\tilde{O}(r^{(3h^2+1)/2} \cdot r^{3/2}) = \tilde{O}(r^{(3h^2+6)/2})$. Over all clusters in $\mathcal{R}$, the running time is:

\begin{equation}\label{eq:patter-v-all}
	\tilde{O}(n/\sqrt{r}) \cdot \tilde{O}(r^{(3h^2+6)/2}) = \tilde{O}(nr^{(3h^2+5)/2})
\end{equation} 

In summary,  by \Cref{eq:Gamma-R}, \Cref{eq:furhtest-btime}, and \Cref{eq:patter-v-all}, the total running time to compute all-vertices eccentricities is:

\begin{equation}\label{eq:final-dir}
	\tilde{O}(nr^{(3h^2+5)/2}) + \tilde{O}(n^2/\sqrt{r}) = \tilde{O}(n^{2-1/(3h^2+6)})
\end{equation} 

when $r = n^{2/(3h^2+6)}$, as claimed in \Cref{cor:dimater-digraph}. This is also the running time to compute the directed diameter of $G$.
 
\section{Lower Bound for Directed VC-dim Edge Set System}

In this section, we prove the lower bound in \Cref{thm:diEdge-vc}, which we restate below.

\LBEdge*

We will construct a graph with a set of $r$ directed edges $X = \{e_1,e_2,\ldots, e_r\}$  on a path $P$ such that for every subset $Y\subseteq X$, there exists a vertex $v \not\in P$ such that $Y$ belongs to the shortest path tree rooted at $v$.

\begin{figure}[!htb]
	\centering{\includegraphics[width=.8\textwidth]{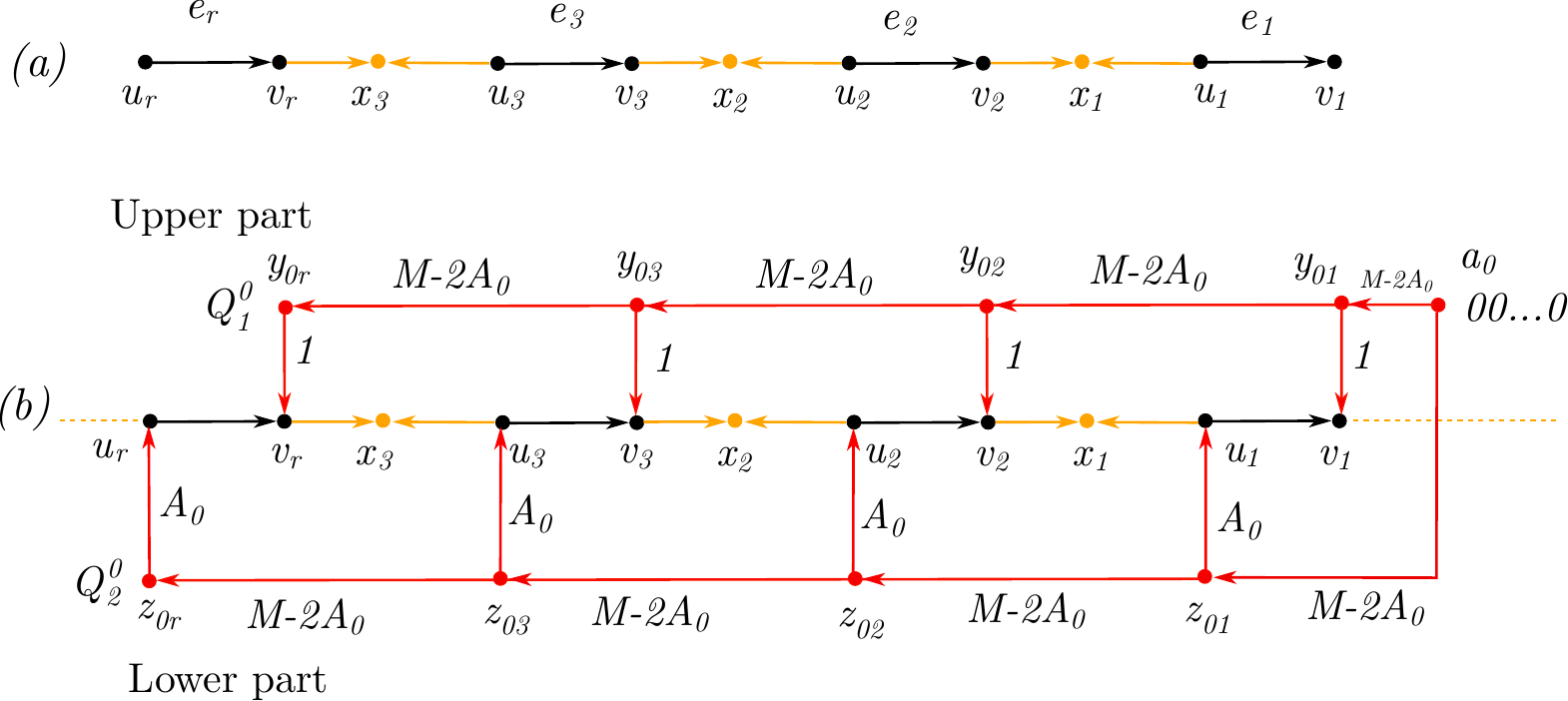}}
	\caption{(a)  the path $P$. (b) Vertex $a_0$, corresponding to the bit string $s_0 = 0 \ldots 0$ containing $r$ bits of 0s, connected to $P$ via directed edges.  
	}\label{fig:pathP}
\end{figure}

We construct $P$ as follows (see \Cref{fig:pathP}(a)). First, form the set $X$ of $r$ directed edges, where $e_i = (u_i \rightarrow v_i)$. Then add a path of length 2  between $v_{i+1}$ and $u_{i}$ consisting of two edges in different directions: $(v_{i+1}\rightarrow x_i), (u_i\rightarrow x_i)$. The idea is to ensure that no endpoint of $e_i$ can reach (or be reached by) other endpoints of other edges by going along the path $P$. Set the weight of each edge to be $1$.

Now we construct shortest path trees where each tree realizes a subset $Y$ of $X$. By realizing $Y$, we mean the subset $Y$ will be included in some shortest path tree, while other edges in $X\setminus Y$ will not be included by the same tree. We will add edges with \emph{integer weights} and finally we can turn them into unweighted edges by subdividing them.

Choose a sufficiently large number $M$ and $2^{r}$ other numbers: $1 \ll A_0 \leq A_1 \leq \ldots \leq A_{2^r-1}$ where $A_i = 4A_{i-1}$. $M$ will be sufficiently larger than all $\{A_i\}$. The following inequalities will be helpful:

\begin{equation} \label{eq:Ai}
	A_i \geq 2A_{t} + (A_{i-1} + A_{i-2} + \ldots + A_{0} + 1) \quad \mbox{ for any }t\leq i-1 
\end{equation}

Now consider all $2^{r}$ bit strings $\{0,1\}^r$, the $i$-th string, denoted by $s_i$, is the binary representation of $i$ for $i\in [0,2^{r}-1]$.  Starting from $s_0$, for each string $s_i$, we will add a new vertex $a_i$ to the graph along with some other vertices and directed edges. Some directed edges will be given weights based on $M$ and $A_i$. Vertex $a_i$ will be embedded outside the outer face of $G_{i-1}$, the directed graph constructed after step $i-1$. The final graph is $G = G_{2^{r}-1}$.

\begin{figure}[!htb]
	\centering{\includegraphics[width=.8\textwidth]{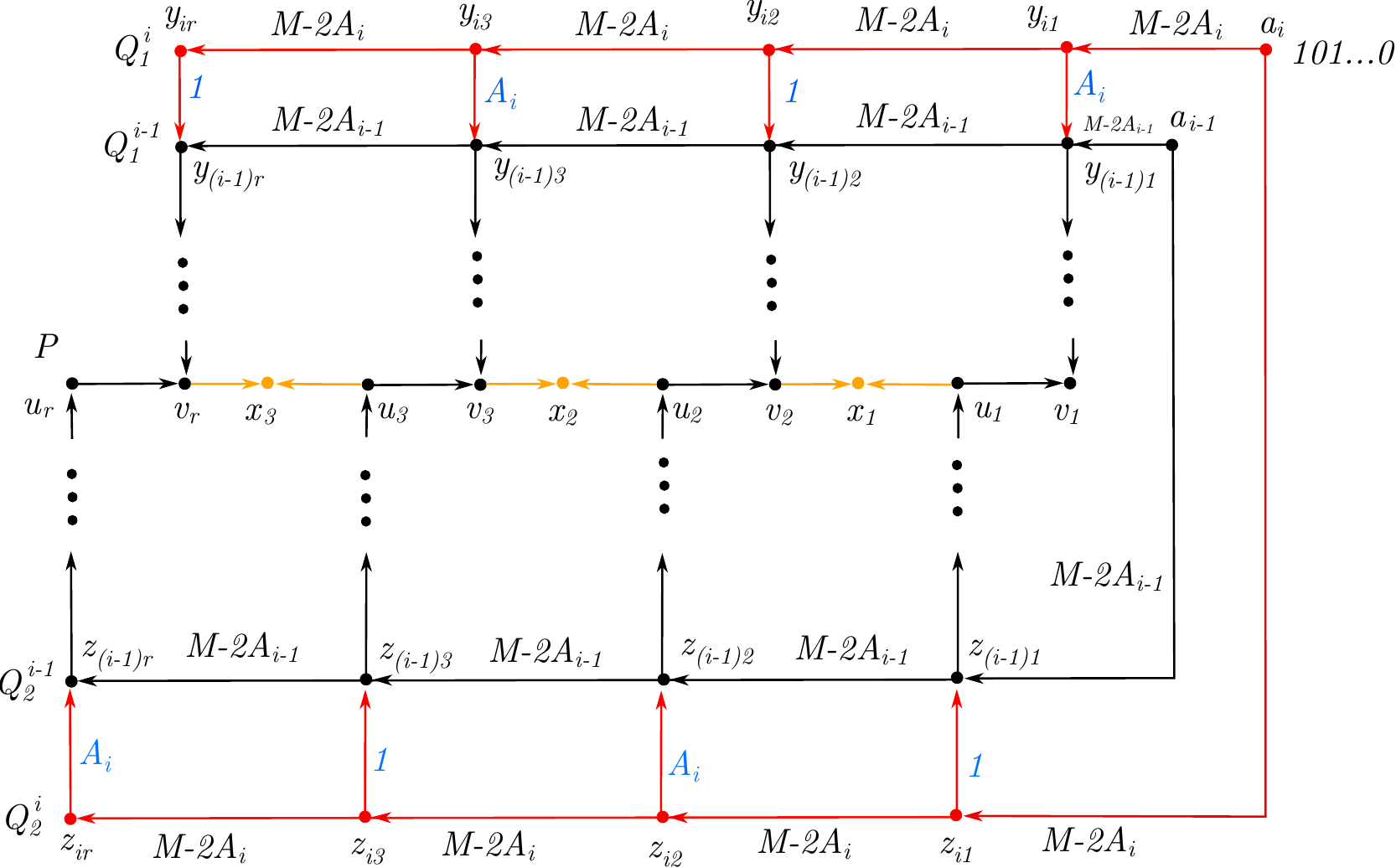}}
	\caption{Adding $a_i$ to the current graph $G_{i-1}$.  
	}\label{fig:Addvertex}
\end{figure}

The path $P$ will separate the plane into two parts: the upper part (or $0$ part) and the lower part (or $1$ part); see \Cref{fig:pathP}(b). The upper part will realize the fact that some edges of $X$ are NOT added to the shortest path tree  of $a_i$ and the lower part serves the opposite purpose. In particular, if the $j$-th bit of $s_i$ is 0, then the shortest path from $a_i$ to $v_j$ will only contain edges from the upper part and hence does not contain $e_j$. Otherwise, the shortest path from $a_i$ to $v_j$ will only contain edges from the lower part and $e_j$. The rule for adding $a_i$ is as follows (see \Cref{fig:Addvertex}):
\begin{enumerate}
	\item Add two directed paths $Q^i_1 = (a_i = y_{i0} \rightarrow y_{i1} \ldots \rightarrow y_{ir})$ and $Q^i_2 = (a_i = z_{i0} \rightarrow z_{i1} \ldots \rightarrow z_{ir})$ directed away from $a_i$. Each path has exactly $r$ edges. Each edge of the two paths is assigned a weight $M - 2\cdot A_i$. $Q^i_1$ is embedded in the upper part of the plane, separated by $P$, and $Q^i_2$ is embedded in the lower part of the plane.
	
	\item Look at the bit string $s_i$ (see an example in \Cref{fig:Addvertex}, the first 3 bits are $\{1,0,1\}$, and the last bit is $0$ in the string of $a_i$). Assume that $i\geq 1$.
	\begin{itemize}

		\item if the $j$-th bit is 1, add a directed edge $(y_{ij} \rightarrow y_{(i-1)j})$ of weight $A_i$, and add a directed edge $(z_{ij}\rightarrow z_{(i-1)j})$ of weight $1$.  Note that $y_{(i-1)j}$ and $z_{(i-1)j}$  are vertices on the paths $Q^{i-1}_{1}$ and  $Q^{i-1}_{2}$, respectively, of $a_{i-1}$.  For example, in \Cref{fig:Addvertex}, the first bit of $s_i$ is $1$ and hence we have an edge of weight $A_i$ from $y_{i1}$ to $y_{(i-1)1}$ and  an edge of weight $1$ from $z_{i1}$ to $z_{(i-1)1}$. The same holds for $y_{i3}$ and $z_{i3}$. 
		
		\item if the $j$-th bit is 0, add a directed edge $(y_{ij} \rightarrow y_{(i-1)j})$ of weight $1$, and add a directed edge $(z_{ij}\rightarrow z_{(i-1)j})$ of weight $A_i$.  For example, in \Cref{fig:Addvertex}, the second bit of $s_i$ is $0$ and hence we have an edge of weight $1$ from $y_{i2}$ to $y_{(i-1)2}$ and  an edge of weight $A_i$ from $z_{i2}$ to $z_{(i-1)2}$. The same holds for $y_{ir}$ and $z_{ir}$.  	
	\end{itemize}
	When $i = 0$ (see \Cref{fig:pathP}(b)), we connect $y_{0j}$ to $v_j$ and $z_{0j}$ to $u_j$ for every $j\in [1,r]$. Note that, since $s_0$ only contains 0 bits, every edge $(y_{0j}\rightarrow v_j)$ has weight 1 and every edge $(z_{0j}\rightarrow u_j)$ has weight $A_0$. In this case,  no edge in $X$ will be included in the shortest path tree of $a_0$.  
	
\end{enumerate}

\begin{figure}[!htb]
	\centering{\includegraphics[width=0.9\textwidth]{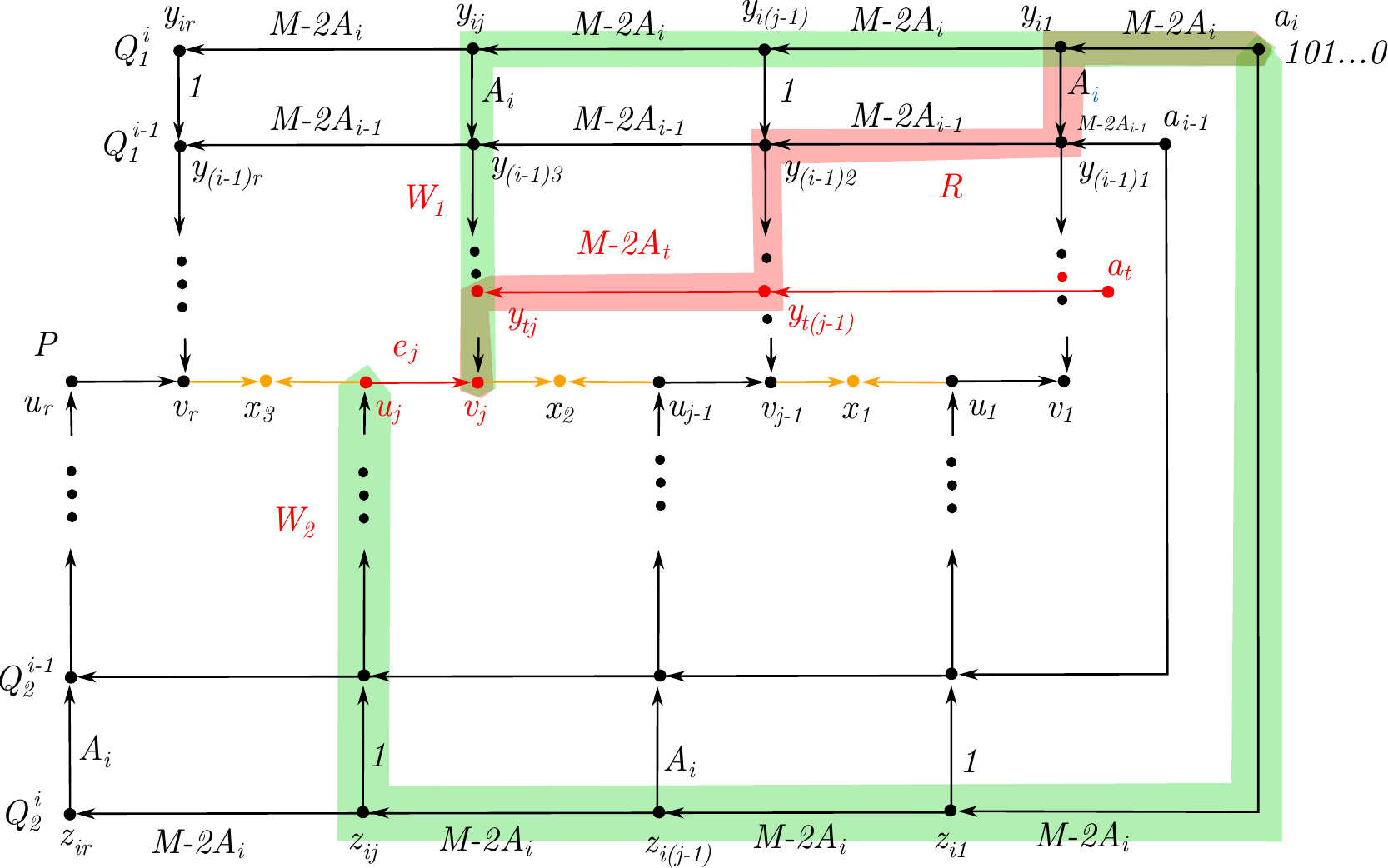}}
	\caption{Two green-highlighted paths are shortest path from $a_i$ to $v_j$ and to $u_j$ in graphs $G\setminus \{e_j\}$ and $G_i$, respectively.  
	}\label{fig:PathAnalysis}
\end{figure}

We now analyze the shortest path tree of $a_i$, denote by $T_i$.  We say that an edge is \emph{horizontal} if it belongs to $P$ or $Q^t_1$ or $Q^t_2$ for some $t\in [0,2^r-1]$; otherwise, we say that the edge is \emph{vertical}. Note that a vertical edge is either an edge from a vertex $y_{tj}\in Q^{t}_1$ \emph{down} to some vertex in $Q^{t-1}_1$ when $t\geq 1$ or down to some vertex in $P$ when $t=0$, or from a vertex $z_{tj} \in Q^t_2$ \emph{up} to a vertex in $ Q^{t-1}_2$ when $t\geq 1$ or to a vertex in $P$ when $t=0$.

The following claim is the key to the proof; see \Cref{fig:PathAnalysis}.

\begin{claim}\label{clm:ai-to-vj} For any $j\in [1,r]$, then the shortest path from $a_i$ to $v_j$ \emph{in $G\setminus \{e_j\}$} consists of: (a) the subpath from $a_i$ to $y_{ij}$ of $Q^i_1$, which only contains horizontal edges,  and the directed paths from $y_{ij}$ to $v_j$, which only contains vertical edges. 
\end{claim}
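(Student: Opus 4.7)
The plan is to first argue that any path from $a_i$ to $v_j$ in $G\setminus\{e_j\}$ must lie entirely in the ``upper'' half of the construction, and then to perform a cost analysis on the resulting DAG to show that the path described in the claim is uniquely optimal. For the reduction, observe that the only edges incoming to $v_j$ are $e_j$ and $(y_{0j}\rightarrow v_j)$, so once $e_j$ is removed the path must terminate with $(y_{0j}\rightarrow v_j)$. A dead-end analysis of the lower structure $\{z_{tk}\}$ and of the spine $P$ shows that neither can reach $v_j$ in $G\setminus\{e_j\}$: the only exit from the lower part is via $(z_{0k}\rightarrow u_k)$, and the only outgoing edge at any $u_k$ besides the removed $e_k$ is $(u_k\rightarrow x_k)$, where $x_k$ is a sink; a symmetric argument rules out the vertices on $P$. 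Hence every such path lies in the sub-DAG $H$ on $\{y_{tk}\}\cup\{v_j\}$ whose edges are horizontal $y_{tk}\rightarrow y_{t(k+1)}$ of weight $M-2A_t$, vertical $y_{tk}\rightarrow y_{(t-1)k}$ of weight in $\{1,A_t\}$ (for $k\geq 1$), and the terminal edge $(y_{0j}\rightarrow v_j)$ of weight $1$.

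In $H$ every edge either increases the column index or decreases the row index, so any $a_i$-to-$v_j$ path uses exactly $j$ horizontal and $i$ vertical edges. Parametrize such a path by $(c_\bullet,k_\bullet)$, where $c_t$ is the number of horizontal edges it uses on row $t$ (with $\sum_t c_t=j$) and $k_t$ is the column in which it descends from row $t$ to row $t-1$ (monotone: $1\leq k_i\leq\cdots\leq k_1\leq j$). Its cost is
\begin{equation*}
\sum_{t=0}^{i}c_t(M-2A_t)+\sum_{t=1}^{i}d(t,k_t)+1,
\end{equation*}
where $d(t,k)\in\{1,A_t\}$ is the weight of the vertical edge $y_{tk}\rightarrow y_{(t-1)k}$. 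The candidate path of the claim corresponds to $c_i=j$, $c_t=0$ for $t<i$, and $k_t=j$ throughout, with cost $j(M-2A_i)+\sum_{t=1}^{i}d(t,j)+1$.

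Finally, any ``deviated'' path, i.e., one with $c_{t'}\geq 1$ for some $t'<i$, will be shown strictly costlier. The difference in horizontal cost against the candidate equals $\sum_{t<i}c_t\cdot 2(A_i-A_t)\geq 2(A_i-A_{t'})$, while the total vertical savings across all rows are bounded above by $\sum_{t=1}^{i}(A_t-1)$. Inequality~\eqref{eq:Ai} applied to the smallest deviated row $t'$ rearranges to
\begin{equation*}
2(A_i-A_{t'})\geq A_i+A_{i-1}+\cdots+A_0+1>\sum_{t=1}^{i}(A_t-1),
\end{equation*}
so even a single deviated horizontal step already makes the total cost strictly larger than that of the candidate; further deviations only widen the gap. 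This forces $c_i=j$ and then, given $c_i=j$, monotonicity forces $k_t=j$ for every $t$, yielding precisely the path of the claim.

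The main technical obstacle is the bookkeeping in the last step: a deviation at row $t'$ simultaneously shifts the descent columns for all rows $t\geq t'+1$ away from $j$, so one must argue uniformly that the induced vertical losses never exceed $\sum_{s=1}^{i}A_s$. This is precisely the role of inequality~\eqref{eq:Ai}: since the $A_t$ grow by a factor of $4$, the quantity $2(A_i-A_{t'})$ dominates the full geometric sum of all lower-row $A_s$'s, independently of which specific columns are chosen for each descent, so the single global estimate above suffices and no delicate case analysis is needed.
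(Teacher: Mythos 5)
Your proof is correct and follows essentially the same route as the paper's: confine the path to the upper $y$-grid, note that every such path uses exactly $j$ horizontal edges and the same number of vertical edges as the candidate, and show via inequality~\eqref{eq:Ai} that any horizontal step taken on a row $t<i$ costs an extra $2(A_i-A_t)$, which strictly dominates the total possible vertical savings. The paper anchors this comparison on the last horizontal edge of the competing path rather than on your global $(c_t,k_t)$ bookkeeping, but the decomposition and the key inequality are the same.
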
 
\begin{proof}
	The shortest path from $a_i$ to $v_j$ is the path highlighted green in \Cref{fig:PathAnalysis}. Let $W_1$ be the path from $a_i$ to $v_j$ as described in the claim. Note that there is a unique directed path from $y_{ij}$ to $v_j$, which only contains vertical edges.  Similarly, there is a unique directed path from $a_i$ to $y_{ij}$ which only contains horizontal edges.  Thus, if the shortest path from $a_i$ to $v_j$ goes through $y_{ij}$, the path must be $W_1$. 
	
	Let $R$ be the shortest path from $a_i$ to $v_j$; assume that $R\not= W_1$. Thus, $y_{ij}\not\in R$.  Observe that $R$ can only include edges in the upper part of $P$ since $e_j$ is deleted from $G$. That is, $R$ does not contain any $z$-vertex. Furthermore, $R$ contains exactly $j$ horizontal edges and $i$ vertical edges. Note that $W_1$ also contains exactly $j$ horizontal edges and $i$ vertical edges. Additionally, the weight any horizontal edge of $W_1$ is at most the weight of  any horizontal edge of $R$ by the choice of $\{A_i\}_{i=0}^{2^{r}-1}$.
	
	Let $(y_{t(j-1)}\rightarrow y_{tj})$ be the last horizontal edge of $R$; see \Cref{fig:PathAnalysis}. As $y_{ij}\not\in R$, $t < i$. Thus we have:
	\begin{equation}
		\begin{split}
			w(R) - w(W_1) &\geq w(y_{t(j-1)}\rightarrow y_{tj}) - w(y_{i(j-1)}\rightarrow y_{ij}) - \mbox{( total weight of all vertical edges of $W_1$)}\\
			&\geq 	M - 2A_t - (M-2A_i) - (A_i + A_{i-1} + \ldots A_0)\\
			&\geq A_i - 2A_t - (A_{i-1} + \ldots A_0)\geq 1 \quad \mbox{(by \Cref{eq:Ai})}
		\end{split}
	\end{equation}
	This contradicts that $R$ is a shortest path from $a_i$ to $v_j$.
\end{proof}

The following claim is similar to \Cref{clm:ai-to-vj}, except that the graph is $G$. The proof is exactly the same.

\begin{claim}\label{clm:ai-to-uj} For any $j\in [1,r]$, the shortest path from $a_i$ to $u_j$ in $G$ consists of: (a) the subpath from $a_i$ to $z_{ij}$ of $Q^i_2$, which only contains horizontal edges,  and the directed paths from $z_{ij}$ to $u_j$, which only contains vertical edges. 
\end{claim}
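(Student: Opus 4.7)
The plan is to mirror the proof of Claim~\ref{clm:ai-to-vj} with the upper ``$y$-side'' replaced by the lower ``$z$-side'' throughout. I would define $W_2$ to be the concatenation of the horizontal prefix of $Q^i_2$ from $a_i$ to $z_{ij}$ and the unique vertical descent from $z_{ij}$ down through $z_{(i-1)j}, z_{(i-2)j}, \ldots, z_{0j}$ to $u_j$. Both pieces are uniquely determined in $G$: the horizontal prefix on $Q^i_2$ is forced because that path has no branching, and the vertical descent is forced because every $z_{tk}$ has a single outgoing vertical edge. Consequently, any shortest path from $a_i$ to $u_j$ that visits $z_{ij}$ must coincide with $W_2$, so the task reduces to excluding every competing candidate shortest path $R$ with $z_{ij} \notin R$.

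The first step, and the only one that differs substantively from Claim~\ref{clm:ai-to-vj}, is a short directed-reachability observation showing that $R$ must lie entirely in the lower half-plane. Inspecting the construction, the unique incoming edge to $u_j$ in $G$ is $(z_{0j} \to u_j)$; moreover, the two extra edges $(v_{k+1}\to x_k)$ and $(u_k \to x_k)$ on $P$ both point \emph{into} the ``saddle'' vertex $x_k$. Thus any walk that descends from $Q^i_1$ ends at some $v$-vertex of $P$ and is trapped at an $x_k$, never reaching $u_j$. In particular, the presence of $e_j=(u_j\to v_j)$ in $G$---in contrast to Claim~\ref{clm:ai-to-vj}, where $e_j$ was removed---is harmless, since $e_j$ also points away from $u_j$. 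It follows that $R$ uses only $z$-vertices on its horizontal portion and only $z$-side vertical edges.

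With this in place, $R$ contains exactly $j$ horizontal edges drawn from the $Q^t_2$-paths and the same number of vertical $z$-edges as $W_2$. Letting $(z_{t(j-1)}\to z_{tj})$ denote the last horizontal edge of $R$, the assumption $z_{ij}\notin R$ forces $t<i$, and the weight bookkeeping becomes identical to that in Claim~\ref{clm:ai-to-vj}: the top-layer horizontal edge of $W_2$ is lighter than the corresponding edge of $R$ by $(M-2A_t)-(M-2A_i)=2A_i-2A_t$, while the total weight of all vertical edges in $W_2$ is at most $A_i+A_{i-1}+\cdots+A_0$. Inequality~\eqref{eq:Ai} then yields $w(R)-w(W_2)\geq A_i - 2A_t - (A_{i-1}+\cdots+A_0)\geq 1$, contradicting optimality of $R$. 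The main obstacle is thus the directed-reachability observation ruling out any crossing into the upper half; once that is noted, the rest of the proof transfers verbatim by symmetry, so I would state the argument briefly and reference the computation in Claim~\ref{clm:ai-to-vj} rather than repeat it.
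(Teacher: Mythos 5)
Your proposal is correct and follows essentially the paper's argument: the paper proves this claim simply by noting it is the same proof as \Cref{clm:ai-to-vj} run on the lower ($z$-) side in $G$, which is exactly what you do, including the observation that $u_j$ is only reachable through $z_{0j}$ (so deleting $e_j$ is unnecessary here) and the identical weight comparison via \Cref{eq:Ai}. Your explicit reachability remark about the $x_k$ sinks is just the $z$-side analogue of the paper's ``$R$ can only include edges in the upper part'' step, so there is no substantive difference.
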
 

Lastly, we claim that $e_j$ is in $T_i$ or not is equivalent to whether the $j$-th bit in the bit string $s_i$ is $1$ or $0$. This implies that the directed edge set system does not have a bounded VC dimension.

\begin{claim}\label{clm:ej-vs-si} For any $j\in [1,r]$, if $s_i[j]=1$ then $e_j \in T_i$; otherwise, $e_j\not\in T_i$
\end{claim}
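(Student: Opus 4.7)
The plan is to compare two candidate shortest paths from $a_i$ to $v_j$ and show that $e_j$ lies in $T_i$ precisely when the path through $e_j$ wins, which happens iff $s_i[j]=1$.

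First, I would verify that every shortest $a_i\to v_j$ path in $G$ coincides with one of two specific paths. The vertex $v_j$ has exactly two incoming edges in $G$, namely $(y_{0j}\to v_j)$ and $e_j=(u_j\to v_j)$, so any $a_i\to v_j$ path either avoids $e_j$ or enters $v_j$ via $e_j$. In the first case the path lies in $G\setminus\{e_j\}$ and by \Cref{clm:ai-to-vj} its shortest version is the ``upper'' path $\pi_{\mathrm{up}}$, which goes along $Q^i_1$ from $a_i$ to $y_{ij}$ and then descends the vertical $y$-column to $v_j$. In the second case, the prefix from $a_i$ to $u_j$ must itself be a shortest $a_i\to u_j$ path in $G$, which by \Cref{clm:ai-to-uj} is the ``lower'' path following $Q^i_2$ to $z_{ij}$ and then ascending the vertical $z$-column to $u_j$; appending $e_j$ yields $\pi_{\mathrm{low}}$. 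Hence $e_j\in T_i$ iff $w(\pi_{\mathrm{low}})<w(\pi_{\mathrm{up}})$ (we can break ties by any fixed rule, or perturb weights infinitesimally; the calculation below will show the difference is never zero anyway).

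Next, I would compute $w(\pi_{\mathrm{low}})-w(\pi_{\mathrm{up}})$. The horizontal portions of $\pi_{\mathrm{up}}$ and $\pi_{\mathrm{low}}$ each consist of exactly $j$ edges of weight $M-2A_i$ (in $Q^i_1$ and $Q^i_2$ respectively) and therefore cancel. For each level $t\in[1,i]$ the vertical edge $(y_{tj}\to y_{(t-1)j})$ of $\pi_{\mathrm{up}}$ has weight $A_t$ if $s_t[j]=1$ and $1$ otherwise, while $(z_{tj}\to z_{(t-1)j})$ of $\pi_{\mathrm{low}}$ has the opposite assignment; at level $0$ the $\pi_{\mathrm{up}}$ edge $(y_{0j}\to v_j)$ has weight $1$ and the $\pi_{\mathrm{low}}$ edge $(z_{0j}\to u_j)$ has weight $A_0$. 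Combined with the extra edge $e_j$ of weight $1$ in $\pi_{\mathrm{low}}$, a routine calculation gives
\[
w(\pi_{\mathrm{low}}) - w(\pi_{\mathrm{up}}) \;=\; A_0 \;+\; \sum_{t=1}^{i}(A_t-1)\bigl(1-2s_t[j]\bigr).
\]

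Finally, I would appeal to the growth condition~\eqref{eq:Ai}. The level-$i$ summand contributes $\pm(A_i-1)$, with sign $+1$ when $s_i[j]=0$ and $-1$ when $s_i[j]=1$, whereas the remaining terms together have absolute value at most $A_0+\sum_{t=1}^{i-1}(A_t-1)$, which is strictly less than $A_i-1$ by~\eqref{eq:Ai} (taking $t=0$ gives $A_i\geq \sum_{t'=0}^{i-1}A_{t'}+2A_0+1$). Therefore the sign of $w(\pi_{\mathrm{low}})-w(\pi_{\mathrm{up}})$ is determined solely by $s_i[j]$: it is positive when $s_i[j]=0$ (so $\pi_{\mathrm{up}}$ wins and $e_j\notin T_i$) and negative when $s_i[j]=1$ (so $\pi_{\mathrm{low}}$ wins and $e_j\in T_i$), yielding the claim. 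The only non-routine step is the bookkeeping in the weight tally---matching the rules that define the $y$- and $z$-edge weights level by level---since the hard structural work is already packaged in \Cref{clm:ai-to-vj} and \Cref{clm:ai-to-uj}.
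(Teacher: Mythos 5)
Your proposal is correct and takes essentially the same route as the paper: compare the upper path guaranteed by \Cref{clm:ai-to-vj} against the lower path from \Cref{clm:ai-to-uj} concatenated with $e_j$, cancel the $j$ horizontal edges of weight $M-2A_i$, and let the level-$i$ vertical weight dominate the lower-level contributions via \Cref{eq:Ai}. The only differences are cosmetic: you compute an exact signed difference and explicitly justify the reduction to two candidate paths (the two incoming edges of $v_j$), whereas the paper states the one-sided bound $w(W_1)-w(W_2\circ e_j)>0$ for $s_i[j]=1$ and notes the $s_i[j]=0$ case is symmetric.
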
 
\begin{proof}
	Let $W_1$ be the shortest path from $a_i$ to $v$ in $G\setminus e_j$, and $W_2$ be the shortest path from $a_i$ to $u_j$ in $G$.  Suppose that $s_i[j] = 1$, then  $w(y_{ij}\rightarrow y_{(i-1)j}) = A_i$ and 	$w(z_{ij}\rightarrow z_{(i-1)j}) = 1$. See \Cref{fig:PathAnalysis}. Note that both $W_1$ and $W_2$ contain the same number of horizontal edges, each of the same weight $M - 2A_i$. Thus, we have:
	\begin{equation}
		\begin{split}
			w(W_1) - w(W_2\circ e_j) &\geq A_i -  \mbox{( total weight of all vertical edges of $W_2$)} - 1\\
			&\geq A_i - (1 + A_{i-1} + \ldots A_0) - 1 > 0  \quad \mbox{(by \Cref{eq:Ai})}
		\end{split}
	\end{equation}
	Thus, $W_2\circ e_j$ is the shortest path from $a_i$ to $v_j$, implying the claim. The proof that if  $s_i[j]=0$ then $e_j\not \in T_i$ follows the same line.
\end{proof}

\section{Conclusion}

In this work, we propose a systematic study of VC set systems in minor-free graphs, both directed and undirected. Our work leaves many open problems. First, could we establish a formal relationship between our set system $\wlp_{G,M}(S)$ and the original set system $\lp_{G,M}(S)$ by Li and Parter~\cite{LP19} in the sense that if one has a bounded VC dimension, then the other also does. This will imply that $\lp_{G,M}(S)$ is a VC set system for any $M,S$ and any minor-free graph $G$. The second question is to extend all results here to graphs beyond minor-free, such as graphs of polynomial expansion and nowhere-dense graphs. The third question is, could we design a truly subquadratic space distance oracle with \emph{constant} query time for minor-free digraphs? Our oracle in \Cref{cor1:oracle-digraph} has $O(\log n)$ query time. The fourth question is to obtain a similar metric compression result for digraphs. As far as we know, our \Cref{thm:LW-digraph} is not sufficient for metric compression as we do not have the triangle inequality in digraphs.

\paragraph{Acknowledgement.~} Hung Le is supported by the NSF CAREER Award No. CCF-2237288  and an NSF Grant No. CCF-2121952.
         \bibliographystyle{alpha}
	\bibliography{spanner}
	
	\pagebreak
	\appendix

\end{document}